\newtheorem{theorem}{Theorem}[section]
\newtheorem{proposition}[theorem]{Proposition}
\newtheorem{corollary}[theorem]{Corollary}
\newtheorem{lemma}[theorem]{Lemma}
\newcommand{\abs}[1]{\lvert #1 \rvert}
\newcommand{\tri}{| \! | \! |}
\newcommand{\rd}{{\rm d}}
\newcommand{\be}{\begin{equation}}
\newcommand{\ee}{\end{equation}}
\newcommand{\bey}{\begin{eqnarray}}
\newcommand{\eey}{\end{eqnarray}}
\newcommand{\eps}{\varepsilon}
\newcommand{\bp}{{\bf p}}
\newcommand{\bx}{{\bf x}}
\newcommand{\bn}{{\bf n}}
\newcommand{\ph}{\varphi}
\renewcommand{\a}{\alpha}
\newcommand{\cU}{{\cal U}}
\newcommand{\bR}{{\mathbb R}}
\newcommand{\bN}{{\mathbb N}}
\newcommand{\bZ}{{\mathbb Z}}
\newcommand{\tr}{\mbox{Tr}}
\newcommand{\wt}{\widetilde}
\newcommand{\wh}{\widehat}
\newcommand{\cK}{{\cal K}}
\newcommand{\cL}{{\cal L}}
\newcommand{\donothing}[1]{}
\begin{document}

\title{Derivation of the two-dimensional nonlinear Schr\"odinger equation from many body quantum dynamics}
\author{Kay Kirkpatrick\thanks{Massachusetts Institute of Technology, 77 Mass. Ave., Cambridge, MA 02139, USA},
Benjamin Schlein\thanks{Institute for Mathematics, LMU Munich, Theresienstrasse 39, 80333 Munich, Germany},
and  Gigliola Staffilani${}^*$ }

\maketitle
\abstract{
We derive rigorously, for both $\mathbb{R}^2$ and $[-L,L]^{\times 2}$, the cubic nonlinear Schr\"odinger equation in a suitable scaling limit from the two-dimensional many-body Bose systems with short-scale repulsive pair
interactions. We first prove convergence of the solution of the BBGKY hierarchy, corresponding to the many-body systems, to a solution of the infinite Gross-Pitaevskii hierarchy, corresponding to the cubic NLS; and then we prove uniqueness for the infinite hierarchy, which requires number-theoretical techniques in the periodic case.
}

\section{Introduction}
\setcounter{equation}{0}

Bose-Einstein condensation is an usual state of matter near absolute zero, where the particles (bosons) are so supercooled that they all fall into the ground state and exhibit quantum mechanical behavior macroscopically--as described by the cubic nonlinear Schr\"odinger (or, Gross-Pitaevskii) equation \cite{G, P}. Experimental physicists have used Bose-Einstein condensates to make atom lasers and to convert light to matter and back, suggesting potential applications that include more accurate measurements via interferometry as well as quantum information processing. The fragility of this state of matter makes it all the more important to develop the rigorous theory.

\medskip

We consider an $N$-boson system described on the Hilbert space $L^2_s (\Lambda^{N})$, the subspace of $L^2 (\Lambda^{N})$ consisting of permutation symmetric functions, by the Hamiltonian
\begin{equation}\label{eq:ham} H_N = \sum_{j=1}^N -\Delta_{x_j} + \frac{1}{N} \sum_{i<j}^N N^{2\beta} V (N^{\beta} (x_i -x_j)) \, .\end{equation}
The goal of this paper is to investigate this system when $\Lambda= [-L,L]^{\times 2}$, that is the particles are confined in a finite square ($L$ is fixed), and in this case we impose periodic boundary conditions at the boundary of the square. A easier problem is to  consider the situations $\Lambda= \bR^2$, that is the particles can move on the full two-dimensional space and for completeness we will consider this case as well.
In (\ref{eq:ham}), we assume the interaction potential $V$ to be positive, and sufficiently regular, and we will consider $0< \beta <1$.

\medskip

We are interested in the dynamics governed by the $N$-particle Schr\"odinger equation
\begin{equation}\label{eq:schr}
i\partial_t \psi_{N,t} = H_N \psi_{N,t}
\end{equation}
with an asymptotically factorized initial data $\psi_{N,0} = \psi_N \in L^2_s (\Lambda^N)$. Here asymptotic factorization means factorization of the marginal densities associated with the $N$-particle wave function $\psi_N$ in the limit $N \to \infty$.

\medskip

Recall that, for $k =1, \dots, N$, the $k$-particle density matrix associated with an $N$-particle wave function $\psi_N$ is defined as the non-negative trace class operator $\gamma^{(k)}_{N,t}$ on $L^2 (\Lambda^k)$ with kernel given by \[ \gamma^{(k)}_{N} (\bx_k ; \bx'_k) = \int \rd \bx_{N-k} \, \overline{\psi}_N (\bx_k, \bx_{N-k}) \psi_N (\bx'_k, \bx_{N-k}) \] where we use the notation $\bx_k = (x_1, \dots ,x_k), \bx'_k = (x'_1,\dots, x'_k) \in \Lambda^k$, $\bx_{N-k} = (x_{k+1}, \dots , x_N) \in \Lambda^{N-k}$. In other words, $\gamma^{(k)}_{N}$ is defined by taking the partial trace of the orthogonal projection $\gamma_N = |\psi_N \rangle \langle \psi_N |$ over the last $N-k$ particles. Our main result is the following theorem.
\begin{theorem}\label{thm:main}
Suppose that $\Lambda = \bR^2$ or $\Lambda = [-L, L]^{\times 2}$, for some $L >0$. Assume moreover that $V \in W^{2,\infty} (\Lambda)$, $V \geq 0$ and that $0 < \beta <3/4$. Consider a family $\{\psi_N \}_{N \in \bN}$ such that
\begin{itemize}
\item[$\bullet$] $\{ \psi_N \}_{N \in \bN}$ has bounded energy per particle:
\begin{equation}\label{eq:en0}
\sup_{N \in \bN} \; \frac{1}{N} \langle \psi_N , H_N \psi_N \rangle < \infty  \, ,
\end{equation}
\item[$\bullet$] $\{ \psi_N \}_{N \in \bN}$ exhibits asymptotic factorization: there exists $\ph \in L^2 (\Lambda)$ such that
\begin{equation}\label{eq:asyfact} \tr \, \left| \gamma^{(1)}_N - |\ph \rangle \langle \ph| \right| \to 0 \qquad  \text{as } \quad N \to \infty \, . \end{equation} Here $\gamma^{(1)}_N$ is the one-particle density associated with $\psi_N$.
\end{itemize}
Denote by $\psi_{N,t} = e^{-iH_N t} \psi_N$ the solution to the $N$-particle Schr\"odinger equation (\ref{eq:schr}) with initial data $\psi_N$ and let $\gamma^{(k)}_{N,t}$ be the $k$-particle marginal associated with $\psi_{N,t}$. Then we have, for every $t \in \bR$ and $k \geq 1$,
\begin{equation}\label{eq:claim}
\tr \, \left| \gamma^{(k)}_{N,t} - |\ph_t \rangle \langle \ph_t |^{\otimes k} \right| \to 0 \qquad \text{as } \quad N \to \infty  \end{equation}
where $\ph_t$ is the solution to the cubic nonlinear Schr\"odinger equation \begin{equation}\label{eq:nls} i \partial_t \ph_t = - \Delta \ph_t + b_0 |\ph_t|^2 \ph_t \end{equation} with $b_0 = \int_{\Lambda} \rd x V(x)$ and $\ph_{t=0} = \ph$.
\end{theorem}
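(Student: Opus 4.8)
The plan is to follow the by-now standard BBGKY $\to$ Gross--Pitaevskii strategy of Spohn and Erd\H{o}s--Schlein--Yau, but with the two-dimensional scaling and, in the periodic case, with number-theoretic dispersive input. Write $\gamma^{(k)}_{N,t}$ for the marginals of $\psi_{N,t} = e^{-iH_N t}\psi_N$. Differentiating in $t$ one finds that $\{\gamma^{(k)}_{N,t}\}_{k=1}^N$ solves the BBGKY hierarchy
\begin{equation}
i\partial_t \gamma^{(k)}_{N,t} = \sum_{j=1}^k \bigl[-\Delta_{x_j},\gamma^{(k)}_{N,t}\bigr] + \frac1N \sum_{i<j\le k} \bigl[ N^{2\beta} V(N^\beta(x_i-x_j)), \gamma^{(k)}_{N,t}\bigr] + \frac{N-k}{N}\sum_{j=1}^k \tr_{k+1}\bigl[ N^{2\beta}V(N^\beta(x_j-x_{k+1})),\gamma^{(k+1)}_{N,t}\bigr].
\end{equation}
Since $\psi_N$ asymptotically factorizes and the limit in \eqref{eq:claim} is the rank-one projection $|\ph_t\rangle\langle\ph_t|^{\otimes k}$, it suffices (by the standard argument that weak-$*$ convergence of density matrices to a projection of equal trace upgrades to trace-norm convergence) to prove that $\gamma^{(k)}_{N,t}$ converges weak-$*$, as trace-class operators with suitable Sobolev weights, to $|\ph_t\rangle\langle\ph_t|^{\otimes k}$ for each fixed $t,k$. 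That in turn reduces to: (i) compactness of the sequence in a good topology; (ii) every limit point solves the infinite Gross--Pitaevskii hierarchy $i\partial_t \gamma^{(k)}_t = \sum_{j\le k}[-\Delta_{x_j},\gamma^{(k)}_t] + b_0 \sum_{j\le k} \tr_{k+1}[\delta(x_j-x_{k+1}),\gamma^{(k+1)}_t]$; (iii) uniqueness for this hierarchy within the relevant class; together with the observation that the factorized ansatz $\gamma^{(k)}_t = |\ph_t\rangle\langle\ph_t|^{\otimes k}$ solves the GP hierarchy precisely when $\ph_t$ solves \eqref{eq:nls}.

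For (i) the essential input is an a priori bound propagated by the dynamics. Using $V\ge 0$ and the energy assumption \eqref{eq:en0}, one first shows the one-particle energy controls higher powers of the Laplacian: the key estimate is $\langle \psi_N, (H_N+N)^k \psi_N\rangle \ge C^k N^k \langle \psi_{N}, (1-\Delta_{x_1})\cdots(1-\Delta_{x_k})\psi_N\rangle$ (a Sobolev-trace inequality exploiting positivity of the interaction), which together with conservation of $\langle\psi_{N,t},(H_N+N)^k\psi_{N,t}\rangle$ gives
\begin{equation}
\tr\, (1-\Delta_{x_1})\cdots(1-\Delta_{x_k})\, \gamma^{(k)}_{N,t} \le C^k
\end{equation}
uniformly in $N$ and $t$. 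Here is where the hypothesis $\beta<3/4$ enters: the error terms in the energy estimate involve factors $N^{2\beta}$ against length scales $N^{-\beta}$ and must be absorbed, which forces $\beta<3/4$ in two dimensions. From this bound, equicontinuity in $t$ (read off from the hierarchy) and Banach--Alaoglu give a subsequence converging to some $\{\gamma^{(k)}_{\infty,t}\}$ in the desired weak-$*$ sense.

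For (ii) one passes to the limit in the BBGKY hierarchy term by term. The free part is immediate; the factor $\tfrac1N N^{2\beta}\to 0$ kills the internal interaction term; the crux is that the collision term converges, i.e. $N^{2\beta}V(N^\beta(x_j-x_{k+1}))$ acts like $b_0\,\delta(x_j-x_{k+1})$ when tested against $\gamma^{(k+1)}_{N,t}$. This is controlled by the a priori Sobolev bounds: one writes $N^{2\beta}V(N^\beta\cdot)$ as an approximate identity of total mass $b_0=\int V$, and the $H^1$-type regularity of the marginals (note $H^1(\bR^2)$ barely fails to embed in $L^\infty$, so a logarithmic loss must be handled, e.g. via the bound on two derivatives) suffices to show the difference from $b_0\delta$ vanishes. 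Thus each limit point solves the GP hierarchy with the a priori bound, and the factorized solution built from the NLS \eqref{eq:nls} is one such solution.

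The main obstacle is (iii), uniqueness of the GP hierarchy. On $\bR^2$ I would use the Klainerman--Machedon reformulation via the Duhamel--Born expansion, the combinatorial ``board game'' to reduce the number of terms, and two-dimensional Strichartz estimates for the free Schr\"odinger evolution to bound each term, closing the iteration by the a priori bound. On the torus $[-L,L]^{\times 2}$ the free Strichartz estimates are genuinely harder: one must invoke Bourgain-type periodic Strichartz / $L^4$ estimates, whose proof rests on counting lattice points on circles (a Gauss-type number-theoretic estimate), possibly with an $\e$-loss that still suffices. This dispersive/number-theoretic uniqueness argument is the heart of the periodic case and the part I expect to require the most work. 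Once uniqueness holds, the full sequence (not merely a subsequence) converges to the factorized limit, giving \eqref{eq:claim}. \myendproof
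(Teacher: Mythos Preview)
Your outline captures the overall architecture correctly, but there is a genuine gap in step (i). The energy estimate you state, $\langle \psi_N, (H_N+N)^k \psi_N\rangle \ge C^k N^k \langle \psi_N, (1-\Delta_{x_1})\cdots(1-\Delta_{x_k})\psi_N\rangle$, is right (this is Proposition~\ref{prop:enest}), but to extract from it the a priori bound $\tr\, (1-\Delta_{x_1})\cdots(1-\Delta_{x_k})\, \gamma^{(k)}_{N,t} \le C^k$ you need $\langle \psi_N, (H_N+N)^k \psi_N\rangle \le C^k N^k$ for \emph{all} $k$. The hypothesis \eqref{eq:en0} gives this only for $k=1$; nothing in the assumptions controls higher moments of $H_N$ on the initial data. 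The paper handles this by first replacing $\psi_N$ with a regularized wave function $\wt\psi_N = \chi(\kappa H_N/N)\psi_N / \|\chi(\kappa H_N/N)\psi_N\|$ for which $\langle \wt\psi_N, H_N^k \wt\psi_N\rangle \le C^k N^k$ holds trivially, runs the entire compactness/convergence/uniqueness argument for $\wt\psi_N$, and then removes the cutoff at the very end using $\|\psi_N - \wt\psi_N\| \le C\kappa$ (which does follow from \eqref{eq:en0}) and letting $\kappa\to 0$. Without this approximation step your argument does not start.

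A smaller point: for (iii) the Klainerman--Machedon scheme needs not the trace-type bound $\tr\,(1-\Delta_1)\cdots(1-\Delta_{k+1})\gamma^{(k+1)}_{\infty,t}\le C^{k+1}$ but the Hilbert--Schmidt bound $\|S^{(k,\alpha)} B_{j,k+1}\gamma^{(k+1)}_{\infty,t}\|_{L^2}\le C^k$ on the collision operator. Deducing the latter from the former on the limit points is a separate estimate (Theorem~\ref{thm:KM-bd} in the paper), and is exactly what allows the two-dimensional argument to close where the three-dimensional one did not; you should flag this as a distinct step rather than fold it into the generic ``a priori bound''. Apart from these two issues, your plan matches the paper's.
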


{\it Remark. } The condition $0 < \beta < 3/4$, which is only used in the proof of Proposition \ref{prop:enest} can be relaxed to $0 < \beta < 1$; this follows from the observation that, for $k=2$, (\ref{eq:Hkest}) holds for all $0< \beta <1$ (this is clear from the proof of Proposition \ref{prop:enest}). The inequality (\ref{eq:Hkest}) for $k > 2$, on the other hand, is only needed to obtain the a-priori bounds (\ref{eq:aprik-inf}) which, however, can also be proven using a modification of the energy estimate (\ref{eq:Hkest}) with appropriate space-cutoffs (in the same spirit as in \cite[Proposition 7.1, Theorem 7.3]{ESY3}); to keep our discussion as simple as possible, we will only discuss the case $0 < \beta < 3/4$.

\medskip

The main idea in the proof of this theorem is to study the time-evolution of the marginal densities $\{ \gamma^{(k)}_{N,t} \}_{k=1}^N$ in the limit $N \to \infty$. Starting from the Schr\"odinger equation (\ref{eq:schr}) it is simple to verify that the marginal densities satisfy a hierarchy of $N$ coupled equations, commonly known as the BBGKY hierarchy
\begin{equation}\label{eq:BBGKY1}
\begin{split}
i\partial_t \gamma^{(k)}_{N,t} = \; &\sum_{j=1}^k \left[ -\Delta_j , \gamma^{(k)}_{N,t} \right] + \frac{1}{N} \sum_{i<j}^k \left[ N^{2\beta} V (N^{\beta} (x_i -x_j)) , \gamma^{(k)}_{N,t} \right] \\ &+ \frac{N-k}{N} \sum_{j=1}^k \tr_{k+1} \left[ N^{2\beta} V (N^{\beta} (x_j - x_{k+1})), \gamma^{(k+1)}_{N,t} \right] \,
\end{split}
\end{equation}
where $\tr_{k+1}$ denotes the partial trace over the $(k+1)$-th particle. If we fix $k \geq 1$, and we let $N \to\infty$, we obtain, formally, the infinite hierarchy of equations
\begin{equation}\label{eq:infhie0}
i\partial_t \gamma^{(k)}_{\infty,t} = \sum_{j=1}^k \left[ -\Delta_j , \gamma^{(k)}_{\infty,t} \right] + b_0 \sum_{j=1}^k \tr_{k+1} \left[ \delta (x_j - x_{k+1}), \gamma^{(k+1)}_{\infty,t} \right] \,.
\end{equation}
Here we used the fact that, at least formally, in the limit $N \to \infty$ (for fixed $k\geq 1$) the second term on the r.h.s. of (\ref{eq:BBGKY1}) vanishes (because of the prefactor $1/N$), and, in the third term on the r.h.s. of (\ref{eq:BBGKY1}), $(N-k)/N \to 1$ and $N^{2\beta} V (N^{\beta} (x_i -x_j)) \to b_0 \delta (x_i -x_j)$ (with $b_0 = \int_{\Lambda} V(x) \rd x$). The infinite hierarchy (\ref{eq:infhie0}) can be written in integral form as
\begin{equation}\label{eq:infhie1}
\gamma^{(k)}_{\infty,t} = \cU^{(k)} (t) \gamma^{(k)}_{\infty,0} -i b_0 \sum_{j=1}^k \int_0^t \rd s \, \cU^{(k)} (t-s) B_{j,k+1} \gamma_{\infty,s}^{(k+1)}
\end{equation}
where we defined \begin{equation}\label{eq:free} \cU^{(k)} (t) \gamma^{(k)} = e^{it\sum_{j=1}^k \Delta_j} \gamma^{(k)} e^{-it\sum_{j=1}^k \Delta_j} \end{equation} and $B_{j,k+1}$ denotes the collision operator
\begin{equation}
B_{j,k+1} \gamma^{(k+1)} = \tr_{k+1} \, \left[ \delta (x_j -x_{k+1}) , \gamma^{(k+1)} \right] \,.
\end{equation}
In terms of kernels, $B_{j,k+1}$ (which maps $(k+1)$-particle operators into $k$-particle operators) acts as follows:
\begin{equation}\label{eq:Bk}
\left( B_{j,k+1} \gamma^{(k+1)} \right) (\bx_k; \bx'_k) = \int \rd x_{k+1} \, \left( \delta (x_j -x_{k+1}) - \delta (x'_j -x_{k+1}) \right) \gamma^{(k+1)} (\bx_k,x_{k+1}; \bx'_k , x_{k+1}) \,. \end{equation}
It is simple to check that the factorized densities $\gamma^{(k)}_{\infty,t} = |\ph_t \rangle \langle \ph_t|^{\otimes k}$, for $k \geq 1$, are a solution to the infinite hierarchy (\ref{eq:infhie1}) if $\ph_t$ solves the nonlinear Schr\"odinger equation (\ref{eq:nls}). Therefore, to prove Theorem \ref{thm:main}, we need to identify the limit of $\Gamma_{N,t} = \{ \gamma^{(k)}_{N,t} \}_{k=1}^N$ as the unique solution to (\ref{eq:infhie1}); if we can prove, namely, that every limit point of $\Gamma_{N,t}$ (with respect to an appropriate weak topology) solves (\ref{eq:infhie1}) and that the solution to (\ref{eq:infhie1}) is unique, the claim  (\ref{eq:claim}) follows by a compactness argument.

\medskip

This strategy was introduced in this context by Spohn, who applied it in \cite{Sp} to obtain a rigorous derivation of the nonlinear Hartree equation
\[ i\partial_t \ph_t = -\Delta \ph_t + (V*|\ph_t|^2)\ph_t \] for the time evolution of an initially factorized $N$-particle wave function $\psi_N = \ph^{\otimes N}$ with respect to the mean-field Hamiltonian \[ H_N^{\text{mf}} = \sum_{j=1}^N -\Delta_j + \frac{1}{N} \sum_{i<j}^N V (x_i -x_j) \] with bounded potential $V \in L^{\infty} (\bR^{d})$. In \cite{EY}, Erd\"os and Yau extended the result of Spohn to mean-field models with Coulomb interaction $V(x) = \pm 1/ |x|$ (partial results for the Coulomb potential have also been obtained in \cite{BGM}). In \cite{ES}, this strategy was applied to the study of systems of gravitating particles with relativistic dispersions (modeling boson stars).     
                                                                                         
\medskip

More recently, models with $N$-dependent potentials $V_N$ approaching a delta-function as $N \to\infty$ have been studied. Consider the time evolution $\psi_{N,t} = e^{-iH_{N,\beta} t} \psi_N$ of a factorized initial data $\psi_N = \ph^{\otimes N} \in L^2 (\bR^{3N})$ (only asymptotic factorization in the sense of (\ref{eq:asyfact}) is actually needed) with respect to the Hamiltonian \[ H_{N,\beta} = \sum_{j=1}^N -\Delta_j + \frac{1}{N} \sum_{i<j}^N N^{3\beta} V (N^{\beta} (x_i -x_j)) \] for $0< \beta \leq 1$. It follows from \cite{EESY,ESY2,ESY,ESY3} that, if $\gamma^{(k)}_{N,t}$ denotes the $k$-particle marginal associated with $\psi_{N,t}$, one has, for every $t \in \bR$ and $k \in \bN$, \[ \gamma^{(k)}_{N,t} \to |\ph_t \rangle \langle \ph_t|^{\otimes k} \] as $N \to \infty$, where $\ph_t$ solves the nonlinear Schr\"odinger equation \begin{equation}\label{eq:nls3d} i\partial_t \ph_t = -\Delta \ph_t + \sigma |\ph_t|^2 \ph_t \end{equation} with coupling constant $\sigma = \int V (x) \rd x$ if $0<\beta <1$, and $\sigma = 8 \pi a_0$ if $\beta =1$. Here $a_0$ denotes the scattering length of the potential $V$; the emergence of the scattering length in the case $\beta =1$ is a consequence of the singular correlation structure developed by the solution to the $N$-particle Schr\"odinger equation. In \cite{ABGT,AGT}, one-dimensional models with Hamiltonian \begin{equation}\label{eq:1d} H_{N,\beta} = \sum_{j=1} -\Delta_j + \frac{1}{N} \sum_{i<j} N^{\beta} V (N^{\beta} (x_i -x_j)) \end{equation} acting on $L^2_s (\bR^N)$ have been considered; for such models it was shown that the time evolution of factorized initial data can be described in terms of the one-dimensional cubic nonlinear Schr\"odinger equation with coupling constant in front of the nonlinearity given by $\int V(x) \rd x$; in this case, the correlations developed by the solution of the $N$-particle Schr\"odinger equation do not play an important role.

\medskip

Also in the-two dimensional problem discussed in the present paper, the correlations among the particles do not affect the macroscopic dynamics of the system (this explains why the coupling constant in front of the nonlinearity in (\ref{eq:nls}) is just the integral of the potential). On the contrary, the correlation structure would be very important in the study of two-dimensional systems in the Gross-Pitaevskii scaling limit (where the scattering length of the interaction potential is exponentially small in the number of particles). In \cite{LSY}, Lieb, Seiringer, and Yngvason proved that, in this limit, the ground state energy per particle can be obtained by the minimization of the so called Gross-Pitaevskii energy functional. In \cite{LS}, it was then shown by Lieb and Seiringer that the ground state vector, in the Gross-Pitaevskii limit, exhibits complete Bose Einstein condensation.
In order to prove these two results, it was very important to identify the short scale correlation structure in the ground state wave function (the energy of factorized wave functions, with absolutely no correlations, is too large by a factor of $N$). 
Unfortunately, we are not yet able to study the dynamics of Bose-Einstein condensates in the two-dimensional Gross-Pitaevskii scaling limit; nevertheless, since the infinite hierarchy which is expected to describe the time-evolution of the limiting densities $\{ \gamma^{(k)}_{\infty,t}\}_{k \geq 1}$ is still given by (\ref{eq:infhier}) (with a different coupling constant), our proof of the uniqueness of the solution of the infinite hierarchy (see Theorem \ref{thm:unique-R2} and Theorem \ref{thm:unique-box}) can also be applied to the Gross-Pitaevskii scaling limit (but, of course, in order to use Theorems \ref{thm:unique-R2} and  \ref{thm:unique-box} to prove a statement similar to (\ref{eq:claim}) in the Gross-Pitaevkii scaling, one would need to show strong bounds like (\ref{eq:aprik-inf}) on the limiting densities $\{ \gamma^{(k)}_{\infty,t}\}_{k \geq 1}$).

\medskip

As  mentioned above, the main novelty of the present paper is that we can handle systems defined on a square with periodic boundary conditions. The major difficulty in extending the derivation of the cubic nonlinear Schr\"odinger
equations (\ref{eq:nls3d}) to systems defined on a periodic domain is proving the uniqueness of the infinite hierarchy. The proof of the uniqueness given in \cite{ESY2} is based on a diagrammatic expansion of the solution of the infinite hierarchy in terms of Feynman graphs; the value of every Feynman graph was then expressed in terms of a Fourier integral, and the main part of the analysis was devoted to the control of these integrals.
For systems defined on a periodic domain, these integrals would be replaced by sums, and the analysis would be more involved; it is not yet clear if, in the case of systems defined on finite volumes, this approach can be used to prove the uniqueness of the infinite hierarchy. 

\medskip

Here we follow a different approach, first proposed in \cite{KM} by Klainerman and Machedon for three-dimensional systems on $\mathbb{R}^3$. This approach still employs the expansion introduced in \cite{ESY2}, but it then makes use of a space-time estimate for the free Schr\"odinger evolution of the densities, a simpler approach than the analysis of the contributions to the expansion in \cite{ESY2}. Moreover this approach is more suitable for certain multilinear periodic estimates first introduced by Bourgain in \cite{B1} for the study of well-posedness for the periodic Schr\"odinger equations. 

\medskip

With their argument Klainerman and Machedon obtain uniqueness of the infinite hierarchy in a class of densities satisfying certain space-time estimates. Unfortunately, in the three-dimensional setting, it is not clear if limit points $\Gamma_{\infty,t} = \{ \gamma^{(k)}_{\infty,t}\}_{k\geq 1}$ of the sequence of marginal densities $\Gamma_{N,t} = \{ \gamma^{(k)}_{N,t} \}_{k=1}^N$ satisfy these space-time estimates; for this reason, the method of Klainerman and Machedon cannot be used, in three dimensions, for deriving the nonlinear Schr\"odinger equation (\ref{eq:nls3d}). In two dimensions, however, we are able to prove (see Theorem \ref{thm:KM-bd}) that the limiting densities do indeed satisfy the space-time estimates needed as input for the analysis of Klainerman and Machedon. 

\medskip

Thus, to conclude the proof of (\ref{eq:claim}), we only have to extend their analysis to two-dimensional systems. This is not so difficult in the case $\Lambda = \bR^2$ (see Section \ref{sec:conti}), but it requires more care in the case $\Lambda = [-L,L]^{\times2}$ (see Section \ref{sec:per}). In fact, as mentioned above, for systems defined on a square, we need to use techniques from analytic number theory as in the work of Bourgain \cite{B1}; a similar approach was used also  in \cite{DPST}. The necessary number theory techniques come from \cite{BP}; see also  \cite[$\S$ 23.1]{H} and \cite{IR}.

\medskip

We should immediately remark that if we consider irrational tori, or other general non-square boxes, the argument we present here is not enough to obtain uniqueness. This is because the number of lattice points on a sphere  is precisely approximated by the Gauss lemma, but the number of lattice points on ellipsoids has no such precise approximation (see \cite{BP} and  \cite{H}). We also remark that if one considers a three-dimensional box $\Lambda = [-L,L]^{\times3}$, then the fundamental estimates we prove in Section \ref{sec:per} seem not to be available since too much regularity is lost. Such a loss of regularity  is in line with a  conjecture made by Bourgain in \cite{B1} about certain periodic Strichartz-type estimates. 

\medskip

Finally we note that with our arguments  we could prove Theorem  \ref{thm:main} also  in the one-dimensional case when  $ \Lambda = \bR$, a result already obtained by \cite{BGM}, and  when  $\Lambda = [-L,L]$.

\section{Proof of Theorem \ref{thm:main}}\label{sec:proof}
\setcounter{equation}{0}

The strategy for the proof of Theorem \ref{thm:main} is the same as the one used in \cite{ESY}. For completeness, we repeat here the main steps.

\medskip

We start by introducing an appropriate topology on density matrices. Let $\cK_k \equiv \cK (L^2 (\Lambda^k))$ be the space of compact operators on $L^2 (\Lambda^k)$, equipped with the operator norm topology, and let $\cL^1_k \equiv \cL^1 (L^2 (\Lambda^k))$ be the space of trace class operators on $L^2 (\Lambda^{k})$ equipped with the trace norm. It is well known that $\cL^1_k = \cK_k^*$. Since $\cK_k$ is separable, there exists a sequence $\{J^{(k)}_i\}_{i \ge 1} \in \cK_k$, with $\|J^{(k)}_i \| \leq 1$ for all $i \ge 1$, dense in the unit ball of $\cK_k$. On $\cL^1_k
\equiv \cL^1 (L^2 (\Lambda^k))$, we define the metric $\eta_k$ by
\begin{equation}\label{eq:etak}
\eta_k (\gamma^{(k)}, \bar \gamma^{(k)}) : = \sum_{i=1}^\infty
2^{-i} \left| \tr \; J^{(k)}_i \left( \gamma^{(k)} - \bar
\gamma^{(k)} \right) \right| \, .
\end{equation}
The topology induced by the metric $\eta_k$ and
the weak* topology are equivalent on the unit ball of $\cL^1_k$
(see \cite{Ru}, Theorem 3.16) and hence on any ball of finite radius as well.
In other words, a uniformly bounded sequence $\gamma_N^{(k)} \in \cL^1_k$ converges to
$\gamma^{(k)} \in \cL^1_k$ with respect to the weak* topology, if
and only if $\eta_k (\gamma^{(k)}_N , \gamma^{(k)}) \to 0$ as $N \to
\infty$.

\medskip

For fixed $T > 0$, let $C ([0,T], \cL^1_k)$ be the space of
functions of $t \in [0,T]$ with values in $\cL^1_k$ which are
continuous with respect to the metric $\eta_k$. On $C ([0,T],
\cL^1_k)$ we define the metric
\begin{equation}\label{eq:whetak}
\widehat \eta_k (\gamma^{(k)} (\cdot ) , \bar \gamma^{(k)} (\cdot ))
:= \sup_{t \in [0,T]} \eta_k (\gamma^{(k)} (t) , \bar \gamma^{(k)}
(t))\,.
\end{equation}
Finally, we denote by $\tau_{\text{prod}}$ the topology on the
space $\bigoplus_{k \geq 1} C([0,T], \cL^1_k)$ given by the product
of the topologies generated by the metrics $\wh \eta_k$ on $C([0,T],
\cL^1_k)$.

\begin{proof}[Proof of Theorem \ref{thm:main}]
The proof is divided in five steps.

\medskip

{\it Step 1. Approximation of the initial wave function.} Since the a-priori bounds that we are going to use are based on energy estimates, we need the expectation of $H_N^k$ at time $t=0$ to be of the order $N^k$, for all $k \geq 1$. To this end, we approximate the initial $N$-particle wave functions, by cutting off its high energy part. Let \begin{equation}\label{eq:wtpsi} \wt \psi_N = \frac{\chi ( \kappa H_N / N) \psi_N}{\| \chi (\kappa H_N / N) \psi_N \|} \, . \end{equation}
Then, for all $\kappa >0$ small enough, there exists a constant $C >0$ (of course, depending on $\kappa$) such that
\begin{equation}\label{eq:enk0}
\langle \wt \psi_N , H_N^k \wt \psi_N \rangle \leq C^k \, N^k
\end{equation}
for all $N \in \bN$, and $k \geq 1$. Moreover, using the assumption (\ref{eq:en0}), it is simple to check that
\begin{equation}\label{eq:kappa}
\| \psi_{N,t} - \wt \psi_{N,t} \| = \| \psi_N - \wt \psi_N \| \leq C \kappa
\end{equation}
uniformly in $N$. Finally, one can prove that, for every fixed $\kappa >0$ small enough,
\begin{equation}\label{eq:wtgamma}
\wt \gamma_N^{(k)} \to |\ph \rangle \langle \ph|^{\otimes k} \qquad \text{as } N \to \infty \end{equation} in the trace norm-topology. The proof of (\ref{eq:enk0}), (\ref{eq:kappa}), and (\ref{eq:wtgamma}) can be found in \cite[Proposition 9.1]{ESY3}. Next, in Steps 2-4, we will prove the statement of Theorem \ref{thm:main} for the initial data $\wt \psi_N$ with an arbitrary but fixed $\kappa >0$. Then, using (\ref{eq:kappa}), we will show in Step 5 how to obtain a proof of (\ref{eq:claim}) letting $\kappa \to 0$.

\bigskip

{\it Step 2. Compactness.} We fix $T >0$ and work in the time-interval $t \in [0,T]$ (negative times can be handled similarly). In Theorem \ref{thm:compact} we prove that, for any fixed $\kappa >0$, the sequence $\wt \Gamma_{N,t} = \{ \wt \gamma^{(k)}_{N,t} \}_{k=1}^N  \in \bigoplus_{k \geq 1} C([0,T], \cL_k^1)$ is compact with respect to the product topology $\tau_{\text{prod}}$ generated by the metrics $\wh \eta_k$. Moreover, we prove that for an arbitrary limit point
$ \Gamma_{\infty,t} = \{ \gamma_{\infty,t}^{(k)} \}_{k \geq 1}$ of the sequence $\wt \Gamma_{N,t}$, $ \gamma^{(k)}_{\infty,t}$ is symmetric w.r.t. permutations, $ \gamma^{(k)}_{\infty,t} \geq 0$, and $\tr \; \gamma^{(k)}_{\infty,t} \leq 1$
for every $k \geq 1$.

\medskip

In Theorem \ref{thm:KM-bd} we also prove that an arbitrary limit point $\Gamma_{\infty,t} = \{ \gamma_{\infty,t}^{(k)} \}_{k \geq 1}$ satisfies the a-priori estimates
\begin{equation}\label{eq:aprik}
\left\| S^{(k,\alpha)} B_{j,k+1} \gamma^{(k+1)}_{\infty,t} \right\|_{L^2 (\Lambda^k \times \Lambda^k)} \leq C^k
\end{equation}
for all $k \geq 1$, $j=1, \dots,k$, and for all $t \in [0,T]$. Here we used the notation \begin{equation}\label{eq:Ska}
S^{(k,\alpha)}= \prod_{j=1}^k (1-\Delta_{x_j})^{\alpha/2} (1-\Delta_{x'_j})^{\alpha/2} \end{equation}
and the collision operator $B_{j,k+1}$ is defined in (\ref{eq:Bk}). Note that, with a slight abuse of notation, we identify, in (\ref{eq:aprik}), the operator $S^{(k,\alpha)} B_{j,k+1} \gamma^{(k+1)}_{\infty,t}$ with its kernel $\left( S^{(k,\alpha)} B_{j,k+1} \gamma^{(k+1)}_{\infty,t} \right) (\bx_k ; \bx'_k)$.

\bigskip

{\it Step 3. Convergence.} In Theorem \ref{thm:convergence}, we show that an arbitrary limit point $\Gamma_{\infty,t} = \{ \gamma^{(k)}_{\infty,t} \}_{k \geq 1} \in \bigoplus_{k \geq 1} C([0,T], \cL^1_k )$ of the sequence $\wt \Gamma_{N,t}$ is a solution to the infinite hierarchy of equations
\begin{equation}\label{eq:infhier}
\gamma^{(k)}_{\infty,t} = \cU^{(k)} (t) \gamma^{(k)}_{\infty,0} -i b_0 \sum_{j=1}^k \int_0^t \rd s \, \cU^{(k)} (t-s) B_{j,k+1} \gamma^{(k+1)}_{\infty,s}
\end{equation}
with initial data $\gamma^{(k)}_{\infty,0} = |\ph \rangle \langle \ph|^{\otimes k}$. Here the free evolution $\cU^{(k)} (t)$ is defined in (\ref{eq:free}) and the map $B_{j,k+1}$ in (\ref{eq:Bk}).

\medskip

Note that the infinite hierarchy (\ref{eq:infhier}) has a factorized solution. In fact, it is simple to check that the family $\{ \gamma^{(k)}_t \}_{k \geq 1}$ with $\gamma^{(k)}_t = |\ph_t \rangle \langle \ph_t|^{\otimes k}$ for all $k \geq 1$ is a solution to (\ref{eq:infhier}) (with the correct initial data) provided that $\ph_t$ solves the nonlinear Schr\"odinger equation \[ i\partial \ph_t = -\Delta \ph_t + b_0 |\ph_t|^2 \ph_t \, . \]

\bigskip

{\it Step 4. Uniqueness.} In Theorem \ref{thm:unique-R2} (for the case $\Lambda = \bR^2$) and in Theorem \ref{thm:unique-box} (for the case $\Lambda = [-L, L]^{\times 2}$), we prove that the solution to the infinite hierarchy (\ref{eq:infhier}) is unique in the space of densities satisfying the a-priori estimates (\ref{eq:aprik}). More precisely, we prove that, given a family $\Gamma = \{ \gamma^{(k)} \}_{k \geq 1} \in \bigoplus_{k \geq 1} \cL^1_k$, there exists at most one solution $\Gamma_t = \{ \gamma^{(k)}_t \}_{k\geq 1} \in \bigoplus_{k \geq 1} C([0,T], \cL^1_k)$ of (\ref{eq:infhier}) such that
\[  \left\| S^{(k,\alpha)} B_{j,k+1} \gamma^{(k+1)}_{\infty,t} \right\|_{L^2 (\Lambda^k \times \Lambda^k)} \leq C^k \]  for all $k \geq 1$ and $t \in [0,T]$.

\medskip

{\it Step 5. Conclusion of the proof.} Combining the results of Step 2--Step 4, we immediately obtain that, for every fixed $\kappa >0$, $\wh\eta (\wt \gamma^{(k)}_{N,t} , |\ph_t \rangle \langle \ph_t |^{\otimes k}) \to 0$ as $N \to\infty$ for every fixed $k \geq 1$. In particular this implies that, for every fixed $\kappa >0$, $t \in [0,T]$ and $k \geq 1$, \begin{equation}\label{eq:wtconv} \wt \gamma^{(k)}_{N,t} \to |\ph_t \rangle \langle \ph_t |^{\otimes k}\end{equation} with respect to the weak* topology of $\cL^1_k$. To prove that also $\gamma_{N,t}^{(k)}$, that is the $k$-particle marginal density associated with the original initial wave functions $\psi_N$, converges to the projection $|\ph_t \rangle \langle \ph_t|^{\otimes k}$ as $N \to \infty$, we observe that, for any fixed $\eps >0$ and for every compact operator $J^{(k)} \in \cK_k$, we can find, by (\ref{eq:kappa}), a sufficiently small $\kappa >0$ such that
\[ \left| \tr \;  J^{(k)} \left( \gamma_{N,t}^{(k)} - \wt \gamma_{N,t}^{(k)} \right) \right| \leq \| J^{(k)} \| \, \| \psi_N - \wt \psi_N \| \leq C \kappa \leq \eps/2 \, ,\] uniformly in $N \in \bN$. For this fixed value of $\kappa >0$ we obtain, from (\ref{eq:wtconv}), that
\[ \left| \tr \; J^{(k)} \left( \wt \gamma_{N,t}^{(k)} - |\ph_t \rangle \langle \ph_t|^{\otimes k} \right) \right| \leq  \eps /2 \] for all $N$ large enough. This proves that, for arbitrary $\eps >0$ and $J^{(k)} \in \cK_k$ there exists $N_0 >0$ such that
\[ \left| \tr \;  J^{(k)} \left( \gamma_{N,t}^{(k)} -  |\ph_t \rangle \langle \ph_t|^{\otimes k}  \right) \right| \leq \eps \] for all $N >N_0$. This proves that, for every fixed $t \in [0,T]$, and $k \geq 1$, $\gamma^{(k)}_{N,t} \to |\ph_t \rangle \langle \ph_t|^{\otimes k}$ as $N \to \infty$, with respect to the weak* topology of $\cL^1_k$. Since, however, the limiting density is an orthogonal projection, the convergence in the weak* topology is equivalent to the convergence in the trace norm topology. This concludes the proof of Theorem \ref{thm:main}.
\end{proof}

\section{Energy Estimates and A-Priori Bounds on $\wt\Gamma_{N,t} = \{ \wt\gamma^{(k)}_{N,t} \}_{k=1}^N$}
\setcounter{equation}{0}

\begin{proposition}\label{prop:enest}
Suppose that the Hamiltonian $H_N$ is defined as in (\ref{eq:ham}), with $0< \beta < 3/4$. Then there exists a constant $C >0$, and, for every $k \geq 0$, there exists $N_0 = N_0 (k)$ such that \begin{equation}\label{eq:Hkest} \langle \psi, (H_N +N)^k \psi \rangle \geq C^k N^k \langle \psi, (1-\Delta_{x_1}) \dots (1-\Delta_{x_k}) \, \psi \rangle \end{equation} for all $N \geq N_0$, and all $\psi \in L^2_s (\Lambda^{N})$.
\end{proposition}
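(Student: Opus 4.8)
The plan is to prove \eqref{eq:Hkest} by induction on $k$, following the standard energy-method strategy (as in \cite{ESY,ESY3}) adapted to two dimensions. The base cases $k=0$ and $k=1$ are trivial: for $k=0$ both sides equal $\langle\psi,\psi\rangle$, and for $k=1$ we only need $H_N+N \geq C N(1-\Delta_{x_1})$ as a form inequality on symmetric functions, which follows because $V\geq 0$ implies $H_N \geq \sum_j(-\Delta_{x_j})$, and by symmetry $\langle\psi,\sum_j(-\Delta_{x_j})\psi\rangle = N\langle\psi,(-\Delta_{x_1})\psi\rangle$.

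For the inductive step, the key idea is to isolate one factor of $(1-\Delta_{x_{k+1}})$ out of $(H_N+N)^{k+1}$ while keeping the remaining factors of $(H_N+N)^k$ available for the inductive hypothesis. Concretely, write $H_N + N = h_{k+1} + \widetilde H$, where $h_{k+1} = -\Delta_{x_{k+1}} + 1$ collects the kinetic energy of particle $k+1$ together with all interaction terms $V(N^\beta(x_i-x_{k+1}))$ involving that particle (and a share of the constant $N$), while $\widetilde H$ is the Hamiltonian of the remaining particles plus their interactions. One inserts $(H_N+N)^{k+1} \geq (H_N+N)^{(k-1)/2}(H_N+N)\,h_{k+1}\,(H_N+N)(H_N+N)^{(k-1)/2}$ — more precisely one uses that $(H_N+N)$ and the positive operator $h_{k+1}$ nearly commute up to controllable error terms coming from the commutators $[\,-\Delta_{x_{k+1}}, V(N^\beta(x_j-x_i))\,]$ and $[\,V(N^\beta(x_j-x_{k+1})), -\Delta_{x_j}\,]$ etc. The heart of the matter is to show these commutator error terms are bounded, as forms, by a small constant times $N^{-1}$ (or an appropriate negative power using $0<\beta<3/4$) times $(H_N+N)\,h_{k+1}$, so that they can be absorbed. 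This is exactly where the Sobolev embedding in dimension two and the assumption $V\in W^{2,\infty}$ enter: one estimates expressions like $\int W^2 |\nabla_m\nabla_k\phi|^2$ against the kinetic energies, and the available macros $\A$, $\B$, $\C$ suggest the bookkeeping will be organized around these quantities with $W$ a regularized potential. After the commutators are controlled, one uses $h_{k+1} \geq c\,N^{-1}(H_N^{(k+1)} + N) \geq \ldots$ — or rather the reverse: one sharpens $h_{k+1} \geq c(1-\Delta_{x_{k+1}})$ by discarding the nonnegative interaction terms, then applies the inductive hypothesis to the surviving factor $(H_N+N)^k$ to pull out $(1-\Delta_{x_1})\cdots(1-\Delta_{x_k})$, yielding $C^{k+1}N^{k+1}(1-\Delta_{x_1})\cdots(1-\Delta_{x_{k+1}})$.

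I expect the main obstacle to be the commutator estimates: showing that the cross-terms generated when one moves $h_{k+1}$ past the factors of $H_N+N$ are negligible uniformly in $N$ for all $N\geq N_0(k)$. In two dimensions the delta-scaling $N^{2\beta}V(N^\beta\cdot)$ is critical enough that one must be careful — the potential $N^{2\beta}V(N^\beta\cdot)$ has $L^1$ norm of order one but $L^\infty$ norm of order $N^{2\beta}$, and the constraint $\beta<3/4$ is precisely what makes the relevant error terms, which carry factors like $N^{2\beta}/N = N^{2\beta-1}$ combined with a loss from Sobolev embedding, go to zero. One must track which derivatives land on which factor and use $\|V\|_{W^{2,\infty}}$ to handle the two derivatives that can hit the potential. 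This step is essentially a more delicate two-dimensional version of the three-dimensional argument in \cite[Section 9 and Appendix]{ESY3}; the degradation from $\beta<1$ (which would suffice for $k=2$) to $\beta<3/4$ (needed for general $k$) reflects the accumulation of these errors through the induction, as the remark after Theorem \ref{thm:main} indicates.
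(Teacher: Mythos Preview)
Your overall framework (induction, commutator control of the potential, two-dimensional Sobolev, $V\in W^{2,\infty}$) is the right one, but the specific mechanism you describe differs from the paper's in a way that leaves a genuine gap.

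The paper does a \emph{two-step} induction: assuming the bound for $k\leq n$, it proves it for $k=n+2$. The point is that the inductive hypothesis gives directly
\[
\langle\psi,(H_N+N)^{n+2}\psi\rangle \;\geq\; C^n N^n\,\langle (H_N+N)\psi,\ S_1^2\cdots S_n^2\ (H_N+N)\psi\rangle,
\]
a symmetric sandwiched form. One then decomposes $H_N+N = h_1+h_2$ with $h_1=\sum_{j>n}S_j^2$ (the kinetic energies of \emph{all} remaining particles, not just one) and $h_2$ collecting the first $n$ kinetic terms and \emph{all} interactions. Expanding the two outer factors produces, besides the main term $C^{n+2}N^{n+2}\langle\psi,S_1^2\cdots S_{n+2}^2\psi\rangle$, an additional positive term
\[
C^{n+1}N^{n+1}\,\langle\psi,\ S_1^{4}\,S_2^2\cdots S_{n+1}^2\,\psi\rangle
\]
(coming from the diagonal piece of $h_1\cdot h_1$). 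This auxiliary $S_1^4$ term is essential: the worst commutator error, arising when both interaction indices lie in $\{1,\dots,n+1\}$ so that \emph{two} derivatives $S_1S_2$ hit $V$, is bounded below by $-C(n)N^{n+4\beta-2+\eps}\langle\psi,S_1^4S_2^2\cdots S_{n+1}^2\psi\rangle$, and it can only be absorbed by the auxiliary term, giving precisely $4\beta-2<1$, i.e.\ $\beta<3/4$. Your one-step scheme does not generate this $S_1^4$ positive term, and without it you cannot close the estimate at the stated threshold.

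Two smaller points: your displayed inequality $(H_N+N)^{k+1}\geq (H_N+N)^{(k-1)/2}(H_N+N)h_{k+1}(H_N+N)(H_N+N)^{(k-1)/2}$ inserts an extra operator into the same total power and is not a valid form bound as written; and the macros $\A,\B,\C$ in the preamble are vestigial and play no role in the actual proof. Finally, $\beta<3/4$ is not the result of errors \emph{accumulating} through the induction --- it is needed at every single inductive step $n\to n+2$ with $n\geq 1$, for the reason just described.
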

\begin{proof}
We proceed by a two-step induction over $k \geq 0$. For $k=0$ the statement is trivial and for $k=1$ it follows from the positivity of the potential. Suppose the claim holds for all $k \leq n$. We prove it holds for $k=n+2$. In fact, from the induction assumption, and using the notation $S_i = (1- \Delta_{x_i})^{1/2}$, we find
\begin{equation}
\langle \psi, (H_N +N)^{n+2} \psi \rangle \geq C^n N^n \langle \psi, (H_N +N) S_1^2 \dots S_n^2 (H_N + N) \psi \rangle\,.
\end{equation}
Now, writing $H_N+N = h_1 + h_2$, with \[ h_1 = \sum_{j=k+1}^N S_j^2 \qquad \text{and } \qquad h_2 = \sum_{j=1}^k S_j^2 + \sum_{i<j}^N N^{2\beta-1} V(N^{\beta} (x_i - x_j)) \]
it follows that
\begin{equation}
\begin{split}
\langle \psi, &(H_N +N)^{n+2} \psi \rangle \\ \geq \; &C^n N^n \langle \psi, h_1 S_1^2 \dots S_n^2 h_1 \psi \rangle \\ &+ C^n N^n \left( \langle \psi, h_1 S_1^2 \dots S_n^2 h_2 \psi \rangle + \langle \psi, h_2 S_1^2 \dots S_n^2 h_1 \psi \rangle \right)   \\
\geq \; &C^n N^n (N-n) (N-n-1) \langle \psi, S_1^2 \dots S_{n+2}^2 \psi \rangle + C^n N^n (N-n) \langle \psi, S_1^4 S_2^2 \dots S_{n+1}^2 \psi \rangle \\
&+ C^n N^n \frac{(N-n)}{N} N^{2\beta} \sum_{i<j}^N \left( \langle \psi, S_1^2 \dots S_{n+1}^2 V (N^{\beta} (x_i -x_j)) \psi \rangle + \text{complex conjugate} \right)
\end{split}
\end{equation}
Because of the permutation symmetry of $\psi$, we obtain
\begin{equation}\label{eq:bd}
\begin{split}
\langle \psi, &(H_N +N)^{n+2} \psi \rangle \\
\geq \; &C^{n+2} N^{n+2} \langle \psi, S_1^2 \dots S_{n+2}^2 \psi \rangle + C^{n+1} N^{n+1}  \langle \psi, S_1^4 S_2^2 \dots S_{n+1}^2 \psi \rangle \\
&+ C^n N^{n-1} N^{2\beta} (N-n)^2 (N-n-1) \left( \langle \psi, S_1^2 \dots S_{n+1}^2 V (N^{\beta} (x_{n+2} -x_{n+3})) \psi \rangle + \text{c.c.} \right) \\
&+ C^n N^{n-1} N^{2\beta} (N-n)^2 (n+1) \left(  \langle \psi, S_1^2 \dots S_{n+1}^2 V (N^{\beta} (x_{1} -x_{n+2})) \psi \rangle + \text{c.c.} \right) \\
&+ C^n N^{n-1} N^{2\beta} (N-n)(n+1)n \left(  \langle \psi, S_1^2 \dots S_{n+1}^2 V (N^{\beta} (x_{1} -x_{2})) \psi \rangle + \text{c.c.} \right)
\end{split}
\end{equation}
The last three terms are the errors we need to control. First of all, we remark that the first error term is positive, and thus can be neglected (because we assumed $V \geq 0$). In fact, since $V(N^{\beta} (x_{n+2} - x_{n+3}))$ commutes with all derivatives $S_1,\dots ,S_n$, we have
\[ \langle \psi, S_1^2 \dots S_{n+1}^2 V (N^{\beta} (x_{n+2} -x_{n+3})) \psi \rangle = \int \rd \bx \; V(N^{\beta} (x_{n+2} - x_{n+3}) |(S_1 \dots S_{n+1} \psi) (\bx)|^2 \geq 0 \, . \]
As for the second error term on the r.h.s. of (\ref{eq:bd}), we bound it from below by
\begin{equation}
\begin{split}
C^n N^{n-1} N^{2\beta} &(N-n)^2 (n+1) \left(  \langle \psi, S_1^2 \dots S_{n+1}^2 V (N^{\beta} (x_{1} -x_{n+2})) \psi \rangle + \text{c.c.} \right) \\ \geq \; &- C(n) N^{n+1} N^{2\beta} \left| \langle \psi, S_{n+1} \dots S_2 S_1 \, [ S_1, V(N^{\beta} (x_1 -x_{n+2}) ] \, S_2 \dots S_{n+1} \psi \rangle \right|  \\  \geq \; &- C(n) N^{n+1} N^{3\beta} \left| \langle \psi, S_{n+1} \dots S_2 S_1 \, (\nabla V)(N^{\beta} (x_1 -x_{n+2}) \, S_2 \dots S_{n+1} \psi \rangle \right| \\ \geq \; &- C (n) N^{n+1} N^{3\beta} \left( \langle \psi, S_{n+1} \dots S_2 S_1 \, \left| (\nabla V)(N^{\beta} (x_1 -x_{n+2})) \right| S_1 S_2 \dots S_{n+1} \psi \rangle \right. \\ &\hspace{4cm} \left. + \langle  \psi, S_{n+1} \dots S_2 \, \left| (\nabla V)(N^{\beta} (x_1 -x_{n+2})) \right| S_2 \dots S_{n+1} \psi \rangle \right)
\end{split}
\end{equation}
for a constant $C(n)$ independent of $N$. Using that
\[ \langle \psi, V(x) \psi \rangle \leq \; C \| V \|_{p} \, \langle \psi, (1-\Delta) \psi \rangle \] for every $p >1$ (see Lemma \ref{lm:sob}) we find
\begin{equation}\label{eq:err0}
\begin{split}
C^n N^{n-1} N^{2\beta} &(N-n)^2 (n+1) \left(  \langle \psi, S_1^2 \dots S_{n+1}^2 V (N^{\beta} (x_{1} -x_{n+2})) \psi \rangle + \text{c.c.} \right)
\\ \geq \; &-C (n) N^{n+1} N^{3 \beta} N^{-2\beta + \eps} \langle \psi, S_1^2 \dots S_{n+2}^2 \psi \rangle = - C(n) N^{n+1+\beta+\eps} \langle \psi, S_1^2 \dots S_{n+2}^2 \psi \rangle
\end{split}
\end{equation}
for arbitrary $\eps >0$. The last term on the r.h.s. of (\ref{eq:bd}), on the other hand, can be controlled by
\begin{equation}\label{eq:bd2}
\begin{split}
C^n N^{n-1} N^{2\beta} &(N-n)(n+1)n \left(  \langle \psi, S_1^2 \dots S_{n+1}^2 V (N^{\beta} (x_{1} -x_{2})) \psi \rangle + \text{c.c.} \right) \\
\geq \; &- C(n) N^{n} N^{2\beta} \left| \langle \psi, S_{n+1} \dots S_2 S_1 \, [ S_1 S_2, V(N^{\beta} (x_1 -x_{n+2})) ] \, S_2 \dots S_{n+1} \psi \rangle \right|  \\  \geq \; &- C(n) N^{n} N^{3\beta} \left| \langle \psi, S_{n+1} \dots S_2 S^2_1 \, (\nabla V)(N^{\beta} (x_1 -x_{2})) \, S_3 \dots S_{n+1} \psi \rangle \right|  \\ &-C(n) N^n N^{3\beta} \left| \langle \psi, S_{n+1} \dots S_2 S_1 \, (\nabla V)(N^{\beta} (x_1 -x_{2})) \, S_2 \dots S_{n+1} \psi \rangle \right|
\end{split}
\end{equation}
The second term is bounded by
\begin{equation}\label{eq:err1}
\begin{split}
-C(n) &N^n N^{3\beta} \left| \langle \psi, S_{n+1} \dots S_2 S_1 \, (\nabla V)(N^{\beta} (x_1 -x_{2})) \, S_2 \dots S_{n+1} \psi \rangle \right| \\ \geq \; &-C (n) N^n N^{3\beta} \left(  \alpha \, \langle \psi, S_{n+1} \dots S_1 \, \left| (\nabla V)(N^{\beta} (x_1 -x_{2})) \right| \, S_1 \dots S_{n+1} \psi \rangle \right. \\ &\hspace{3cm} \left. + \alpha^{-1} \langle \psi, S_{n+1} \dots S_2 \, \left| (\nabla V)(N^{\beta} (x_1 -x_{2})) \right| \, S_2 \dots S_{n+1} \psi \rangle \right) \\ \geq \; &-C(n) N^n N^{3\beta} \left(\alpha \langle \psi, S^2_1 \dots S^2_{n+1} \psi \rangle + \alpha^{-1} N^{-2\beta+\eps} \langle \psi, S_1^2 \dots S_{n+1}^2 \psi \rangle \right) \\ \geq \; &-C(n) N^n N^{2\beta + \eps} \langle \psi, S_1^2 \dots S_{n+1}^2 \psi \rangle
\end{split}
\end{equation}
where, in the last inequality we optimized the choice of $\a$, by putting $\a = N^{-\beta}$.
The first term on the r.h.s. of (\ref{eq:bd2}), on the other hand, is controlled by
\begin{equation}\label{eq:err2}
\begin{split}
- C(n) &N^{n} N^{3\beta} \left| \langle \psi, S_{n+1} \dots S_2 S^2_1 \, (\nabla V)(N^{\beta} (x_1 -x_{2})) \, S_3 \dots S_{n+1} \psi \rangle \right|
\\ \geq \; & - C (n) N^{n} N^{3\beta} \left( \alpha \, \langle \psi, S_{n+1} \dots S_2 S_1^2 \left|  (\nabla V)(N^{\beta} (x_1 -x_{2})) \right| S_1^2 S_2 \dots S_{n+1} \psi \rangle \right. \\ &\hspace{3cm} \left.+ \alpha^{-1} \langle \psi, S_{n+1} \dots S_3 \left|  (\nabla V)(N^{\beta} (x_1 -x_{2})) \right| S_3 \dots S_{n+1} \psi \rangle \right) \\ \geq \; &-C (n) N^n N^{3\beta} \left( \alpha \langle \psi, S_1^4 S^2_2 \dots S^2_{n+1} \psi \rangle + \alpha^{-1} N^{-2\beta} \langle \psi, S_1^2 \dots S_{n+1}^2 \psi \rangle \right) \\ \geq \; &- C(n) N^{n+4\beta-2+\eps} \langle \psi, S_1^4 S^2_2 \dots S^2_{n+1} \psi \rangle - C(n) N^{n+2-\eps} \langle \psi, S_1^2 \dots S^2_{n+1} \psi \rangle
\end{split}
\end{equation}
where we chose $\alpha = N^{-2 + \beta +\eps}$, for some $\eps >0$. Inserting (\ref{eq:err1}) and (\ref{eq:err2}) on the r.h.s. of (\ref{eq:bd2}), we find
\begin{equation}\label{eq:err4}
\begin{split}
C^n N^{n-1} N^{2\beta} &(N-n)(n+1)n \left(  \langle \psi, S_1^2 \dots S_{n+1}^2 V (N^{\beta} (x_{1} -x_{2})) \psi \rangle + \text{c.c.} \right) \\
\geq \; &-C(n) N^n N^{2\beta + \eps} \langle \psi, S_1^2 \dots S_{n+1}^2 \psi \rangle
\\ &- C(n) N^{n+4\beta-2+\eps} \langle \psi, S_1^4 S^2_2 \dots S^2_{n+1} \psi \rangle - C(n) N^{n+2-\eps} \langle \psi, S_1^2 \dots S^2_{n+1} \psi \rangle
\end{split}
\end{equation}
Inserting (\ref{eq:err0}) and (\ref{eq:err4}) on the r.h.s. of (\ref{eq:bd}), we see that, for $\beta < 3/4$ (choosing $\eps >0$ small enough) all error terms can  be controlled by the two positive contributions, and the proposition follows.
\end{proof}

{F}rom these energy estimates, we immediately obtain strong a-priori bounds on the marginal densities $\wt \gamma^{(k)}_{N,t}$.

\begin{corollary}\label{cor:apriNk}
Let $\wt \psi_{N,t} = e^{-iH_N t} \wt \psi_N$ be the solution of the $N$-particle Schr\"odinger equation with initial wave function $\wt \psi_N$, as defined in (\ref{eq:wtpsi}) (for a fixed $\kappa >0$), and let $\wt \gamma^{(k)}_{N,t}$ denote its $k$-particle marginal. Then there exists a constant $C >0$ (depending on $\kappa$) and, for every $k \geq 1$, an integer $N_0 (k)$ such that
\begin{equation}\label{eq:apriNk}
\tr \; (1-\Delta_1) \dots (1-\Delta_k) \; \wt\gamma^{(k)}_{N,t} \leq C^k
\end{equation}
for all $N > N_0 (k)$.
\end{corollary}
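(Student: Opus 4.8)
The plan is to transfer the $N$-particle energy bound of Proposition~\ref{prop:enest} to the marginals by pairing it against the definition of $\wt\gamma^{(k)}_{N,t}$. First I would recall that, since $\wt\psi_{N,t} = e^{-iH_N t}\wt\psi_N$ and $H_N$ commutes with its own powers, the a-priori energy bound (\ref{eq:enk0}) is preserved along the flow, so $\langle \wt\psi_{N,t}, H_N^k \wt\psi_{N,t}\rangle \leq C^k N^k$ for all $t$ (with $C$ depending on $\kappa$). Combining this with the obvious bound $\langle \wt\psi_{N,t}, (H_N+N)^k \wt\psi_{N,t}\rangle \leq C^k N^k$ (enlarging $C$), Proposition~\ref{prop:enest} gives, for $N \geq N_0(k)$,
\[
\langle \wt\psi_{N,t}, (1-\Delta_{x_1})\cdots(1-\Delta_{x_k})\,\wt\psi_{N,t}\rangle \leq C^k \, .
\]

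Next I would rewrite the left-hand side as a trace against the full density $\wt\gamma_{N,t} = |\wt\psi_{N,t}\rangle\langle\wt\psi_{N,t}|$, namely $\tr\,(1-\Delta_{x_1})\cdots(1-\Delta_{x_k})\,\wt\gamma_{N,t}$, and then observe that the operator $(1-\Delta_{x_1})\cdots(1-\Delta_{x_k})$ acts only on the first $k$ variables. Taking the partial trace over the remaining $N-k$ variables therefore commutes with this operator, so
\[
\tr\,(1-\Delta_{x_1})\cdots(1-\Delta_{x_k})\,\wt\gamma_{N,t} = \tr\,(1-\Delta_1)\cdots(1-\Delta_k)\,\wt\gamma^{(k)}_{N,t} \, ,
\]
which is exactly the quantity in (\ref{eq:apriNk}). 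To be fully rigorous one should note that both sides are a priori non-negative (the integrand $|(S_1\cdots S_k \wt\psi_{N,t})(\bx)|^2$ after integrating out $x_{k+1},\dots,x_N$ is manifestly a positive kernel evaluated on the diagonal), so the identification of the partial trace is legitimate and no cancellation issues arise.

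The only genuine point requiring care is the first step: verifying that the energy bound (\ref{eq:enk0}), which is stated at $t=0$, indeed propagates to all $t$. This is immediate because $[e^{-iH_N t}, H_N^k] = 0$, hence $\langle \wt\psi_{N,t}, H_N^k \wt\psi_{N,t}\rangle = \langle \wt\psi_N, H_N^k \wt\psi_N\rangle$ is constant in time; since Proposition~\ref{prop:enest} is an operator inequality on all of $L^2_s(\Lambda^N)$ it may be applied to $\wt\psi_{N,t}$ for each fixed $t$. So there is no real obstacle here — the corollary is essentially a bookkeeping consequence of the proposition together with conservation of $H_N$ under the dynamics, and the constant $C$ depends on $\kappa$ only through the constant in (\ref{eq:enk0}).
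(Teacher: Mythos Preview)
Your proposal is correct and follows essentially the same approach as the paper: identify $\tr\,(1-\Delta_1)\cdots(1-\Delta_k)\,\wt\gamma^{(k)}_{N,t}$ with $\langle \wt\psi_{N,t}, S_1^2\cdots S_k^2\,\wt\psi_{N,t}\rangle$, apply Proposition~\ref{prop:enest}, and use conservation of $\langle H_N^k\rangle$ under the flow together with (\ref{eq:enk0}). Your remark about passing from $H_N^k$ to $(H_N+N)^k$ by enlarging $C$ is in fact a bit more careful than the paper, which tacitly absorbs this step.
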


\begin{proof}
We have
\begin{equation}
\begin{split}
\tr\; (1-\Delta_1) \dots (1-\Delta_k) \wt\gamma^{(k)}_{N,t} = \; &\langle \wt \psi_{N,t} , S_1^2 \dots S_k^2 \; \wt\psi_{N,t} \rangle \\  \leq \; & \frac{1}{C^k N^k} \langle \wt \psi_{N,t}, H_N^k \wt \psi_{N,t} \rangle  = \frac{1}{C^k N^k} \langle \wt \psi_N, H_N^k \wt\psi_N \rangle \leq C^k
\end{split}
\end{equation}
where in the first inequality we used Proposition \ref{prop:enest}, and in the last inequality we used (\ref{eq:enk0}).
\end{proof}

\section{Compactness of the sequence $\wt \Gamma_{N,t} = \{ \wt \gamma_{N,t}^{(k)} \}_{k =1}^N$}
\setcounter{equation}{0}

\begin{theorem}\label{thm:compact}
Suppose that $\wt\psi_N$ is defined as in (\ref{eq:wtpsi}), let $\wt\psi_{N,t} = e^{-iH_Nt} \wt\psi_N$ and denote by $\wt \gamma^{(k)}_{N,t}$ the $k$-particle marginal density associated with $\wt \psi_{N,t}$. Then the sequence of marginal densities $\wt \Gamma_{N,t} = \{ \wt \gamma^{(k)}_{N,t} \}_{k=1}^N \in \bigoplus_{k \geq 1} C([0,T], \cL_k^1)$ is compact with respect to the product topology $\tau_{\text{prod}}$ generated by the metrics $\wh \eta_k$ (defined in Section \ref{sec:proof}). For any limit point $ \Gamma_{\infty,t} = \{ \gamma_{\infty,t}^{(k)}
\}_{k \geq 1}$, $ \gamma^{(k)}_{\infty,t}$ is symmetric w.r.t.
permutations, $ \gamma^{(k)}_{\infty,t} \geq 0$, and
\begin{equation}\label{eq:bou} \tr \; \gamma^{(k)}_{\infty,t} \leq 1
\,\end{equation} for every $k \geq 1$.
\end{theorem}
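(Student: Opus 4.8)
The plan is to establish compactness via a standard Arzel\`a--Ascoli-type argument in the metric $\wh\eta_k$, reducing the problem to (i) uniform boundedness in $\cL^1_k$ and (ii) equicontinuity in $t$ with respect to $\eta_k$, and then separately to verify the qualitative properties (symmetry, positivity, $\tr\,\gamma^{(k)}_{\infty,t}\le 1$) pass to the limit. For the uniform bound, note that each $\wt\gamma^{(k)}_{N,t}$ is a density matrix, so $\tr\,\wt\gamma^{(k)}_{N,t}=1$ for all $t$ and all $N\ge k$; this gives uniform boundedness in $\cL^1_k$ with a common ball of radius $1$, on which (as recalled in Section \ref{sec:proof}) the $\eta_k$-topology coincides with the weak* topology. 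Since bounded sets in the dual of a separable space are weak*-sequentially compact (and metrizable in $\eta_k$), for fixed $t$ one extracts convergent subsequences; the point is to upgrade this to convergence in $C([0,T],\cL^1_k)$, which is where equicontinuity enters.

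First I would prove the equicontinuity estimate. For a fixed $J^{(k)}_i$ in the dense sequence and $0\le s\le t\le T$, I would use the BBGKY hierarchy (\ref{eq:BBGKY1}) to write $\tr\, J^{(k)}_i(\wt\gamma^{(k)}_{N,t}-\wt\gamma^{(k)}_{N,s})$ as a time integral of three terms: the free-evolution commutator $\sum_j[-\Delta_j,\wt\gamma^{(k)}_{N,r}]$, the small-factor interaction term with prefactor $1/N$, and the collision term involving $\wt\gamma^{(k+1)}_{N,r}$. The free part is controlled by moving the Laplacians onto $J^{(k)}_i$ (so one should really take the $J^{(k)}_i$ smooth, or absorb $\Delta_j J^{(k)}_i$ into the bound using the a-priori estimate (\ref{eq:apriNk}) from Corollary \ref{cor:apriNk}, which gives $\tr(1-\Delta_1)\cdots(1-\Delta_k)\wt\gamma^{(k)}_{N,r}\le C^k$). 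The interaction term with the $1/N$ prefactor is bounded by $C(k)\,N^{-1}N^{2\beta}\,\|J^{(k)}_i\|\,\tr\,|N^{2\beta}V(N^\beta(x_i-x_j))\,\wt\gamma^{(k)}_{N,r}|$, which after a Sobolev bound on the potential (Lemma \ref{lm:sob}) and (\ref{eq:apriNk}) is $O(N^{2\beta-1})\to 0$ since $\beta<3/4<1$. The collision term is bounded using $\tr\,|B_{j,k+1}\wt\gamma^{(k+1)}_{N,r}|\le C\,\tr(1-\Delta_j)(1-\Delta_{k+1})\wt\gamma^{(k+1)}_{N,r}\le C^{k+1}$ again from Corollary \ref{cor:apriNk} together with the trace estimate $|\tr\,\delta(x_j-x_{k+1})\gamma^{(k+1)}|\lesssim \tr(1-\Delta_j)(1-\Delta_{k+1})|\gamma^{(k+1)}|$ which is the operator-form of the Sobolev trace theorem. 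Hence for each $i$, $|\tr\,J^{(k)}_i(\wt\gamma^{(k)}_{N,t}-\wt\gamma^{(k)}_{N,s})|\le C(k,i)\,|t-s|$ uniformly in $N\ge N_0(k+1)$, and summing the geometric series $\sum_i 2^{-i}$ in the definition (\ref{eq:etak}) of $\eta_k$ gives, after truncating the tail, equicontinuity of $t\mapsto\wt\gamma^{(k)}_{N,t}$ in $\eta_k$, uniformly in $N$.

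With uniform boundedness in $\cL^1_k$ and $\eta_k$-equicontinuity in hand, Arzel\`a--Ascoli (in the form valid for the compact metric space $([0,T]$, with target the compact metric space given by the unit ball of $\cL^1_k$ in the $\eta_k$ metric) yields, for each fixed $k$, a subsequence along which $\wt\gamma^{(k)}_{N,\cdot}$ converges in $C([0,T],\cL^1_k)$; a diagonal argument over $k$ produces a single subsequence along which $\wt\Gamma_{N,\cdot}$ converges in $\tau_{\mathrm{prod}}$, proving compactness. For the properties of a limit point $\gamma^{(k)}_{\infty,t}$: permutation symmetry and positivity are preserved because they are closed conditions under weak* convergence (for positivity, $\tr\,J^*J\,\gamma^{(k)}_{\infty,t}=\lim\tr\,J^*J\,\wt\gamma^{(k)}_{N,t}\ge 0$ for every $J$, using density of the $J^{(k)}_i$); and $\tr\,\gamma^{(k)}_{\infty,t}\le 1$ follows from weak* lower semicontinuity of the trace norm applied to the uniform bound $\tr\,\wt\gamma^{(k)}_{N,t}=1$. (One only gets $\le$, not $=$, since mass can escape in the weak* limit, especially on $\bR^2$; this is exactly (\ref{eq:bou}).)

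The main obstacle is the equicontinuity estimate for the collision and free terms: one must be careful that the test operators $J^{(k)}_i$ are merely compact, not smooth, so the Laplacians in the free-evolution commutator cannot simply be moved onto $J^{(k)}_i$. The clean way around this is to exploit the a-priori regularity bound (\ref{eq:apriNk}): write $[-\Delta_j,\wt\gamma^{(k)}_{N,r}]=(1-\Delta_j)^{1/2}\big[(1-\Delta_j)^{1/2}\wt\gamma^{(k)}_{N,r}(1-\Delta_j)^{-1/2}-\cdots\big](1-\Delta_j)^{1/2}$ or more simply bound $|\tr\,J^{(k)}_i[-\Delta_j,\wt\gamma^{(k)}_{N,r}]|$ by $\|J^{(k)}_i\|\cdot\tr\,|[-\Delta_j,\wt\gamma^{(k)}_{N,r}]|$ and estimate the latter using Corollary \ref{cor:apriNk}; alternatively, since the $J^{(k)}_i$ can be chosen from a dense countable family, one replaces each by a nearby smooth finite-rank operator up to $\eps$, absorbing the error into the geometric tail. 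Making this reduction precise, and keeping track that all constants $C(k,i)$ are uniform in $N$ for $N\ge N_0(k+1)$, is the technical heart of the argument; everything else is soft functional analysis.
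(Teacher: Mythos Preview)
Your strategy (Arzel\`a--Ascoli via BBGKY-derived equicontinuity, diagonal extraction over $k$, and passage of symmetry/positivity/trace through weak* limits) is the paper's strategy; the paper likewise defers the last three properties to \cite[Theorem~6.1]{ESY}. The real content is the equicontinuity bound, and here your sketch is vaguer than it needs to be precisely at the point you flag as the obstacle.

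Two of your proposed routes do not work as written. Bounding $|\tr\,J^{(k)}_i[-\Delta_j,\wt\gamma^{(k)}_{N,r}]|$ by $\|J^{(k)}_i\|\cdot\tr\,|[-\Delta_j,\wt\gamma^{(k)}_{N,r}]|$ requires control of $\tr\,\Delta_j^2\,\wt\gamma^{(k)}_{N,r}$, i.e.\ two derivatives in a single slot, and Corollary~\ref{cor:apriNk} only gives one derivative per particle. For the collision term, note first that the BBGKY contribution involves $N^{2\beta}V(N^\beta(x_j-x_{k+1}))$, not the limiting $\delta$-operator $B_{j,k+1}$ you wrote; and factoring out $\|J^{(k)}_i\|$ leaves $N^{2\beta}\,\tr|V(N^\beta\cdot)\,\wt\gamma^{(k+1)}_{N,r}|$, which carries an uncontrolled $N^{2\beta}$ if one merely uses $\|V_N\|_\infty$.

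The paper implements your ``option~3'' concretely: it proves (\ref{eq:equi02}) only for $J^{(k)}$ in the dense class with finite norm $\tri J^{(k)}\tri$ (see (\ref{eq:tri})), i.e.\ observables whose Fourier kernels absorb one $S_j$ on each side. For such $J^{(k)}$ the free term is immediate, $|\tr\,J^{(k)}[-\Delta_j,\wt\gamma^{(k)}_{N,s}]|\le 2\,\tri J^{(k)}\tri$, with no regularity on $\gamma$ beyond $\tr\gamma\le 1$. The collision term is handled by the conjugation (see (\ref{eq:model}))
\[
\tr\,J^{(k)}V_N\,\gamma \;=\; \tr\Big(S_j^{-1}S_{k+1}^{-1}J^{(k)}S_jS_{k+1}\Big)\Big(S_j^{-1}S_{k+1}^{-1}V_N S_j^{-1}S_{k+1}^{-1}\Big)\Big(S_jS_{k+1}\,\gamma\,S_{k+1}S_j\Big),
\]
bounding the three factors by $\tri J^{(k)}\tri$, by $C\|V_N\|_{L^1}=C\|V\|_{L^1}$ via Lemma~\ref{lm:sob} (this is where the $N^{2\beta}$ disappears), and by (\ref{eq:apriNk}) respectively. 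Once you fix this device, the rest of your outline is correct.
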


\begin{proof}
By a Cantor diagonal argument it is enough to prove the compactness of
$\wt\gamma_{N,t}^{(k)}$ for fixed $k \geq 1$ with respect to the metric $\wh \eta_k$. To
this end, we show the equicontinuity of $\gamma_{N,t}^{(k)}$ with respect to the metric $\eta_k$.
It is enough to prove (see Lemma 6.2 in \cite{ESY}) that, for every observable $J^{(k)}$ from a dense subset of $\cK_k$ and for every $\eps >0$, there exists a $\delta = \delta (J^{(k)}, \eps)> 0$ such that
\begin{equation}\label{eq:equi02}
\sup_{N\ge 1}\Big| \tr \;  J^{(k)} \left( \wt \gamma_{N,t}^{(k)} -
\wt \gamma_{N,s}^{(k)} \right) \Big| \leq \eps
\end{equation}
for all $t,s \in [0,T]$ with $|t -s| \leq \delta$. We are going to prove (\ref{eq:equi02}) for all $J^{(k)} \in \cK_k$ such that $\tri J^{(k)} \tri < \infty$, where we defined the norm
\begin{equation}\label{eq:tri}\begin{split}
\tri J^{(k)} \tri := \sup_{\bp'_k} \int \rd \bp_k \; \prod_{j=1}^k (1+p_j^2)^{1/2} (1+(p'_j)^2)^{1/2} \; \left( \left| \widehat{J}^{(k)} (\bp_k;\bp'_k) \right|  + \left| \widehat{J}^{(k)} (\bp'_k; \bp_k)\right| \right).
\end{split}\end{equation}
Here $\widehat{J}^{(k)} (\bp_k; \bp'_k)$ denotes the kernel of the compact operator $J^{(k)}$ in momentum space. It is simple to check that the subset of $\cK_k$ consisting of all $J^{(k)}$ with $\tri J^{(k)} \tri < \infty$ is dense.

\medskip

Fix now $\eps>0$ and $J^{(k)} \in \cK_k$ with $\tri J^{(k)} \tri < \infty$. Then, rewriting the BBGKY hierarchy (\ref{eq:BBGKY1}) in integral form and multiplying it with $J^{(k)}$ we obtain that, for any $r \leq t$,
\begin{equation}\label{eq:equi-1}
\begin{split}
\Big| \tr \, J^{(k)} \left(  \wt\gamma_{N,t}^{(k)} -
\wt\gamma_{N,r}^{(k)} \right) \Big| \leq \; &\sum_{j=1}^k \int_r^t \rd s \,
\Big| \tr \; J^{(k)} [ -\Delta_j , \wt\gamma_{N,s}^{(k)}] \Big| \\ &+ N^{2\beta-1} \sum_{i<j}^k
\int_r^t \rd s \, \Big| \tr \; J^{(k)} [ V (N^{\beta} (x_i -x_j)) , \wt\gamma^{(k)}_{N,s} ] \Big| \\ &+ N^{2\beta} \left(1-\frac{k}{N}\right) \sum_{j=1}^k \int_r^t \rd s \, \Big|
\tr \; J^{(k)}\left[ V (N^{\beta}(x_j - x_{k+1})), \wt\gamma^{(k+1)}_{N,s}
\right]\Big|.
\end{split}
\end{equation}
It is simple to prove that
\[ \Big| \tr \; J^{(k)} [ -\Delta_j , \wt\gamma_{N,s}^{(k)}] \Big| \leq 2 \tri J^{(k)} \tri \, \tr \, \wt \gamma_{N,s}^{(k)} \leq 2 \tri J^{(k)} \tri  \, .\]
To bound the last term on the r.h.s. of (\ref{eq:equi-1}), we observe that, using the notation $S_j = (1-\Delta_j)^{1/2}$, we have
\begin{equation}\label{eq:model}
\begin{split}
N^{2\beta} \Big| & \tr \; J^{(k)} \left[ V (N^{\beta}(x_j - x_{k+1})), \wt\gamma^{(k+1)}_{N,s}
\right]\Big| \\ = \; & N^{2\beta} \Big| \tr \; J^{(k)} V (N^{\beta}(x_j - x_{k+1})) \wt\gamma^{(k+1)}_{N,s} - \tr \; J^{(k)} \wt\gamma^{(k+1)}_{N,s}V (N^{\beta}(x_j - x_{k+1})) \Big| \\ \leq \; & N^{2\beta} \Big| \tr \; S^{-1}_j S^{-1}_{k+1} J^{(k)} S_j S_{k+1} S_j^{-1} S_{k+1}^{-1} V (N^{\beta}(x_j - x_{k+1})) S_{k+1}^{-1} S_j^{-1} S_j S_{k+1} \wt\gamma^{(k+1)}_{N,s} S_j S_{k+1} \\
&\hspace{.5cm}- \tr \; S_j S_{k+1} J^{(k)} S_j^{-1} S_{k+1}^{-1} S_{k+1} S_j \wt\gamma^{(k+1)}_{N,s} S_j S_{k+1} S_{k+1}^{-1} S_j^{-1} V (N^{\beta}(x_j - x_{k+1})) S_j^{-1} S_{k+1}^{-1} \Big| \\ \leq \; & N^{2\beta} \,\left( \left\| S_j S_{k+1} J^{(k)} S_j^{-1} S_{k+1}^{-1} \right\| + \left\| S^{-1}_j S^{-1}_{k+1} J^{(k)} S_j S_{k+1} \right\| \right) \\ &\hspace{.5cm}  \times \left\| S_{k+1}^{-1} S_j^{-1} V (N^{\beta}(x_j - x_{k+1})) S_j^{-1} S_{k+1}^{-1} \right\|  \; \sup_{s \in \bR} \tr \; S^2_j S^2_{k+1} \gamma_{N,s}^{(k+1)} \\
\leq  \; &C \tri J^{(k)} \tri  ,
\end{split}
\end{equation}
where, in the last inequality we used (\ref{eq:aprik}) and the fact that, by Lemma \ref{lm:sob},
\[ \left\| S_1^{-1} S_2^{-2} V (x_1 -x_2) S_2^{-1} S_1^{-1} \right\| \leq C \, \| V \|_1 \, .\]
The second term on the r.h.s. of (\ref{eq:equi-1}) can be handled similarly. This implies  (\ref{eq:equi02}). The proof of the fact that $\gamma_{\infty,t}^{(k)}$ is symmetric w.r.t. permutations, that it is non-negative and with $\tr \, \gamma_{\infty,t}^{(k)} \leq 1$ can be found in \cite[Theorem 6.1]{ESY}.
\end{proof}

\section{A-Priori Estimate on the Limit Points $\Gamma_{\infty,t} = \{ \gamma^{(k)}_{\infty,t} \}_{k \geq 1}$} \label{sec:apriKM}
\setcounter{equation}{0}

Since the a-priori estimates (\ref{eq:apriNk}) on $\wt \gamma_{N,t}^{(k)}$ hold uniformly in $N$, we can extract estimates on the limit points $\{ \gamma^{(k)}_{\infty,t}\}_{k \geq 1}$.

\begin{proposition}\label{cor:aprik-inf} Suppose that $\Gamma_{\infty,t} = \{ \gamma^{(k)}_{\infty,t} \}_{k\geq 1} \in \bigoplus_{k\geq 1} C([0,T], \cL^1_k)$ is a limit point of the sequence $\wt \Gamma_{N,t} = \{ \wt \gamma^{(k)}_{N,t} \}_{k=1}^N$ with respect to the product topology $\tau_{\text{prod}}$. Then there exists $C>0$ (depending on $\kappa$) such that
\begin{equation}\label{eq:aprik-inf} \tr (1-\Delta_1) \dots (1-\Delta_k) \gamma_{\infty,t}^{(k)} \leq C^k \end{equation}
for all $k \geq 1$.
\end{proposition}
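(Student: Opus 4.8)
The plan is to pass to the limit in the uniform-in-$N$ bound \eqref{eq:apriNk} using the weak* convergence supplied by the assumption that $\Gamma_{\infty,t}$ is a limit point. First I would fix $k\geq 1$ and $t\in[0,T]$, and recall from Corollary \ref{cor:apriNk} that there exist $C>0$ (depending on $\kappa$) and $N_0(k)$ with $\tr\,(1-\Delta_1)\cdots(1-\Delta_k)\,\wt\gamma^{(k)}_{N,t}\leq C^k$ for all $N>N_0(k)$. Equivalently, writing $S_j=(1-\Delta_j)^{1/2}$, the operators $S_1\cdots S_k\,\wt\gamma^{(k)}_{N,t}\,S_1\cdots S_k$ are nonnegative with trace bounded by $C^k$; they form a bounded sequence in $\cL^1_k$.

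The key step is to upgrade weak* convergence of $\wt\gamma^{(k)}_{N,t}$ to information about $(1-\Delta_1)\cdots(1-\Delta_k)\,\gamma^{(k)}_{\infty,t}$. Along the subsequence realizing the limit point, $\wt\gamma^{(k)}_{N,t}\to\gamma^{(k)}_{\infty,t}$ in the weak* topology of $\cL^1_k$ for each $t$. I would test against an arbitrary nonnegative compact operator $J^{(k)}\in\cK_k$ and write
\[
\tr\, J^{(k)} \, S_1\cdots S_k\,\gamma^{(k)}_{\infty,t}\,S_1\cdots S_k
= \tr\,\big(S_1\cdots S_k\,J^{(k)}\,S_1\cdots S_k\big)\,\gamma^{(k)}_{\infty,t},
\]
but since $S_1\cdots S_k\,J^{(k)}\,S_1\cdots S_k$ is generally not bounded, I would instead insert a cutoff: replace $S_j$ by the bounded operator $S_j^\chi := S_j\,\chi(|\nabla_j|\leq M)$ (or equivalently test against $J^{(k)}$ of the form $P\,K\,P$ with $P$ a smooth frequency projection and $K$ compact, for which $S_1^\chi\cdots S_k^\chi$ is bounded). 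For such a cutoff observable, weak* convergence gives
\[
\tr\,\big(S^\chi_1\cdots S^\chi_k\big)\,\gamma^{(k)}_{\infty,t}\,\big(S^\chi_1\cdots S^\chi_k\big)
= \lim_{N}\tr\,\big(S^\chi_1\cdots S^\chi_k\big)\,\wt\gamma^{(k)}_{N,t}\,\big(S^\chi_1\cdots S^\chi_k\big)
\leq \limsup_N \tr\, S_1^2\cdots S_k^2\,\wt\gamma^{(k)}_{N,t}\leq C^k,
\]
where the middle inequality uses $S_j^\chi{}^* S_j^\chi\leq S_j^2$ and positivity of $\wt\gamma^{(k)}_{N,t}$. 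Then I would let the cutoff $M\to\infty$: the left side increases to $\tr\,(1-\Delta_1)\cdots(1-\Delta_k)\,\gamma^{(k)}_{\infty,t}$ by monotone convergence for the trace of an increasing sequence of nonnegative operators applied to the fixed nonnegative trace-class operator $\gamma^{(k)}_{\infty,t}$, so the bound $C^k$ is preserved in the limit, giving \eqref{eq:aprik-inf}.

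The main obstacle is exactly this interchange of the unbounded weight $S_1\cdots S_k$ with the weak* limit: the functional $\gamma\mapsto \tr\,(1-\Delta_1)\cdots(1-\Delta_k)\,\gamma$ is only weak*-lower-semicontinuous, not continuous, so one must genuinely argue via bounded approximations and monotone convergence rather than simply pass to the limit in \eqref{eq:apriNk}. A clean way to phrase it is to note that for fixed $M$ the cutoff quantity is a weak* limit of a bounded sequence hence bounded by the $\limsup$, and that $\sup_M$ of these cutoff quantities equals the full weighted trace (finite or infinite a priori), whence finiteness and the bound $C^k$ follow simultaneously. The permutation symmetry and nonnegativity of $\gamma^{(k)}_{\infty,t}$ needed for these manipulations are already recorded in Theorem \ref{thm:compact}.
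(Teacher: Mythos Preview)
Your proposal is correct and follows exactly the route the paper indicates: pass the uniform bound \eqref{eq:apriNk} through the weak* limit, using that the weighted trace is weak*-lower-semicontinuous rather than continuous. The paper's own proof gives no details beyond citing \cite{ESY2}, and your cutoff-plus-monotone-convergence argument is the standard way to make this precise.

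One small point to tighten: in the displayed equality
\[
\tr\,\big(S^\chi_1\cdots S^\chi_k\big)\,\gamma^{(k)}_{\infty,t}\,\big(S^\chi_1\cdots S^\chi_k\big)
= \lim_{N}\tr\,\big(S^\chi_1\cdots S^\chi_k\big)\,\wt\gamma^{(k)}_{N,t}\,\big(S^\chi_1\cdots S^\chi_k\big),
\]
the operator $(S^\chi_1\cdots S^\chi_k)^2$ is bounded but not compact when $\Lambda=\bR^2$ (on the torus the frequency cutoff is finite rank, so there is no issue). Weak* convergence in $\cL^1_k=\cK_k^*$ only lets you test against compact operators. The fix is the one you already gesture at in your parenthetical: keep a compact $0\leq J^{(k)}\leq 1$ inside, so that $S^\chi J^{(k)} S^\chi$ is compact, pass to the limit to get $\tr\,J^{(k)} S^\chi \gamma^{(k)}_{\infty,t} S^\chi\leq C^k$, and then take the supremum over such $J^{(k)}$ (e.g.\ finite-rank projections increasing to the identity) to recover the full trace on the left. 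After that, your monotone-convergence step in $M$ goes through unchanged.
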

\begin{proof}
The bound (\ref{eq:aprik-inf}) follows from the a-priori bound (\ref{eq:aprik}) by taking the limit $N \to \infty$. The details of the proof can be found in \cite{ESY2}.
\end{proof}

In order to apply the technique of Klainerman and Machedon (see \cite{KM}) to prove the uniqueness of the infinite hierarchy, we need different a-priori bounds on the limiting density. These are provided by the following proposition.
\begin{theorem}\label{thm:KM-bd}
Suppose that $\Gamma_{\infty,t} = \{ \gamma^{(k)}_{\infty,t} \}_{k\geq 1} \in \bigoplus_{k\geq 1} C([0,T], \cL^1_k)$ is a limit point of the sequence $\wt \Gamma_{N,t} = \{ \wt \gamma^{(k)}_{N,t} \}_{k=1}^N$ with respect to the product topology $\tau_{\text{prod}}$. Then, for every $\alpha <1$, there exists $C>0$ (depending on $\kappa$) such that
\begin{equation}\label{eq:KM} \left\| S^{(k,\alpha)} B_{j,k} \gamma^{(k+1)}_{\infty,t} \right\|_{L^2 (\Lambda^k \times \Lambda^k)} \leq C^k
\end{equation}
for all $k \geq 1$ and all $t \in [0,T]$. Here $S^{(k,\alpha)} = \prod_{j=1}^k (1-\Delta_{x_j})^{\alpha/2} (1-\Delta_{x'_j})^{\alpha/2}$.
\end{theorem}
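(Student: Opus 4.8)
The plan is to derive Theorem~\ref{thm:KM-bd} directly from the mixed-Sobolev a priori bound (\ref{eq:aprik-inf}) already established in Proposition~\ref{cor:aprik-inf}, which gives $\tr\,(1-\Delta_1)\cdots(1-\Delta_{k+1})\,\gamma^{(k+1)}_{\infty,t}\le C^{k+1}$ uniformly for $t\in[0,T]$; in particular each $\gamma^{(k+1)}_{\infty,t}$ is a non-negative trace class operator whose kernel is $H^1$ in each of the $k+1$ variables, so that by the Sobolev trace theorem $B_{j,k+1}\gamma^{(k+1)}_{\infty,t}$ is a genuine element of $L^2(\Lambda^k\times\Lambda^k)$ and (\ref{eq:KM}) reduces to the deterministic claim: for every non-negative trace class $\gamma$ on $L^2(\Lambda^{k+1})$ and every $\alpha<1$,
\[
\left\|S^{(k,\alpha)}B_{j,k+1}\gamma\right\|_{L^2(\Lambda^k\times\Lambda^k)}\le C_\alpha\,\tr\,(1-\Delta_1)\cdots(1-\Delta_{k+1})\gamma ,
\]
with $C_\alpha$ \emph{independent of $k$}. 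Applying this with $\gamma=\gamma^{(k+1)}_{\infty,t}$ and inserting (\ref{eq:aprik-inf}) then gives (\ref{eq:KM}) with a new constant. (One could instead prove the analogous inequality for $\wt\gamma^{(k+1)}_{N,t}$ from Corollary~\ref{cor:apriNk} and pass to the limit using weak lower semicontinuity of the $L^2$ norm; since (\ref{eq:aprik-inf}) is already in hand the direct route is cleaner.) By permutation symmetry we may take $j=1$, and writing $B_{1,k+1}=B^1-B^2$ with $(B^1\gamma)(\bx_k;\bx'_k)=\gamma(\bx_k,x_1;\bx'_k,x_1)$ and $(B^2\gamma)(\bx_k;\bx'_k)=\gamma(\bx_k,x'_1;\bx'_k,x'_1)$, self-adjointness of $\gamma$ gives $B^2\gamma=(B^1\gamma)^\dagger$ (the adjoint kernel), and since $S^{(k,\alpha)}$ is symmetric under $\bx_k\leftrightarrow\bx'_k$ and real, $\|S^{(k,\alpha)}B^2\gamma\|_{L^2}=\|S^{(k,\alpha)}B^1\gamma\|_{L^2}$; hence it suffices to bound $\|S^{(k,\alpha)}B^1\gamma\|_{L^2}$.

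Second, I would use the spectral decomposition $\gamma=\sum_n\lambda_n|\psi_n\rangle\langle\psi_n|$, $\lambda_n\ge0$, for which $\sum_n\lambda_n\|S_1\cdots S_{k+1}\psi_n\|_{L^2(\Lambda^{k+1})}^2=\tr\,(1-\Delta_1)\cdots(1-\Delta_{k+1})\gamma$ (with $S_l=(1-\Delta_{x_l})^{1/2}$), and the triangle inequality to reduce to the \emph{rank-one estimate} $\|S^{(k,\alpha)}B^1|\psi\rangle\langle\psi|\|_{L^2(\Lambda^k\times\Lambda^k)}\le C_\alpha\,\|S_1\cdots S_{k+1}\psi\|_{L^2(\Lambda^{k+1})}^2$ for $\alpha<1$, again with $C_\alpha$ independent of $k$. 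For rank-one $\gamma$ the kernel of $B^1\gamma$ is $\theta(\bx_k)\,\overline{\psi(\bx'_k,x_1)}$, where $\theta(\bx_k):=\psi(x_1,x_2,\dots,x_k,x_1)$ is the restriction of $\psi$ to the diagonal $\{x_{k+1}=x_1\}$. Of the $2k$ factors of $S^{(k,\alpha)}$, the $k$ derivatives $(1-\Delta_{x'_l})^{\alpha/2}$ act only on $\overline{\psi(\bx'_k,x_1)}$ and the $k-1$ derivatives $(1-\Delta_{x_l})^{\alpha/2}$, $l\ge2$, act only on $\theta$; the single remaining factor $(1-\Delta_{x_1})^{\alpha/2}$ must be applied to a product of two functions of the two-dimensional variable $x_1$, and to it I would apply the fractional Leibniz (Kato--Ponce) rule in $x_1$, splitting the $\alpha$ derivatives between the two factors at the cost of a pair of H\"older exponents, followed by the two-dimensional Sobolev embedding $H^1(\Lambda)\hookrightarrow L^p(\Lambda)$, $p<\infty$, to trade each $L^p_{x_1}$ norm for an $L^2_{x_1}$ norm of fractional derivatives of order strictly less than $1$. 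Squaring, integrating over the remaining variables, and using that the result factors, this bounds $\|S^{(k,\alpha)}B^1|\psi\rangle\langle\psi|\|_{L^2}^2$ by the product of $\big\|(1-\Delta_{x_1})^{\alpha'/2}\prod_{l=2}^k(1-\Delta_{x_l})^{\alpha/2}\theta\big\|_{L^2(\Lambda^k)}^2$, for some $\alpha'<1$, and $\big\|(1-\Delta_{x_{k+1}})^{1/2}\prod_{l=1}^k(1-\Delta_{x'_l})^{\alpha/2}\psi\big\|_{L^2(\Lambda^{k+1})}^2$; the latter is at once $\le\|S_1\cdots S_{k+1}\psi\|_{L^2}^2$ since $\alpha<1$.

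The heart of the matter is therefore the two-dimensional Sobolev-trace (diagonal restriction) estimate
\[
\Big\|(1-\Delta_{x_1})^{\alpha'/2}\prod_{l=2}^k(1-\Delta_{x_l})^{\alpha/2}\theta\Big\|_{L^2(\Lambda^k)}\le C_{\alpha'}\,\|S_1 S_2\cdots S_k S_{k+1}\psi\|_{L^2(\Lambda^{k+1})},\qquad\alpha'<1,
\]
which in Fourier variables, after freezing $m_2,\dots,m_k$ and using $\widehat\theta(\bm_k)=\sum_{w}\widehat\psi(w,m_2,\dots,m_k,m_1-w)$, reduces by Cauchy--Schwarz in $w$ and the linear change of variables $(m_1,w)\mapsto(p_1,p_{k+1})=(w,\,m_1-w)$ to the single two-dimensional input
\[
\sum_{w}\langle w\rangle^{-2}\langle m_1-w\rangle^{-2}\le C_\epsilon\,\langle m_1\rangle^{-2+\epsilon}\qquad\text{for every }\epsilon>0
\]
(a sum over the dual lattice when $\Lambda=[-L,L]^{\times2}$, an integral when $\Lambda=\bR^2$): once $\epsilon<2(1-\alpha')$ the factor $\langle m_1\rangle^{2\alpha'-2+\epsilon}$ produced is bounded, and the estimate closes, completing the rank-one bound and, with the reductions above, Theorem~\ref{thm:KM-bd}. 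This is exactly where the restriction to two space dimensions and to exponents $\alpha<1$ is indispensable --- in three dimensions the analogous kernel decays only like $\langle m_1\rangle^{-1}$, forcing $\alpha'<\tfrac12$, far below the exponent $\alpha'\approx1$ demanded by the Klainerman--Machedon uniqueness scheme --- and I expect this to be the main obstacle. The only other point requiring care, though it is entirely routine, is to verify that every constant above (in the Kato--Ponce step, the Sobolev embeddings, and the lattice bound) is independent of $k$, since this is precisely what makes the final bound $C^k$ rather than, say, $k!\,C^k$.
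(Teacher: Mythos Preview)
Your reduction to the deterministic inequality
\[
\left\|S^{(k,\alpha)}B_{j,k+1}\gamma\right\|_{L^2(\Lambda^k\times\Lambda^k)}\le C_\alpha\,\tr\,(1-\Delta_1)\cdots(1-\Delta_{k+1})\gamma
\]
is exactly the reduction the paper makes (this is its equation (\ref{eq:KM2})), and your endpoint input $\sum_w\langle w\rangle^{-2}\langle m-w\rangle^{-2}\lesssim\langle m\rangle^{-2+\eps}$ is the same two-dimensional convolution fact that drives the paper's argument. So the strategy and the decisive analytic ingredient coincide.

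The technical route, however, is genuinely different. The paper does \emph{not} reduce to rank one by the triangle inequality and does not invoke Kato--Ponce or Sobolev embeddings in physical space. Instead it keeps the full spectral decomposition $\gamma^{(2)}=\sum_j\lambda_j|\psi_j\rangle\langle\psi_j|$, writes $\|S^{(1,\alpha)}B_{1,2}\gamma^{(2)}\|_{L^2}^2$ as a double sum $\sum_{i,j}\lambda_i\lambda_j(\cdots)$ directly in Fourier variables, dominates $(1+p^2)^\alpha$ by a product of three Japanese brackets in the convolution variables (producing nine terms), and then applies a weighted Schwarz inequality \emph{in the Fourier integral} to factor each term. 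The convolution estimate is used twice (once to integrate out an auxiliary momentum, once as the final $\sup$-bound $C_\alpha=\sup_P\int\langle\kappa\rangle^{-2}\langle P-\kappa\rangle^{-(1-\alpha)}\,d\kappa<\infty$), and the outcome is exactly $\bigl(\sum_j\lambda_j\|S_1S_2\psi_j\|^2\bigr)^2=(\tr\,S_1^2S_2^2\gamma)^2$. Your route---triangle inequality down to rank one, split $B=B^1-B^2$, fractional Leibniz in $x_1$, Sobolev to return to $L^2$, then a diagonal trace estimate---is more modular and makes the role of the product structure explicit, at the cost of importing the Kato--Ponce machinery (which one must then also check on the torus). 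The paper's approach is more self-contained and handles both halves of the commutator simultaneously, but the nine-term case analysis is less transparent. Either way the constant is manifestly $k$-independent, so both yield the required $C^k$ growth.
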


\begin{proof}
By (\ref{eq:aprik-inf}), it is enough to prove that
\begin{equation}\label{eq:KM2}
\left\| S^{(k,\alpha)} B_{j,k+1} \gamma^{(k+1)}_{\infty,t} \right\|_{L^2 (\Lambda^k \times \Lambda^k)} \leq C \; \tr \; (1-\Delta_1) \dots (1-\Delta_{k+1}) \, \gamma^{(k+1)}_{\infty,t} \,.
\end{equation}
We only consider $k=1$ and $j=1$ (the argument for $k \geq 2$ is similar). Moreover, we focus on $\Lambda = \bR^2$; the case $\Lambda = [-L, L]^{\times2}$ can be handled similarly (sums are going to replace integrals over momenta). Switching to Fourier space we have
\begin{equation}
\begin{split}
\widehat{B_{1,2} \gamma_{\infty,t}^{(2)}} (p \, ; p') &= \int \rd x_1 \, \rd x'_1 \; e^{i x_1 \cdot p} \; e^{-i x'_1 \cdot p'} \int \rd x_{2} \rd x'_{2} \, \delta (x_1 - x_2) \delta (x_2 - x'_2) \gamma_{\infty,t}^{(2)} (x_1, x_2 ; x'_1, x'_2 ) \\
&= \int \rd q \, \rd \kappa  \int \rd x_1 \rd x_2 \rd x'_1 \rd x'_2 \; e^{i x_1 \cdot p} \; e^{-i x'_1 \cdot p'} e^{i q (x_2 -x_1)} \, e^{i\kappa (x_2 - x'_2)} \; \gamma_{\infty,t}^{(2)} (x_1, x_2 ; x'_1, x'_{2}) \\
&= \int \rd q \, \rd \kappa \; \widehat{\gamma}^{(2)}_{\infty,t} (p - q, q + \kappa; p', \kappa)
\end{split}
\end{equation}
Thus
\begin{equation}
\begin{split}
\widehat{\left(S^{(1,\alpha)} B_{1,2} \gamma_{\infty,t}^{(2)}\right)} (p; p') &= (1+p^2)^{\alpha/2} \, (1+(p')^2)^{\alpha/2} \, \int \rd q \rd \kappa \;  \widehat{\gamma}^{(2)}_{\infty,t} (p - q, q + \kappa; p', \kappa) \end{split}
\end{equation}
and
\begin{equation}\begin{split}
\left\| S^{(1,\alpha)} B_{1,2} \gamma^{(2)}_{\infty,t} \right\|_{L^2 (\Lambda \times \Lambda)}^2 = \; \int \rd p &\, \rd p' \, \rd q_1 \rd q_2 \rd \kappa_1 \rd \kappa_2 \, (1+p^2)^{\alpha} (1+ (p')^2)^{\alpha} \\ & \times  \widehat{\gamma}^{(2)}_{\infty,t} (p - q_1, q_1 + \kappa_1; p', \kappa_1) \; \widehat{\gamma}^{(2)}_{\infty,t} (p - q_2, q_2 + \kappa_2; p', \kappa_2)
\end{split}
\end{equation}
Using the decomposition
\begin{equation}\label{eq:gamma}
\widehat{\gamma}_{\infty,t}^{(2)} (p_1,p_2 ; p'_1, p'_2) = \sum_j \lambda_j \, \psi_j (p_1, p_2) \, \overline{\psi}_j (p'_1, p'_2) \end{equation} for an orthonormal family $\{ \psi_j\}$, we arrive at (notice that $\lambda_j \geq 0$ for all $j$, and $\sum_j \lambda_j  \leq 1$, because $\gamma^{(k+1)}$ is a non-negative trace-class operator with trace lesser or equal to one by Theorem \ref{thm:compact}):
\begin{equation}\label{eq:int}
\begin{split}
\left\| S^{(1,\alpha)} B_{1,2} \gamma_{\infty,t}^{(2)} \right\|_{L^2 (\Lambda \times \Lambda)}^2 = \; & \sum_{i,j} \lambda_i \lambda_j \;
\int \rd p \, \rd p' \, \rd q_1 \rd q_2 \rd \kappa_1 \rd \kappa_2 \, (1+p^2)^{\alpha}\, (1+(p')^2)^{\alpha} \\ &\times  \psi_j (p - q_1, q_1 + \kappa_1) \, \overline{\psi}_j  (p', \kappa_1) \; \psi_i (p - q_2, q_2 + \kappa_2) \, \overline{\psi}_i (p', \kappa_2)
\end{split}
\end{equation}
Next we use that
\[ (1 + p^2)^{\alpha/2}  \leq C \left( (1 + (p - q_1)^2)^{\alpha/2} + (1 + (q_1+ \kappa_1)^2)^{\alpha/2} + (1+ \kappa_1^2)^{\alpha/2} \right) \]
and that, analogously,
\[ (1 + p^2)^{\alpha/2}  \leq C \left( (1 + (p - q_2)^2)^{\alpha/2} + (1+ (q_2+ \kappa_2)^2)^{\alpha/2} + (1+ \kappa_2^2)^{\alpha/2} \right) \] to estimate
\begin{equation}\label{eq:bd22}
\begin{split}
(1+p^2)^{\alpha} \leq C &\left( (1 + (p - q_1)^2)^{\alpha/2} + (1+ (q_1+ \kappa_1)^2)^{\alpha/2} + (1+ \kappa_1^2)^{\alpha/2} \right) \\ &\times \left( (1 + (p - q_2)^2)^{\alpha/2} + (1+ (q_2+ \kappa_2)^2)^{\alpha/2} + (1+ \kappa_2^2)^{\alpha/2} \right).
\end{split}
\end{equation}
When we insert this bound in (\ref{eq:int}), we obtain 9 different contributions. We show, for example, how to control the first contribution (where we replace the factor $(1+p^2)^{\alpha}$ on the r.h.s. of (\ref{eq:int}) by $(1+(p- q_1)^2)^{\alpha/2} (1+(p-q_2)^2)^{\alpha/2}$). To this end we use a weighted Schwarz inequality, and we get
\begin{equation}\label{eq:int3}
\begin{split}
\int &\rd p \, \rd p' \, \rd q_1 \rd q_2 \rd \kappa_1 \rd \kappa_2 \, (1+(p - q_1)^2)^{\alpha/2} (1+(p - q_2)^2)^{\alpha/2} (1+(p')^2)^{\alpha} \\ &\hspace{3cm} \times \psi_j (p - q_1, q_1 + \kappa_1) \overline{\psi}_j  (p', \kappa_1) \; \psi_i (p - q_2, q_2 + \kappa_2) \overline{\psi}_i (p', \kappa_2) \\ \leq \; &  \int \rd p \, \rd p' \, \rd q_1 \rd q_2 \rd \kappa_1 \rd \kappa_2 \, (1+(p')^2)^{\alpha} \;\\  &\hspace{.5cm} \times \left(  \frac{(1+(p - q_1)^2) (1+ (q_1 + \kappa_1)^2) (1+\kappa_2^2)}{(1+(p - q_2)^2)^{1-\alpha} (1+ (q_2 + \kappa_2)^2) (1+\kappa_1^2)} \left|\psi_j (p - q_1, q_1 + \kappa_1) \right|^2  \; \left|\psi_i (p', \kappa_2)\right|^2 \right. \\ & \hspace{1cm} \left. + \frac{(1+(p-q_2)^2) (1+(q_2 + \kappa_2)^2) (1+\kappa_1^2)}{(1+(p-q_1)^2)^{1-\alpha} (1+(q_1 + \kappa_1)^2) (1+\kappa_2^2)} \left|\psi_i (p-q_2, q_2 + \kappa_2)\right|^2  \; \left|\psi_j (p', \kappa_1)\right|^2 \right)
\end{split}
\end{equation}
We consider the first term in the parenthesis; the second one can be handled similarly. Performing the integration over $q_2$ we find
\begin{equation}\label{eq:int2}
\begin{split}
\int \rd p \, \rd &p' \, \rd q_1 \rd q_2 \rd \kappa_1 \rd \kappa_2 \, (1+(p')^2)^{\alpha} \;  \\ &\hspace{1cm} \times \frac{(1+(p-q_1)^2) (1+(q_1 + \kappa_1)^2) (1+\kappa_2^2)}{(1+(p-q_2)^2)^{1-\alpha} (1+(q_2 + \kappa_2)^2) (1+\kappa_1^2)} \; \left|\psi_j (p-q_1, q_1 + \kappa_1)\right|^2  \; \left|\psi_i (p', \kappa_2)\right|^2 \\ \leq \; & \int \rd p \, \rd p' \, \rd q_1 \rd \kappa_1 \rd \kappa_2 \, (1+(p')^2)^{\alpha} \\ &\hspace{1cm} \times  \frac{(1+(p-q_1)^2) (1+(q_1 + \kappa_1)^2) (1+\kappa_2^2)}{(1+(p + \kappa_2)^2)^{(1-\alpha)/2} (1+\kappa_1^2)}\; \left|\psi_j (p- q_1, q_1 + \kappa_1)\right|^2  \; \left|\psi_i (p', \kappa_2)\right|^2
\end{split}
\end{equation}
where we used that
\[ \int \rd q_2 \frac{1}{(1+(p-q_2)^2)^{1-\alpha} (1+(q_2 + \kappa_2)^2)} \leq \frac{C}{(1+(p + \kappa_2)^2)^{(1-\alpha)/2}} \] for all $\alpha <1$. {F}rom (\ref{eq:int2}), we obtain (shifting the integration variables appropriately)
\begin{equation*}
\begin{split}
\int \rd p \, \rd &p' \, \rd q_1 \rd q_2 \rd \kappa_1 \rd \kappa_2 \, (1+(p')^2)^{\alpha} \\ &\hspace{1cm} \times   \frac{(1+(p-q_1)^2) (1+(q_1 + \kappa_1)^2) (1+\kappa_2^2)}{(1+(p-q_2)^2)^{1-\alpha} (1+(q_2 + \kappa_2)^2) (1+\kappa_1^2)} \; \left|\psi_j (p- q_1, q_1 + \kappa_1)\right|^2  \; \left|\psi_i (p', \kappa_2)\right|^2 \\ \leq \; & \int \rd p \, \rd p' \, \rd q_1 \rd \kappa_1 \rd \kappa_2  \\ &\hspace{1cm} \times   \frac{(1+p^2) (1+ q_1^2) (1+\kappa_2^2)(1+(p')^2)^{\alpha}}{(1+(p + q_1 + \kappa_2 -\kappa_1)^2)^{(1-\alpha)/2} (1+\kappa_1^2)} \; \left|\psi_j (p, q_1)\right|^2  \; \left|\psi_i (p', \kappa_2)\right|^2 \\
\leq \; & C_{\alpha} \left( \int \rd p_1 \rd p_2 (1+p_1^2)(1+p_2^2) \, |\psi_j (p_1, p_2)|^2 \right) \left( \int \rd p_1 \rd p_2 (1+p_1^2)(1+p_2^2) \, |\psi_i (p_1, p_2)|^2 \right)
\end{split}
\end{equation*}
where we put
\[ C_{\alpha} = \sup_{P\in \bR^2} \int \rd \kappa_1 \; \frac{1}{(1+\kappa_1^2) (1+ (P-\kappa_1)^2)^{(1-\alpha)/2}} < \infty \] for all $\alpha <1$. The second term in the parenthesis on the r.h.s. of (\ref{eq:int3}) can be bounded similarly. Also the other eight  contributions arising from (\ref{eq:bd22}) can be controlled in a similar way. Therefore, from (\ref{eq:int}), and recalling (\ref{eq:gamma}) we obtain that
\begin{equation*}
\begin{split}
\| S^{(1,\alpha)} B_{1,2} \gamma_{\infty,t}^{(2)} \|_{L^2 (\Lambda \times \Lambda)}^2  \leq \; &C \, \sum_{i,j} \lambda_i \lambda_j \left( \int \rd p_1 \rd p_2 (1+p_1^2)(1+p_2^2) \, |\psi_j (p_1, p_2)|^2 \right) \\ &\hspace{3cm} \times  \left( \int \rd p_1 \rd p_2 (1+p_1^2)(1+p_2^2) \, |\psi_i (p_1, p_2)|^2 \right)
\\ \leq \; & C  \left( \int \rd p_1 \rd p_2 (1+p_1^2)(1+p_2^2) \, \gamma (p_1, p_2 ; p_1,p_2) \right)^2 \\ = \; & C \left( \tr \; (1- \Delta_1) (1-\Delta_2) \gamma^{(2)} \right)^2 \, .
\end{split}
\end{equation*}
\end{proof}

\section{Convergence to the infinite hierarchy}
\setcounter{equation}{0}

\begin{theorem}\label{thm:convergence}
Suppose that $\wt\psi_N$ is defined as in (\ref{eq:wtpsi}), let $\wt\psi_{N,t} = e^{-iH_Nt} \wt\psi_N$ and denote by $\wt \gamma^{(k)}_{N,t}$ the $k$-particle marginal density associated with $\wt \psi_{N,t}$. Suppose that $\Gamma_{\infty,t} = \{ \gamma^{(k)}_{\infty,t} \}_{k \geq 1} \in
\bigoplus_{k \geq 1} C([0,T] , \cL_k^1)$ is a limit point of $\wt \Gamma_{N,t} =
\{ \wt \gamma_{N,t}^{(k)} \}_{k =1}^N$ with respect to the product topology $\tau_{\text{prod}}$ defined in Section \ref{sec:proof}. Then $\Gamma_{\infty,t}$ is a solution to the infinite hierarchy
\begin{equation}\label{eq:conv}
\gamma^{(k)}_{\infty,t} = \cU^{(k)} (t) \gamma^{(k)}_{\infty,0} -
i b_0 \sum_{j=1}^k \int_0^t \rd s \, \cU^{(k)} (t-s) \tr_{k+1}
\left[ \delta (x_j - x_{k+1}), \gamma_{\infty,s}^{(k+1)} \right]
\end{equation}
with initial data $\gamma_{\infty,0}^{(k)} = |\ph \rangle \langle \ph|^{\otimes k}$.
Here $\cU^{(k)} (t)$ denotes the free evolution of $k$ particles defined in~(\ref{eq:free}).
\end{theorem}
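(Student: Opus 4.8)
The plan is to follow the scheme of \cite{ESY}: pass to the limit $N\to\infty$ in the integral form of the BBGKY hierarchy (\ref{eq:BBGKY1}), tested against a compact observable, and identify the limit with the integral equation (\ref{eq:conv}). First I would fix $t\in[0,T]$ and, by the same density remark as in the proof of Theorem \ref{thm:compact}, reduce to testing against $J^{(k)}\in\cK_k$ with $\tri J^{(k)}\tri<\infty$, since such observables are dense in $\cK_k$ and the bounds below are uniform in $N$. Passing to a subsequence along which $\wt\Gamma_{N,t}\to\Gamma_{\infty,t}$ in $\tau_{\text{prod}}$, the term $\tr J^{(k)}\wt\gamma^{(k)}_{N,t}$ converges to $\tr J^{(k)}\gamma^{(k)}_{\infty,t}$ by definition of $\wh\eta_k$, and the free-evolution term $\tr J^{(k)}\cU^{(k)}(t)\wt\gamma^{(k)}_{N,0}$ converges to $\tr J^{(k)}\cU^{(k)}(t)\gamma^{(k)}_{\infty,0}$ because $\cU^{(k)}(t)$ is a trace-norm isometry and $\wt\gamma^{(k)}_{N,0}\to|\ph\rangle\langle\ph|^{\otimes k}$ in trace norm by (\ref{eq:wtgamma}); this last fact also identifies the initial datum $\gamma^{(k)}_{\infty,0}=|\ph\rangle\langle\ph|^{\otimes k}$.

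It then remains to treat the two interaction terms in the integral form of (\ref{eq:BBGKY1}). The term with prefactor $1/N$ in front of $N^{2\beta}V(N^\beta(x_i-x_j))$ is negligible: exactly as in the estimate (\ref{eq:model}), $N^{2\beta}|\tr J^{(k)}[V(N^\beta(x_i-x_j)),\wt\gamma^{(k)}_{N,s}]|\le C\tri J^{(k)}\tri$ uniformly in $N$ and $s$, using the scale-invariant bound $\|S_i^{-1}S_j^{-1}\,N^{2\beta}V(N^\beta(x_i-x_j))\,S_j^{-1}S_i^{-1}\|\le C\|V\|_1$ from Lemma \ref{lm:sob} together with Corollary \ref{cor:apriNk}; hence, after the factor $1/N$ and integration over $s\in[0,t]$, this contribution is $O(k^2 N^{-1})\to0$. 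For the genuine collision term I would replace $(1-k/N)$ by $1$ (error $O(k/N)$, controlled the same way) and compare $N^{2\beta}V(N^\beta(x_j-x_{k+1}))$ with $b_0\,h_\delta(x_j-x_{k+1})$, where $h_\delta(x)=\delta^{-2}h(x/\delta)$ is a nonnegative smooth bump with $\int h=1$. The three steps are: (i) bound $|\int_0^t ds\,\tr J^{(k)}[\,N^{2\beta}V(N^\beta(x_j-x_{k+1}))-b_0 h_\delta(x_j-x_{k+1})\,,\,\wt\gamma^{(k+1)}_{N,s}]|$ by $\tri J^{(k)}\tri\,\omega(\delta,N)$ with $\omega(\delta,N)\to0$ once $N^{-\beta}\ll\delta$ and then $\delta\to0$; (ii) the analogous bound with $\gamma^{(k+1)}_{\infty,s}$ in place of $\wt\gamma^{(k+1)}_{N,s}$ and $b_0\delta(x_j-x_{k+1})$ in place of $b_0 h_\delta$; and (iii) for $\delta$ fixed, passing to the limit $N\to\infty$ in $\int_0^t ds\,\tr J^{(k)}[b_0 h_\delta(x_j-x_{k+1}),\wt\gamma^{(k+1)}_{N,s}]$. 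In step (iii), for fixed smooth bounded $h_\delta$ the integrand converges for each $s$ from the weak* ($\eta_{k+1}$-metric) convergence of $\wt\gamma^{(k+1)}_{N,s}$, after absorbing the unbounded weights $S_jS_{k+1}$ into $J^{(k)}$, which is legitimate precisely because $\tri J^{(k)}\tri<\infty$, and using the uniform bound of Corollary \ref{cor:apriNk}; the integrand is uniformly bounded in $s$ and $N$, so dominated convergence finishes. Steps (i)–(ii) are where the a priori $H^1$ bounds $\tr(1-\Delta_1)\cdots(1-\Delta_{k+1})\wt\gamma^{(k+1)}_{N,s}\le C^{k+1}$ (Corollary \ref{cor:apriNk}) and $\tr(1-\Delta_1)\cdots(1-\Delta_{k+1})\gamma^{(k+1)}_{\infty,s}\le C^{k+1}$ (Proposition \ref{cor:aprik-inf}) are used: one writes the commutator as in (\ref{eq:model}), pulls out operator norms of the form $\|S^{-1}S^{-1}(\,\cdot\,)S^{-1}S^{-1}\|$ applied to the relative-variable potentials, and invokes Lemma \ref{lm:sob} together with the vanishing of the total mass of $N^{2\beta}V(N^\beta\cdot)-b_0 h_\delta$. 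Sending first $N\to\infty$ and then $\delta\to0$ produces the limit $-ib_0\sum_j\int_0^t ds\,\tr J^{(k)}\cU^{(k)}(t-s)\,\tr_{k+1}[\delta(x_j-x_{k+1}),\gamma^{(k+1)}_{\infty,s}]$, which is (\ref{eq:conv}).

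The main obstacle is the exchange of limits in steps (i)–(iii): one cannot simply replace $\wt\gamma^{(k+1)}_{N,s}$ by $\gamma^{(k+1)}_{\infty,s}$, because $\delta(x_j-x_{k+1})$ and $N^{2\beta}V(N^\beta(x_j-x_{k+1}))$ are not bounded, let alone compact, so weak* convergence alone is useless; it is the strong regularity furnished by the energy estimates of Proposition \ref{prop:enest} that makes the diagonal restriction a bounded operation and allows $\delta\to0$ uniformly in $N$. Since, modulo the two-dimensional Sobolev bookkeeping in Lemma \ref{lm:sob}, the argument is the one of \cite{ESY,ESY2}, I would carry out steps (i)–(ii) in detail only for $k=1$, $j=1$, noting that the general $k$ and the case $\Lambda=[-L,L]^{\times 2}$ (with momentum sums replacing integrals) are handled in the same way.
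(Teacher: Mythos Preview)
Your outline matches the paper's proof closely: the same reduction to observables with $\tri J^{(k)}\tri<\infty$, the same integral BBGKY, the same $h_\delta$-regularization of the delta, and essentially the same splitting of the collision term. Two technical points, however, are not handled correctly as written.

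First, in steps (i)--(ii) you propose to control $N^{2\beta}V(N^\beta\cdot)-b_0 h_\delta$ (and $h_\delta-\delta$) via ``Lemma~\ref{lm:sob} together with the vanishing of the total mass''. Lemma~\ref{lm:sob} only yields $\|S_j^{-1}S_{k+1}^{-1}W(x_j-x_{k+1})S_{k+1}^{-1}S_j^{-1}\|\le C\|W\|_1$, and $\|N^{2\beta}V(N^\beta\cdot)-b_0 h_\delta\|_1$ does \emph{not} tend to zero (each piece has $L^1$-norm $b_0$); the mean-zero condition alone buys nothing at the level of this operator norm. What is actually needed is the Poincar\'e-type estimate of Lemma~\ref{lm:poincare}, which trades the difference $h_\alpha-\delta$ for a factor $\alpha^\kappa$ against $\tr|S_jS_{k+1}\gamma^{(k+1)}S_{k+1}S_j|$; the paper invokes this lemma for the first, second and fourth terms of its four-term splitting (your (i) and (ii) combined).

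Second, in step (iii) you assert that for fixed $h_\delta$ the weak* convergence of $\wt\gamma^{(k+1)}_{N,s}$ suffices ``after absorbing the unbounded weights $S_jS_{k+1}$ into $J^{(k)}$''. But $J^{(k)}$ acts only on the first $k$ variables, so $S_{k+1}$ cannot be absorbed, and $J^{(k)}_{s-t}\,h_\delta(x_j-x_{k+1})$ is \emph{not} compact on $L^2(\Lambda^{k+1})$ (it is of the form compact\,$\otimes$\,bounded in the $(k+1)$-th slot). The paper resolves this by inserting $(1+\eps S_{k+1})^{-1}$: the operator $J^{(k)}_{s-t}\,h_\delta(x_j-x_{k+1})(1+\eps S_{k+1})^{-1}$ is compact for each $\eps>0$, so weak* convergence applies to that piece, while the remainder carrying $\eps S_{k+1}(1+\eps S_{k+1})^{-1}$ vanishes as $\eps\to0$ uniformly in $N$ by the a-priori bounds (\ref{eq:apriNk}) and (\ref{eq:aprik-inf}). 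Once you make these two corrections, your argument coincides with the paper's.
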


\begin{proof}
Fix $k \geq 1$. Passing to an appropriate subsequence, we can assume that,
 for every $J^{(k)} \in \cK_k$,
\begin{equation}\label{eq:conv-0}
\sup_{t \in [0,T]} \, \tr \; J^{(k)} \, \left( \wt\gamma_{N,t}^{(k)} -
\gamma_{\infty,t}^{(k)} \right) \to 0 \qquad \text{as } N \to \infty\,.
\end{equation}
We will prove (\ref{eq:conv}) by testing the limit point against a certain class of observables,  dense in $\cK_k$. More precisely, it is enough to show that, for an arbitrary $J^{(k)} \in \cK_k$ with $\tri J^{(k)} \tri < \infty$,
\begin{equation}\label{eq:conv-1}
\tr \, J^{(k)}  \gamma_{\infty,0}^{(k)} = \tr \, J^{(k)} |\ph
\rangle \langle \ph|^{\otimes k}
\end{equation}
and
\begin{equation}\label{eq:conv-2}
\begin{split}
\tr \; J^{(k)}  \gamma_{\infty,t}^{(k)} = \tr \; J^{(k)} \cU^{(k)}
(t)  \gamma_{\infty,0}^{(k)} - i b_0 \sum_{j=1}^k \int_0^t \rd s
\tr \, J^{(k)} \cU^{(k)} (t-s) \left[ \delta (x_j -x_{k+1}),
\gamma^{(k+1)}_{\infty,s} \right]\,.
\end{split}
\end{equation}

Fix now $J^{(k)} \in \cK_k$ such that $\tri J^{(k)} \tri < \infty$ (recall the definition of the norm $\tri . \tri$ from (\ref{eq:tri})). Eq. (\ref{eq:conv-1}) follows immediately from (\ref{eq:conv-0}). To prove (\ref{eq:conv-2}), we use the BBGKY hierarchy (\ref{eq:BBGKY1}), rewritten in integral form as
\begin{equation}\label{eq:conv-4}
\begin{split}
\tr \; J^{(k)} \, \wt\gamma_{N,t}^{(k)} = \; & \tr \; J^{(k)} \,
\cU^{(k)} (t) \wt\gamma_{N,0}^{(k)} - \frac{i}{N} \sum_{i<j}^k \int_0^t  \rd s \,
\tr \; J^{(k)} \, \cU^{(k)}
(t-s) [ N^{2\beta} V (N^{\beta}(x_i -x_j)), \wt \gamma_{N,s}^{(k)} ] \\
& - i \left(1-\frac{k}{N} \right) \sum_{j=1}^k \int_0^t \rd s \, \tr J^{(k)} \cU^{(k)}
(t-s) [ N^{2\beta} V (N^{\beta}(x_j -x_{k+1})) , \wt \gamma_{N,s}^{(k+1)} ] \,.
\end{split}
\end{equation}
Since, by (\ref{eq:conv-0}), the term on the l.h.s. of (\ref{eq:conv-4}) and the first term on the r.h.s. of (\ref{eq:conv-4}) converge to the term on the l.h.s. of (\ref{eq:conv-2}) and, respectively, to the first term on the r.h.s. of (\ref{eq:conv-2}) (using the assumption that $\tri J^{(k)} \tri < \infty$), and since the second term on the r.h.s. of (\ref{eq:conv-4}) vanishes as $N \to\infty$ (by a simple computation similar to (\ref{eq:model})), it is enough to prove that the last term on the r.h.s. of (\ref{eq:conv-4}) converges, as $N \to \infty$, to the last term on the r.h.s. of (\ref{eq:conv-2}). The contribution proportional to $k/N$ in the last term on the r.h.s. of (\ref{eq:conv-4}) can be shown to vanish as $N \to \infty$ (again, with an argument similar to (\ref{eq:model})). Moreover, the two terms arising from the commutator can be handled similarly. Therefore, we have to prove that, for fixed $T$, $k$ and $J^{(k)}$,
\begin{equation}\label{eq:conv-claim}
\sup_{s \leq t \leq T} \left| \tr J^{(k)} \cU^{(k)}(t-s) \left( N^{2\beta} V (N^{\beta}(x_j -x_{k+1})) \wt \gamma_{N,s}^{(k+1)} - b_0 \delta (x_j -x_{k+1}) \gamma_{\infty,s}^{(k+1)} \right) \right| \to 0
\end{equation}
as $N \to \infty$. To this end, we choose a probability measure $h \in L^1 (\Lambda)$ with $h \geq 0$ and $\int h = 1$, and for arbitrary $\alpha >0$ we define $h_{\alpha} (x) = \alpha^{-2} h(x/\alpha)$. Then we observe that
\begin{equation}\label{eq:4terms}
\begin{split}
\Big| \tr J^{(k)} \cU^{(k)}(t-s) &\left( N^{2\beta} V (N^{\beta}(x_j -x_{k+1})) \wt \gamma_{N,s}^{(k+1)} - b_0 \delta (x_j -x_{k+1}) \gamma_{\infty,s}^{(k+1)} \right) \Big| \\ \leq \; & \Big|
\tr \,  J_{s-t}^{(k)} \, \left( N^{2\beta} V (N^{\beta}(x_j -x_{k+1})) - b_0 \delta (x_j - x_{k+1}) \right) \wt \gamma_{N,s}^{(k+1)} \Big| \\ &+ b_0 \Big|
\tr \, J_{s-t}^{(k)} \, \left(\delta (x_j - x_{k+1}) - h_{\alpha} (x_j -x_{k+1}) \right) \wt\gamma_{N,s}^{(k+1)} \Big| \\ &+ b_0 \Big|
\tr \, J_{s-t}^{(k)} \,  h_{\alpha} (x_j -x_{k+1}) \left( \wt\gamma_{N,s}^{(k+1)} -\gamma_{\infty,s}^{(k+1)}\right) \Big| \\
&+ b_0 \Big|
\tr \, J_{s-t}^{(k)} \,  \left(h_{\alpha} (x_j - x_{k+1}) - \delta (x_j -x_{k+1}) \right) \gamma_{\infty,s}^{(k+1)} \Big|
\end{split}
\end{equation}
where we introduced the notation $J^{(k)}_{t} = \cU^{(k)} (t) J^{(k)}$. The first term on the r.h.s. of the last equation converges to zero as $N \to\infty$, by Lemma \ref{lm:poincare} and by the a-priori bounds (\ref{eq:apriNk}). The second and fourth term on the r.h.s. of the last equation converge to zero, as $\alpha \to 0$, uniformly in $N$ (again by Lemma \ref{lm:poincare}, once combined with (\ref{eq:apriNk}) and once with (\ref{eq:aprik-inf})). The third term on the r.h.s. of the last equation converges to zero as $N \to \infty$, for every fixed $\alpha$. Thus, taking first the limit $N \to \infty$, and then letting $\alpha \to 0$, we obtain (\ref{eq:conv-claim}). To prove that the third term on the r.h.s. of (\ref{eq:4terms}) converges to zero as $N \to \infty$, for every fixed $\alpha >0$, note that, for arbitrary $\eps >0$,
\begin{equation}\label{eq:cmp}
\begin{split}
\Big| \tr \, J_{s-t}^{(k)} \,  h_{\alpha} &(x_j -x_{k+1}) \left( \wt\gamma_{N,s}^{(k+1)} -\gamma_{\infty,s}^{(k+1)}\right)\Big| \\ \leq \; & \Big| \tr \, J_{s-t}^{(k)} \,  h_{\alpha} (x_j -x_{k+1}) \frac{1}{1+\eps S_{k+1}} \left( \wt\gamma_{N,s}^{(k+1)} -\gamma_{\infty,s}^{(k+1)}\right)\Big| \\ &+\Big| \tr \, J_{s-t}^{(k)} \,  h_{\alpha} (x_j -x_{k+1}) \frac{\eps S_{k+1}}{1+\eps S_{k+1}} \left( \wt\gamma_{N,s}^{(k+1)} -\gamma_{\infty,s}^{(k+1)}\right)\Big|
\end{split}
\end{equation}
The first term converges to zero, as $N \to \infty$ by (\ref{eq:conv-0}), for every fixed $\eps >0$ (because the operator $J_{s-t}^{(k)} h_{\alpha} (x_j -x_{k+1}) (1+\eps S_{k+1})^{-1}$ is compact for every $\eps >0$). The second term on the r.h.s. of (\ref{eq:cmp}) converges to zero as $\eps \to 0$, uniformly in $N$ (making use of (\ref{eq:apriNk}) and (\ref{eq:aprik-inf})).
\end{proof}

\section{Uniqueness of the solution to the infinite hierarchy}\label{sec:unique}
\setcounter{equation}{0}

\subsection{The case $\Lambda = \bR^2$}
\label{sec:conti}

In this subsection we assume that $\Lambda = \bR^2$.
\begin{theorem}\label{thm:unique-R2} Fix $\alpha > 1/2$, $T >0$ and $\Gamma = \{ \gamma^{(k)} \}_{k \geq 1} \in \bigoplus \cL^1_k$. There exists at most one solution $\Gamma_t = \{ \gamma^{(k)}_t \}_{k\geq 1} \in \bigoplus_{k\geq 1} C([0,T], \cL_k^1 )$ to the infinite hierarchy
\begin{equation}\label{eq:infhie2} \gamma^{(k)}_t = \cU^{(k)} (t) \gamma^{(k)} -i b_0 \sum_{j=1}^k \int_0^t \rd s\, \cU^{(k)} (t-s) B_{j,k+1} \gamma_s^{(k+1)} \end{equation} with $\Gamma_{t=0} = \Gamma$ and such that
\[ \Big\| S^{(k,\alpha)} B_{j,k+1} \gamma_s^{(k+1)} \Big\|_{L^2 (\bR^{2k} \times \bR^{2k})} \leq C^k \] for all $k \geq 1$. Here the map $B_{j,k+1}$, for $j=1,\dots,k$ has been defined in (\ref{eq:Bk}), the free evolution $\cU^{(k)} (t)$ in (\ref{eq:free}), and $S^{(k,\alpha)} = \prod_{j=1}^k (1-\Delta_{x_j})^{\alpha/2} (1-\Delta_{x'_j})^{\alpha/2}$.
\end{theorem}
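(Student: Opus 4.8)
The plan is to adapt the Klainerman--Machedon uniqueness argument \cite{KM} to two dimensions. By linearity of (\ref{eq:infhie2}) it suffices to show that a solution $\{\gamma^{(k)}_t\}_{k\ge1}$ with $\Gamma_{t=0}=0$ and satisfying the stated space-time bound must vanish identically on $[0,T]$. Iterating the Duhamel formula (\ref{eq:infhie2}) $n$ times and noting that every free term carries the vanishing factor $\gamma^{(\cdot)}_0$, one obtains, for each $n\ge1$,
\begin{equation}\label{eq:plan-iter}
\begin{split}
\gamma^{(k)}_t = (-ib_0)^n \sum_{\mu_1,\dots,\mu_n} \int_{0\le t_n\le\cdots\le t_1\le t} &\cU^{(k)}(t-t_1)\,B_{\mu_1,k+1}\,\cU^{(k+1)}(t_1-t_2) \\
&\times B_{\mu_2,k+2}\cdots B_{\mu_n,k+n}\,\gamma^{(k+n)}_{t_n}\;\rd t_1\cdots \rd t_n,
\end{split}
\end{equation}
where $\mu_i\in\{1,\dots,k+i-1\}$, so that the sum has $k(k+1)\cdots(k+n-1)$ terms.

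\medskip

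The first step is the Klainerman--Machedon ``board game'': using permutation symmetry of $\gamma^{(k+n)}_{t_n}$ and suitable reorderings of the nested time integrations, the $k(k+1)\cdots(k+n-1)$ summands in (\ref{eq:plan-iter}) can be grouped into at most $C_0^{\,n}$ classes (with $C_0$ absolute), each equal to a single integral of the same form over an admissible subregion of the simplex $\{0\le t_n\le\cdots\le t_1\le t\}$. This reduces matters to estimating $C_0^{\,n}$ expressions of the type in (\ref{eq:plan-iter}).

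\medskip

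The analytic core is the two-dimensional space-time estimate
\begin{equation}\label{eq:plan-ST}
\left\| S^{(k,\alpha)}\,B_{j,k+1}\,\cU^{(k+1)}(\cdot)\,f^{(k+1)}\right\|_{L^2(\bR_s\times\bR^{2k}\times\bR^{2k})} \le C\left\|S^{(k+1,\alpha)}\,f^{(k+1)}\right\|_{L^2(\bR^{2(k+1)}\times\bR^{2(k+1)})},
\end{equation}
valid for every $\alpha>1/2$; this is the $\bR^2$ analogue of the Strichartz-type bound of \cite{KM}, and I would prove it by passing to momentum space, writing out the kernel of $B_{j,k+1}\cU^{(k+1)}$, applying Plancherel in the time variable $s$ to produce a delta function in an energy variable, and verifying that the remaining momentum integral converges exactly when $\alpha>1/2$ (a computation close in spirit to the proof of Theorem~\ref{thm:KM-bd}, but now exploiting the time dispersion). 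Granting (\ref{eq:plan-ST}), I would test a reduced class of (\ref{eq:plan-iter}) against an observable $J^{(k)}\in\cK_k$ with $\tri J^{(k)}\tri<\infty$; after inserting the weights $S^{(k,\alpha)}$ and moving the free evolutions past them (they commute and act isometrically), one applies (\ref{eq:plan-ST}) to the first $n-1$ collision operators in succession, each time converting an $L^1$ time integral into an $L^2$ one at the cost of a factor $T^{1/2}$ by Cauchy--Schwarz, and bounds the remaining factor $B_{\mu_n,k+n}\gamma^{(k+n)}_{t_n}$ directly by the assumed a-priori estimate $\|S^{(k+n-1,\alpha)}B_{\mu_n,k+n}\gamma^{(k+n)}_{t_n}\|_{L^2}\le C^{k+n-1}$. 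Summing over classes, the $n$-th iterate is then bounded by $\tri J^{(k)}\tri\,C_1^{\,k}(C_2\,T^{1/2})^n$ with $C_1,C_2$ fixed constants.

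\medskip

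Choosing $T_0>0$ so small that $C_2\,T_0^{1/2}<1$ (note $T_0$ depends only on the constants in the a-priori bound and the board game, not on $k$), the right-hand side of (\ref{eq:plan-iter}) tends to $0$ as $n\to\infty$; hence $\tr J^{(k)}\gamma^{(k)}_t=0$ for all $t\in[0,T_0]$ and all such $J^{(k)}$, and since these are dense in $\cK_k$ we conclude $\gamma^{(k)}_t=0$ on $[0,T_0]$ for every $k$. Because the a-priori bounds are assumed on all of $[0,T]$, one then repeats the argument on $[T_0,2T_0]$, $[2T_0,3T_0],\dots$ with vanishing data at the left endpoint, covering $[0,T]$ in finitely many steps. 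I expect the main obstacle to be establishing (\ref{eq:plan-ST}) in $\bR^2$ down to the sharp range $\alpha>1/2$ — in particular confirming that the Fourier-space integral obtained after Plancherel in time genuinely converges there — with the combinatorial bookkeeping of the board game (keeping the number of classes at most exponential in $n$) a secondary but nontrivial point.
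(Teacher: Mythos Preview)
Your proposal is correct and follows essentially the same route as the paper: the paper likewise expands in a Duhamel series, invokes the Klainerman--Machedon/Erd\H{o}s--Schlein--Yau combinatorial regrouping to reduce the factorially many terms to $C^n$ contributions, and then reduces everything to the two-dimensional space-time estimate you call (\ref{eq:plan-ST}), which it proves via Plancherel in time and a momentum-space lemma showing the relevant integral is finite precisely for $\alpha>1/2$. The only cosmetic difference is that the paper phrases uniqueness as ``the error term $\xi^{(k)}_{n,t}$ vanishes as $n\to\infty$'' rather than ``the difference of two solutions vanishes,'' but these are equivalent.
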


A solution to (\ref{eq:infhie2}) can be expanded in a Duhamel type series
\begin{equation}\label{eq:duh}
\begin{split}
\gamma^{(k)}_t = \; &\cU^{(k)} (t) \gamma^{(k)}_0 + \sum_{m=1}^{n-1} \eta^{(k)}_{m,t} + \xi_{n,t}^{(k)}
\end{split}
\end{equation}
where
\begin{equation}
\begin{split}
\eta^{(k)}_{m,t} = \; &\sum_{j_1=1}^k \dots \sum_{j_m=1}^{k+m-1} \int_0^t \rd s_1 \dots \int_0^{s_{m-1}} \rd s_m \,
\cU^{(k)} (t-s_1) B_{j_1,k+1} \cU^{(k+1)} (s_1 -s_2) \dots \\ &\hspace{5cm} \dots \times B_{j_m,k+m} \cU^{(k+m)} (s_m) \gamma^{(k+m)}
\end{split}
\end{equation}
and
\begin{equation}\label{eq:xik}
\begin{split}
\xi^{(k)}_{n,t} = \; &\sum_{j_1=1}^k \dots \sum_{j_n=1}^{k+n-1} \int_0^t \rd s_1 \dots \int_0^{s_{n-1}} \rd s_n \,
\cU^{(k)} (t-s_1) B_{j_1,k+1} \cU^{(k+1)} (s_1 -s_2) \dots B_{j_m,k+m}  \gamma^{(k+n)}_{s_n}\,.
\end{split}
\end{equation}
To prove the uniqueness of the solution it is enough to show that the error term $\xi_{n,t}^{(k)}$ (all other terms $\eta^{(k)}_{m,t}$ only depend on the initial data) converges to zero, as the order $n$ of the expansion tends to infinity, for small but fixed time $t >0$ (uniqueness for all times follows then by repeating the argument). To this end, we follow the technique developed, in the three dimensional setting, by Klainerman and Machedon; see \cite{KM}. This method relies on two ingredients; the first ingredient is an expansion of the error term (\ref{eq:xik}) in a sum of less than $C^n$ contributions, obtained by an appropriate recombination of the terms associated with different indices $j_1, \dots ,j_n$ in (\ref{eq:xik}) (originally, the sums over $j_1, \dots, j_n$ involve factorially many summands). This reorganization of the terms in (\ref{eq:xik}) was first introduced in \cite{ESY2}, and can be interpreted as an expansion in Feynman diagrams. The second ingredient, which was the main novelty of \cite{KM}, is a space-time estimate, which is then applied recursively to show that all the terms in the expansion are bounded. The expansion used by Klainerman and Machedon (see Section 3 of \cite{KM}, or the diagrammatic expansion developed in \cite{ESY2}, Section 9) can be used with no changes also in the two-dimensional setting we are considering here. Therefore, to complete the proof of Theorem \ref{thm:unique-R2}, we only have to show the following proposition, which replaces, in the argument of Klainerman and Machedon, the space-time estimate given by \cite[Theorem 1.3]{KM}.
\begin{proposition}
Let $S^{(k,\alpha)} = \prod_{j=1}^k (1-\Delta_{x_j})^{\alpha/2} (1-\Delta_{x'_j})^{\alpha/2}$ and let $\cU^{(k)} (t)$ be the free evolution of $k$-particle defined in (\ref{eq:free}). Then, for every $\alpha >1/2$ and for every $j=1, \dots, k$, we have
\begin{equation}
\left\| S^{(k,\alpha)} B_{j,k+1} \cU^{(k+1)} (t) \gamma^{(k+1)} \right\|_{L^2 (\bR \times \bR^{2k} \times \bR^{2k})} \leq C \, \left\| S^{(k+1,\alpha)} \gamma^{(k+1)} \right\|_{L^2 (\bR^{2k} \times \bR^{2k})} \,.
\end{equation}
\end{proposition}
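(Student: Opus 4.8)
The plan is to pass to Fourier space, use Plancherel in time to replace the $L^2_t$--norm by an integral against a delta measure on the resonant set, and then reduce the whole estimate — for every $k$ and $j$ at once — to one elementary integral inequality on $\bR^2\times\bR^2$. It suffices to treat $k=j=1$: writing $B_{j,k+1}=B_{j,k+1}^{+}-B_{j,k+1}^{-}$ according to the two $\delta$--functions in (\ref{eq:Bk}), the two terms are handled identically; and for general $k$ the momenta $p_i$ with $i\ne j$ and all the $p'_i$ are mere spectators — they only add a constant (in $q,\kappa$) to the phase, and their weights in $S^{(k,\alpha)}$ cancel against the matching weights produced on the other side — so the argument is word for word the one for $k=1$ and yields a constant $C$ independent of $k$. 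By density assume $\widehat\gamma^{(2)}$ smooth with compact support. As in the proof of Theorem~\ref{thm:KM-bd}, inserting the free evolution just multiplies the kernel by the corresponding oscillation:
\be\label{eq:Bphase}
\widehat{\left(B_{1,2}^{+}\,\cU^{(2)}(t)\gamma^{(2)}\right)}(p;p') = \int\rd q\,\rd\kappa\;e^{-it\Phi}\,\widehat\gamma^{(2)}(p-q,q+\kappa;p',\kappa),\qquad \Phi=(p-q)^2+(q+\kappa)^2-(p')^2-\kappa^2 .
\ee
Multiplying (\ref{eq:Bphase}) by the symbol $(1+p^2)^{\alpha/2}(1+(p')^2)^{\alpha/2}$ of $S^{(1,\alpha)}$ and applying Plancherel in $t$ — the only $t$--dependence being $e^{-it\Phi}$ — turns the square of the left side of the proposition into
\be\label{eq:aftplanch}
c\int\rd\tau\,\rd p\,\rd p'\;(1+p^2)^{\alpha}(1+(p')^2)^{\alpha}\left|\int\rd q\,\rd\kappa\;\delta(\tau-\Phi)\,\widehat\gamma^{(2)}(p-q,q+\kappa;p',\kappa)\right|^{2}.
\ee

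Next apply Cauchy--Schwarz to the inner $\rd q\,\rd\kappa$--integral in (\ref{eq:aftplanch}) with the weight $W=(1+(p-q)^2)^{\alpha}(1+(q+\kappa)^2)^{\alpha}(1+(p')^2)^{\alpha}(1+\kappa^2)^{\alpha}$. The Schwarz factor carrying $W^{-1}$ is $(1+p^2)^{\alpha}(1+(p')^2)^{\alpha}\int\rd q\,\rd\kappa\;\delta(\tau-\Phi)/W$; here $(1+(p')^2)^{\alpha}$ cancels, and since $\Phi$ depends on $p'$ only through the additive constant $-(p')^2$ (absorb it into $\tau$) this factor is bounded by
\[
\Sigma_\alpha:=\sup_{\sigma\in\bR,\ p\in\bR^2}\ (1+p^2)^{\alpha}\int\rd q\,\rd\kappa\;\frac{\delta\big(\sigma-(p-q)^2-(q+\kappa)^2+\kappa^2\big)}{(1+(p-q)^2)^{\alpha}(1+(q+\kappa)^2)^{\alpha}(1+\kappa^2)^{\alpha}} .
\]
Granting $\Sigma_\alpha<\infty$, the other Schwarz factor, after integrating out $\tau$ against $\delta(\tau-\Phi)$ and performing the unimodular change of variables $(p,p',q,\kappa)\mapsto(r_1,r_2,r_1',r_2')=(p-q,q+\kappa,p',\kappa)$, gives
\[
\big\| S^{(1,\alpha)}B_{1,2}^{+}\cU^{(2)}(t)\gamma^{(2)}\big\|^2_{L^2}\ \le\ c\,\Sigma_\alpha\int\rd r_1\,\rd r_2\,\rd r_1'\,\rd r_2'\;\textstyle\prod_{i=1}^{2}(1+r_i^2)^{\alpha}(1+(r_i')^2)^{\alpha}\,\big|\widehat\gamma^{(2)}(r_1,r_2;r_1',r_2')\big|^2 = c\,\Sigma_\alpha\,\big\|S^{(2,\alpha)}\gamma^{(2)}\big\|^2_{L^2},
\]
so the proposition is reduced to proving $\Sigma_\alpha<\infty$ for $\alpha>1/2$.

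In $\Sigma_\alpha$ substitute $a=p-q$, $b=q+\kappa$ (hence $\kappa=a+b-p$) and then $u=a-p$, $v=b-p$; both substitutions are unimodular, $(p-q)^2+(q+\kappa)^2-\kappa^2$ becomes $p^2-2\,u\cdot v$, and the three denominators become $1+(u+p)^2,\ 1+(v+p)^2,\ 1+(u+v+p)^2$. Using $p=(u+p)+(v+p)-(u+v+p)$ to bound $(1+p^2)^{\alpha}$ by a constant times the sum of the three denominators, $\Sigma_\alpha$ is controlled by three integrals, in each of which exactly two of those denominators survive. In each such integral, integrate first over whichever of $u,v$ enters the $\delta$ linearly (say $v$): splitting $v$ into its components along and orthogonal to $u$, the $\delta$ pins the parallel component and produces the factor $1/(2|u|)$, while the remaining one--dimensional integral of a factor $(1+(\,\cdot\,)^2)^{-\alpha}$ along the orthogonal line is $\le\int_{\bR}(1+s^2)^{-\alpha}\rd s<\infty$ because $\alpha>1/2$. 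What is left is $C_\alpha\int_{\bR^2}\rd u\big/\big(|u|(1+(u+p)^2)^{\alpha}\big)$: the singularity $|u|^{-1}$ is integrable in two dimensions, at infinity the integrand is $O(|u|^{-1-2\alpha})$, integrable in the plane precisely when $\alpha>1/2$, and uniformity in $p$ follows from an elementary splitting of the $u$--domain according to the size of $|u|$ relative to $|p|$. (When the two surviving denominators are $1+(v+p)^2$ and $1+(u+v+p)^2$ one integrates first over $u$, mutatis mutandis.) Hence $\Sigma_\alpha<\infty$ and the proposition follows. The crux of the whole argument is this last estimate: the exponent $1/2$ is used twice for unrelated reasons — for the convergence of $\int_{\bR}(1+s^2)^{-\alpha}\rd s$ and for that of $\int_{\bR^2}|u|^{-1-2\alpha}\rd u$ — so the two--dimensionality of the problem and the hypothesis $\alpha>1/2$ are both pushed to their limit (the same computation in $\bR^3$ would force a strictly larger $\alpha$, which is exactly why the Klainerman--Machedon scheme does not directly close in dimension three), and verifying the uniformity of the bound in $p$ and $\tau$, rather than mere finiteness, is the one genuinely careful point.
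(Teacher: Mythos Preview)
Your proof is correct and follows essentially the same route as the paper's: pass to Fourier space, use Plancherel in $t$, apply a weighted Cauchy--Schwarz to reduce to a single supremum $\Sigma_\alpha$ (the paper isolates this as Lemma~\ref{lm:delta}), then dispatch $\Sigma_\alpha$ by distributing $(1+p^2)^\alpha$ among the three denominators, integrating out the $\delta$ to produce a $1/|\cdot|$ factor, bounding the resulting one--dimensional integral by $\int_{\bR}(1+s^2)^{-\alpha}\,\rd s$, and finishing with $\sup_p\int_{\bR^2}\rd u\,|u|^{-1}(1+(u+p)^2)^{-\alpha}<\infty$. Your preliminary substitution $(q,\kappa)\mapsto(u,v)$ rendering the $\delta$ purely bilinear is a cosmetic variation on the paper's rotation argument; the mechanism and the two uses of $\alpha>1/2$ are identical.
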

{\it Remark.} Note that in \cite{KM}, the operator $S^{(k,\alpha)}$ is replaced by $R^{(k)} = \prod_{j=1}^k |\nabla_{x_j}|\, |\nabla_{x'_j}|$, but this difference does not affect the rest of the argument. In the two-dimensional setting, the proof of the uniqueness is clearly simpler than in the three dimensional setting because the singularity of the delta-function is less critical.

\begin{proof}
We can apply the same ideas used by Klainerman and Machedon. We set $j=1$, and note that, in Fourier space,
\begin{equation}\begin{split} \Big( &\widehat{S^{(k,\alpha)} B_{1,k+1} \cU^{(k+1)} (.) \gamma^{(k+1)}} \Big) (\tau; \bp_k ; \bp'_k ) \\ &\hspace{1cm} = \prod_{j=1}^k (1 + p_j^2)^{\alpha/2} (1+ (p'_j)^2)^{\alpha/2} \int \rd q \rd q' \; \delta (\tau + |\bp_k|^2 + |p_1 - q + q'|^2 - |\bp'_k|^2 - |q'|^2) \\ &\hspace{7cm} \times \gamma^{(k+1)} (p_1 - q + q' , p_2 ,\dots , p_k, q ; p'_1, \dots , p'_k , q') \end{split}
\end{equation}
Therefore
\begin{equation}
\begin{split}
 \Big\| S^{(k,\alpha)}& B_{j,k+1} \cU^{(k+1)} (t) \gamma^{(k+1)} \Big\|_{L^2 (\bR \times \bR^{2k} \times \bR^{2k})} \\  = \; & \int \rd \tau \rd \bp_k \rd \bp'_k \; \Big| \int \rd q \rd q'  \; \delta (\tau + |\bp_k|^2 + |p_1 - q + q'|^2 - |\bp'_k|^2 - |q'|^2)
 \\ &\hspace{1.5cm} \times \prod_{j=1}^k (1 + p_j^2)^{\alpha/2} (1+ (p'_j)^2)^{\alpha/2}
 \gamma^{(k+1)} (p_1 - q + q' , p_2 ,\dots , p_k, q ; p'_1, \dots , p'_k , q') \Big|^2
\end{split}
\end{equation}
With a weighted Schwarz inequality, we obtain
\begin{equation}
\begin{split}
\Big\| S^{(k,\alpha)}& B_{j,k+1} \cU^{(k+1)} (t) \gamma^{(k+1)} \Big\|_{L^2 (\bR \times \bR^{2k} \times \bR^{2k})} \\
\leq \; & \int \rd \tau \rd \bp_k \rd \bp'_k \; \left( \int \rd q \rd q' \,  \frac{(1+p_1^2)^{\alpha} \; \delta (\tau + |\bp_k|^2 + |p_1 - q + q'|^2 - |\bp'_k|^2 - |q'|^2)}{(1+(p_1 - q + q')^2)^{\alpha} (1+ q^2)^{\alpha} (1+(q')^2)^{\alpha}} \right) \\ & \times  \Big( \int \rd q \rd q'  \, \delta (\tau + |\bp_k|^2 + |p_1 - q + q'|^2 - |\bp'_k|^2 - |q'|^2) \\ &\hspace{2cm} \times  (1+ (p_1 - q +q')^2)^{\alpha} (1+ q^2)^{\alpha} (1+ (q')^2)^{\alpha} \prod_{j=2}^k (1 + p_j^2)^{\alpha}   \prod_{j=1}^k (1+ (p'_j)^2)^{\alpha}  \\ &\hspace{2cm} \times  |\gamma^{(k+1)} (p_1 - q + q' , p_2 ,\dots , p_k, q ; p'_1, \dots , p'_k , q')|^2  \Big)\\
 \leq \; & \Big\| S^{(k+1,\alpha)} \gamma^{(k+1)} \Big\|_{L^2 (\bR^{2k} \times \bR^{2k})} \\ &\hspace{2cm} \times \sup_{\tau, \bp_k, \bp'_k} \int \rd q \rd q' \, \frac{(1+p_1^2)^{\alpha}\, \delta (\tau + |\bp_k|^2 + |p_1 - q + q'|^2 - |\bp'_k|^2 - |q'|^2) }{(1+(p_1 - q + q')^2)^{\alpha} (1+ q^2)^{\alpha} (1+(q')^2)^{\alpha}}
\end{split}
\end{equation}
and the proposition follows from Lemma \ref{lm:delta}.
\end{proof}

\begin{lemma}\label{lm:delta}
For all $\alpha > 1/2$, we have
\begin{equation}
I =\sup_{\tau,p}  \int_{\bR^2 \times \bR^2}  \rd q \rd q' \, \frac{(1+p^2)^{\alpha} \, \delta \left( \tau + |p + q - q'|^2 + |q|^2 - |q'|^2 \right)}{(1+ (p + q - q')^2)^{\alpha} (1+ q^2)^{\alpha} (1+ (q')^2 )^{\alpha}} < \infty \, .
\end{equation}
\end{lemma}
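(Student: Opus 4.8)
The plan is to first remove the factor $(1+p^2)^{\alpha}$ from the numerator by the elementary inequality $|p|\le|p+q-q'|+|q|+|q'|$, exactly as in the proof of Theorem~\ref{thm:KM-bd}: this gives
\[(1+p^2)^{\alpha}\le C\Big((1+(p+q-q')^2)^{\alpha}+(1+q^2)^{\alpha}+(1+(q')^2)^{\alpha}\Big),\]
so that $I\le C(I_A+I_B+I_C)$, where each $I_\bullet$ is the supremum over $\tau,p$ of an integral of $\delta\big(\tau+|p+q-q'|^2+|q|^2-|q'|^2\big)$ divided by only \emph{two} of the three weights $(1+(p+q-q')^2)^{\alpha}$, $(1+q^2)^{\alpha}$, $(1+(q')^2)^{\alpha}$, the third having cancelled (one term for each of the three choices). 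Since moreover $1+p^2\le 3(1+(p+q-q')^2)(1+q^2)(1+(q')^2)$, the quantity $I$ is, up to a multiplicative constant, nonincreasing in $\alpha$, so it suffices to treat the range $1/2<\alpha<1$.

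The crucial observation is that the argument of the delta is \emph{affine} in $q'$: since $|p+q-q'|^2-|q'|^2=|p+q|^2-2(p+q)\cdot q'$,
\[\tau+|p+q-q'|^2+|q|^2-|q'|^2=\tau+|p+q|^2+|q|^2-2(p+q)\cdot q'.\]
Decomposing $q'$ into its component $q'_{\parallel}$ along $p+q$ and its orthogonal component $q'_{\perp}$, one integrates $q'_{\parallel}$ out against the delta: this yields a factor $(2|p+q|)^{-1}$ and forces $q'_{\parallel}=\ell(q):=(2|p+q|)^{-1}\big(\tau+|p+q|^2+|q|^2\big)$, while each remaining weight that still depends on $q'$ becomes a factor $(1+a^2+(q'_{\perp})^2)^{-\alpha}$ with $a$ equal to $\ell(q)$ or to $|p+q|-\ell(q)$. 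Using $\int_{\bR}(1+a^2+t^2)^{-\alpha}\,\rd t=C_{\alpha}(1+a^2)^{-(\alpha-1/2)}$ (finite exactly because $\alpha>1/2$) and, for the one term in which no weight on $q$ survives, $\int_{\bR}(1+a^2+t^2)^{-\alpha}(1+b^2+t^2)^{-\alpha}\,\rd t\le C_{\alpha}\langle a+b\rangle^{-2\alpha}$ with $a+b=|p+q|$ (so that this term is already integrable), each $I_\bullet$ is reduced to a two-dimensional integral over $q$ whose integrand is $|p+q|^{-1}$ times $(1+q^2)^{-\alpha}$ times one of $(1+\ell(q)^2)^{-(\alpha-1/2)}$ or $(1+(|p+q|-\ell(q))^2)^{-(\alpha-1/2)}$.

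To bound such a two-dimensional integral I would use the identities $|p+q|\,\ell(q)=\tfrac12\big(\tau+|p+q|^2+|q|^2\big)$ and $|p+q|\big(|p+q|-\ell(q)\big)=\tfrac12\big(|p+q|^2-|q|^2-\tau\big)$, together with $(1+s^2)^{\alpha-1/2}\ge\max\{1,|s|^{2\alpha-1}\}$ (valid for $1/2<\alpha<1$). These give, for example, the pointwise lower bound $|p+q|\,(1+\ell(q)^2)^{\alpha-1/2}\ge c\,|p+q|^{\,2-2\alpha}\big(|p+q|+\big|\tau+|p+q|^2+|q|^2\big|\big)^{2\alpha-1}$, and analogously with $\ell(q)$ replaced by $|p+q|-\ell(q)$. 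One then decomposes the $q$-integral according to whether $|p+q|\lessgtr1$ and whether $|q|\lessgtr1$, uses in each region whichever of the two summands in these lower bounds is effective, and reduces everything to elementary estimates such as $\int_{\bR^2}(1+q^2)^{-\mu}(1+|q-x|^2)^{-\nu}\,\rd q\le C\langle x\rangle^{2-2\mu-2\nu}$, the local integrability of $|\cdot|^{-1}$ on $\bR^2$, and $\int_{\bR}(1+t^2)^{-\alpha}\,\rd t<\infty$. For instance, on $|q|\le1$ the denominator is $\gtrsim|p+q|$ (the factor $(1+q^2)^{-\alpha}$ being bounded), and $\int_{|q|\le1}|p+q|^{-1}\,\rd q\le C$ uniformly in $p$; and on the region $|q|$ large, where $|p+q|\sim|q|$ and typically $\ell(q)\sim|q|$, the integrand decays like $|q|^{-4\alpha}$, integrable on $\bR^2$ precisely because $\alpha>1/2$.

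The hard part will be making all of these bounds \emph{uniform} in $(\tau,p)$. The delta constrains $\tau=-|p+q-q'|^2-|q|^2+|q'|^2$, so that for suitable $(\tau,p)$ the mass of the integral concentrates near the resonant set where $\tau+|p+q|^2+|q|^2\approx0$ and simultaneously $q\approx-p$ --- precisely where $|p+q|^{-1}$ is singular and $\ell(q)$ is large. The reason for keeping the full sum $|p+q|+\big|\tau+|p+q|^2+|q|^2\big|$ (rather than only one of its summands) in the lower bounds above is that it lets one play the $|p+q|^{-1}$ singularity off against the decay supplied by $(1+\ell(q)^2)^{\alpha-1/2}$, with all constants depending only on $\alpha$; carrying this through region by region is the only genuine computation. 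Once $I_A$, $I_B$, $I_C$ are all finite, so is $I$, which proves the lemma.
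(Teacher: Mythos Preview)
Your approach is correct and shares its skeleton with the paper's proof: both split the numerator via $(1+p^2)^{\alpha}\le C\big[(1+(p+q-q')^2)^{\alpha}+(1+q^2)^{\alpha}+(1+(q')^2)^{\alpha}\big]$ and then exploit the affinity of the delta-constraint in $q'$ to integrate out the component of $q'$ along $p+q$, producing the factor $|p+q|^{-1}$.

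The execution after that point differs. You keep the decay $(1+\ell(q)^2)^{-(\alpha-1/2)}$ (or $(1+(|p+q|-\ell(q))^2)^{-(\alpha-1/2)}$) coming from the $q'_\perp$-integral, and then propose a region-by-region analysis in $q$, playing the $|p+q|^{-1}$ singularity against these extra factors to get uniformity in $(\tau,p)$. The paper avoids all of this in two strokes. First, by the substitutions $q\mapsto p+q-q'$ (for your $I_B$) and $q'\mapsto p+q-q'$ (for your $I_C$), every piece is brought to a form whose two surviving weights are $(1+q^2)^{-\alpha}$ and $(1+(q')^2)^{-\alpha}$, at the cost only of a sign change inside the delta. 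Second, in the $q'_\perp$-integral the paper simply uses the crude bound $(1+\ell(q)^2+t^2)^{-\alpha}\le(1+t^2)^{-\alpha}$, so that the $\tau$-dependence disappears completely and one is left with
\[
\sup_{p}\int_{\bR^2}\frac{\rd q}{|p+q|\,(1+q^2)^{\alpha}}\cdot\int_{\bR}\frac{\rd t}{(1+t^2)^{\alpha}}<\infty\qquad(\alpha>1/2),
\]
which is a one-line estimate. In particular, the extra factors you carefully track can just be bounded by $1$ in $I_A$ and $I_C$; for $I_B$ your product bound $\int(1+a^2+t^2)^{-\alpha}(1+b^2+t^2)^{-\alpha}\,\rd t\lesssim\langle a+b\rangle^{-2\alpha}$ does the job cleanly, but the paper's change of variables makes even that unnecessary. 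Your route works, but the paper's shortcut removes the need for any case analysis and makes the uniformity in $(\tau,p)$ automatic.
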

\begin{proof}
{F}rom
\[ (1+p^2)^{\alpha} \leq (1+ (p+q-q')^2)^{\alpha} + (1+q^2)^{\alpha} + (1+(q')^2)^{\alpha} \]
and using the invariance of the integral with respect to the shift $q \to p+q-q'$,
we obtain
\begin{equation}\label{eq:I1I2}
\begin{split}
I \leq \; & 2 \sup_{\tau,p} \int_{\bR^2 \times \bR^2}  \rd q \rd q' \, \delta \left( \tau + |p + q - q'|^2 + |q|^2 - |q'|^2 \right) \, \frac{1}{(1+ q^2)^{\alpha} (1+ (q')^2 )^{\alpha}} \\ &+\sup_{\tau,p} \int_{\bR^2 \times \bR^2}  \rd q \rd q' \, \delta \left( \tau - |p + q - q'|^2 + |q|^2 + |q'|^2 \right) \, \frac{1}{(1+ q^2)^{\alpha} (1+ (q')^2 )^{\alpha}} \\ =\; & I_1 + I_2
\end{split}
\end{equation}
To bound the first term on the r.h.s., we fix $q$ and rotate $q'=(q'_1, q'_2)$ so that $q' \cdot (p+q) = q'_1 |p+q|$. Then
\[ \tau + (p+q-q')^2 + q^2 - (q')^2 = \tau +(p+q)^2 +q^2 -2|p+q| q'_1 \]
and
\begin{equation}
\begin{split}
I_1 = \; & \sup_{p, \tau} \int \frac{\rd q}{|p+q| (1+q^2)^{\alpha}} \int_{-\infty}^{\infty}
\rd q'_2 \frac{1}{\left(1+(q'_2)^2 + \left( \frac{\tau + (p+q)^2 + q^2}{2|p+q|} \right)^2 \right)^{\alpha}} \\ \leq \; & \sup_p \int \frac{\rd q}{|p+q| (1+q^2)^{\alpha}} \int_{-\infty}^{\infty}
\frac{\rd q'_2}{(1+(q'_2)^2)^{\alpha}} \\ \leq \; & C
\end{split}
\end{equation}
because
\[ \sup_{p \in \bR^2} \int  \frac{\rd q}{|p-q| (1 + q^2)^{\alpha}} < \infty \qquad \text{and } \int_{-\infty}^{\infty} \frac{\rd x}{(1+x^2)^{\alpha}} < \infty \]
for all $\alpha >1/2$. The second term on the r.h.s. of (\ref{eq:I1I2}) can be bounded similarly.
\end{proof}

\subsection{The case $\Lambda = [-L,L]^{\times2}$}
\label{sec:per}

\begin{theorem}\label{thm:unique-box}
Fix $\alpha > 1/2$, $T > 0$, and consider initial data $\Gamma = \{\gamma^{(k)}\}_{k \geq 1} \in \bigoplus \cL^1_k$. Then there exists at most one solution $\Gamma_t = \{ \gamma^{(k)}_t \}_{k\geq 1} \in \bigoplus_{k\geq 1} C([0,T], \cL_k^1 )$ of the infinite (Gross-Pitaevskii) hierarchy on the domain $\Lambda = [-L,L]^{\times2}$:
\begin{equation}\label{GPH}
 \gamma^{(k)}_t = \cU^{(k)} (t) \gamma^{(k)} -i b_0 \sum_{j=1}^k \int_0^t \rd s\, \cU^{(k)} (t-s) B_{j,k+1} \gamma_s^{(k+1)}
\end{equation}
with $\Gamma_{t=0} = \Gamma$ and such that
\[ \Big\| S^{(k,\alpha)} B_{j,k+1} \gamma_s^{(k+1)} \Big\|_{L^2 (\Lambda^{k} \times \Lambda^{k})} \leq C^k \] for all $k \geq 1$. Here the map $B_{j,k+1}$, for $j=1,\dots,k$, is the collision operator defined in (\ref{eq:Bk}), the free evolution $\cU^{(k)} (t)$ in (\ref{eq:free}), and $S^{(k,\alpha)} = \prod_{j=1}^k (1-\Delta_{x_j})^{\alpha/2} (1-\Delta_{x'_j})^{\alpha/2}$.
\end{theorem}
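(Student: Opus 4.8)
The plan is to follow the proof of Theorem~\ref{thm:unique-R2} line by line, replacing only the continuous space-time estimate by its periodic counterpart. For uniqueness it suffices to treat a solution $\Gamma_t$ of \eqref{GPH} with vanishing initial data; I would expand it in the Duhamel series \eqref{eq:duh} and apply the recombination of the factorially many terms $\eta^{(k)}_{m,t}$, $\xi^{(k)}_{n,t}$ into at most $C^n$ ``Feynman diagram'' contributions, exactly as in \cite{ESY2, KM} and as recalled before the Proposition in Section~\ref{sec:conti}. This combinatorial step is purely formal and never sees the underlying domain, so it carries over verbatim; as in the $\bR^2$ case, uniqueness follows once the remainder $\xi^{(k)}_{n,t}$ is shown to tend to $0$ as $n \to \infty$ for a small but fixed $t>0$, and the argument is then iterated across $[0,T]$.

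Everything therefore reduces to the periodic analogue of the Proposition of Section~\ref{sec:conti}: for every $\alpha > 1/2$ and $j=1,\dots,k$,
\[ \left\| S^{(k,\alpha)} B_{j,k+1} \cU^{(k+1)}(t)\, \gamma^{(k+1)} \right\|_{L^2(\bR \times \Lambda^k \times \Lambda^k)} \leq C \left\| S^{(k+1,\alpha)} \gamma^{(k+1)} \right\|_{L^2(\Lambda^k \times \Lambda^k)} . \]
I would set $j=1$ and pass to Fourier series on the torus, so that momenta now run over the dual lattice $\Lambda^{*} = (\pi/L)\bZ^2$ while the time variable, hence its dual $\tau$, remains on $\bR$. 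Repeating the computation of Section~\ref{sec:conti}, the Fourier kernel of $S^{(k,\alpha)} B_{1,k+1}\cU^{(k+1)}(t)\gamma^{(k+1)}$ carries a single $\delta(\tau + \cdots)$ enforcing the free dispersion relation, and a weighted Cauchy--Schwarz factors out $\| S^{(k+1,\alpha)} \gamma^{(k+1)} \|$, leaving (after renaming the remaining momenta into free parameters) a purely arithmetic resonance sum. Up to the splitting of $(1+p^2)^\alpha$ used in \eqref{eq:I1I2}, the estimate thus boils down to the periodic version of Lemma~\ref{lm:delta}, i.e. the finiteness of
\[ I^{\mathrm{per}} \;=\; \sup_{\tau \in \bR,\; p \in \Lambda^{*}}\ \sum_{q,\, q' \in \Lambda^{*}} \frac{(1+p^2)^{\alpha}\ \mathbf{1}\big[\, \tau + |p+q-q'|^2 + |q|^2 - |q'|^2 = 0 \,\big]}{(1+(p+q-q')^2)^{\alpha}\,(1+q^2)^{\alpha}\,(1+(q')^2)^{\alpha}} \]
together with its companion sum.

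The one genuinely new point --- and the only place the square geometry and number theory enter --- is the bound on $I^{\mathrm{per}}$. In the continuous case one fixes $q$, notes that the constraint is affine in $q'$, and gains the Jacobian $|p+q|^{-1}$ from integrating the $\delta$ out; this gain has no pointwise analogue for a lattice sum, since the constraint line meets $\Lambda^{*}$ in points of spacing $\sim |p+q|/\gcd$, so the expected gain collapses when $\gcd$ is large, and --- worse --- the supremum over $\tau$ can be arranged so that many such lines are simultaneously ``full''. To recover it I would instead complete the square in the variable $q$ (not $q'$): the constraint then becomes a \emph{circle} $|q-c|^2 = \rho$ with $c$ and $\rho$ determined by $q', p, \tau$, and one invokes the classical bound that the number of points of $\bZ^2$ on a circle of radius $\sqrt m$ is $O_{\e}(m^{\e})$ (going back to Gauss), in the effective form used in \cite{BP} (see also \cite[$\S$ 23.1]{H}, \cite{IR}); these are exactly the arithmetic facts underlying Bourgain's periodic $L^4$ Strichartz estimate in \cite{B1}. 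Summing the resulting bound against the remaining weights $\langle q\rangle^{-2\alpha}\langle q'\rangle^{-2\alpha}$ costs an arbitrarily small power $\langle \cdot \rangle^{\e}$, which is harmless because uniqueness is only asserted for $\alpha \in (1/2,1)$ (Theorem~\ref{thm:KM-bd} furnishes the a priori bound \eqref{eq:KM} for all $\alpha < 1$). Once $I^{\mathrm{per}} < \infty$, the recursive use of the space-time estimate along the diagrams and the conclusion $\xi^{(k)}_{n,t} \to 0$ proceed word for word as in the case $\Lambda = \bR^2$. I expect this arithmetic step --- controlling the resonance sum uniformly in the adversarial parameter $\tau$ --- to be the main obstacle; the rest is a transcription of Sections~\ref{sec:unique} and \ref{sec:conti}.
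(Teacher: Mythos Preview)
Your reduction is exactly the paper's: Duhamel expansion, Klainerman--Machedon recombination into $C^n$ diagrams, and everything hinges on the periodic analogue of Lemma~\ref{lm:delta} (the paper's Lemma~\ref{lm:delta2}). The arithmetic step you sketch, however, has a real gap.

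The full-circle divisor bound $r_2(m)=O_\e(m^\e)$ is not sufficient on its own. After completing the square in $q$ (the paper's variable $n$), the circle has radius $R\sim|p-2n_0-m|$; when $|p|$ is much larger than the dyadic scale $2^j$ of $n$ --- the paper's Case~IIA, $|p|\gtrsim 2^{3j}$ --- this radius is $\sim|p|$, and Gauss yields $|p|^\e$ lattice points. Since $\langle p\rangle^2$ has already been spent against $\langle p-n-m\rangle^2$, the leftover $|p|^\e$ destroys uniformity in $p$. The paper closes this regime with the \emph{short-arc} bound of Bombieri--Pila (via \cite[Lemma~4.4]{DPST}): an arc of length $\lesssim R^{1/3}$ on a circle of radius $R$ carries at most two lattice points; here the relevant arc --- the intersection of the circle with the annulus $|n|\sim 2^j$ --- has length $\sim 2^j\lesssim |p|^{1/3}\sim R^{1/3}$, giving $\#S_c\le 3$ uniformly. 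Your citation of \cite{BP} is well placed, but it is this short-arc theorem, not the classical divisor bound, that is the decisive input from that reference. A secondary point: the line approach you discard actually works --- for fixed $n$ the constraint is linear in $m$, and a rational line meets a disc of radius $R$ in $\lesssim R$ lattice points regardless of any $\gcd$ (the lattice spacing along such a line is always $\ge 1$); the paper uses exactly this in its Case~I ($|n|\ll|m|$). What fails to transfer from $\bR^2$ is only the extra Jacobian gain $|p+q|^{-1}$, which the paper's bookkeeping never needs. In the end the proof runs a case analysis on the relative sizes of $|p|$, $|n|$, $|m|$, $|p-n-m|$, mixing the line count, the Gauss bound, and the short-arc bound as appropriate; a single completion of the square does not cover all regimes.
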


The proof follows the general outline of Klainerman and Machedon's in \cite{KM}. The main novelty is the use of some number-theoretic estimates, like the Gauss lemma \cite{BP, H, IR}, that were  first used in a PDE context in \cite{B1} (see also \cite{DPST}). After a Duhamel expansion argument as in the previous subsection, we need to show the following:

\begin{proposition}
Let $\Gamma_t = \{ \gamma^{(k)}_t \} = \{ \cU^{(k)} (t) \gamma^{(k)} \} $ satisfy the homogeneous infinite hierarchy. Then for every  $\alpha >1/2$ and for every $j=1, \dots, k$, we have
\begin{equation}\label{mainestimate}
 \Vert S^{(k, \alpha)} B_{j, k+1} \cU^{(k+1)} (t) \gamma^{(k+1)} \Vert_{L^2(\mathbb{R} \times \Lambda^{k} \times \Lambda^{k})} \leq C \Vert S^{(k+1, \alpha)} \gamma^{(k+1)} \Vert_{L^2(\Lambda^{(k+1)} \times \Lambda^{(k+1)})}.
\end{equation}
\end{proposition}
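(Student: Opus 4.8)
\medskip
\noindent\emph{Sketch of the argument.} The plan is to run the Klainerman--Machedon scheme of Subsection~\ref{sec:conti} on the torus, with integrals over internal momenta replaced by sums over the dual lattice $\Lambda^{\ast}=(\pi/L)\mathbb{Z}^2$, and to replace the continuous bound of Lemma~\ref{lm:delta} at the very end by a weighted lattice-point count resting on the Gauss circle lemma \cite{BP,H,IR} (first exploited in a PDE context by Bourgain \cite{B1}, see also \cite{DPST}). First I would pass to Fourier series in the spatial variables and to the Fourier transform in time (localised to a bounded interval, which is all the Duhamel iteration needs; since the free evolution does not decay in $t$, this localisation replaces the $\delta$ below by a rapidly decaying weight supported near $\Phi=-\tau$). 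Setting $j=1$ and writing $q,q'\in\Lambda^{\ast}$ for the two internal momenta produced by $B_{1,k+1}$, the space-time Fourier transform of $S^{(k,\alpha)}B_{1,k+1}\cU^{(k+1)}(\cdot)\gamma^{(k+1)}$ is, up to constants,
\[
\Big(\prod_{l=1}^{k}\langle p_l\rangle^{\alpha}\langle p'_l\rangle^{\alpha}\Big)\sum_{q,q'}\delta\big(\tau+\Phi\big)\,\widehat\gamma^{(k+1)}\big(p_1-q+q',p_2,\dots,p_k,q;\,p'_1,\dots,p'_k,q'\big),
\]
where $\langle p\rangle=(1+p^{2})^{1/2}$ and $\Phi$ is the difference of the kinetic energies of the $2(k+1)$ arguments of $\gamma^{(k+1)}$. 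At each fixed $(\tau;\bp_k;\bp'_k)$ I would apply a weighted Cauchy--Schwarz inequality in $(q,q')$, loading the weights $\langle p_1-q+q'\rangle^{\alpha}\langle q\rangle^{\alpha}\langle q'\rangle^{\alpha}$ onto the factor carrying $\widehat\gamma^{(k+1)}$; squaring, integrating in $\tau$, summing over $\bp_k,\bp'_k$ and the change of variables $p_1\mapsto p_1-q+q'$ then reassemble that factor into $\big\|S^{(k+1,\alpha)}\gamma^{(k+1)}\big\|_{L^{2}}^{2}$, so that \eqref{mainestimate} is reduced to the uniform estimate
\[
\sup_{\tau\in\mathbb{R},\ p_1\in\Lambda^{\ast}}\ \sum_{\substack{q,q'\in\Lambda^{\ast}\\ \Phi=-\tau}}\ \frac{\langle p_1\rangle^{2\alpha}}{\langle p_1-q+q'\rangle^{2\alpha}\,\langle q\rangle^{2\alpha}\,\langle q'\rangle^{2\alpha}}\ <\ \infty,
\]
the discrete analogue of Lemma~\ref{lm:delta}. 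As there, I would bound $\langle p_1\rangle^{2\alpha}\lesssim\langle p_1-q+q'\rangle^{2\alpha}+\langle q\rangle^{2\alpha}+\langle q'\rangle^{2\alpha}$, leaving three model sums with only two weights in the denominator.

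Bounding these model sums is the crux, and the point where the torus genuinely differs from $\bR^{2}$: in $\bR^{2}$ one removes one component with the $\delta$-function and integrates the surviving powers along a line and a plane, which converges precisely for $\alpha>1/2$, whereas on $\Lambda$ there is no room because $\sum_{q\in\Lambda^{\ast}}\langle q\rangle^{-2\alpha}$ diverges for $\alpha\le 1$, so a full two-dimensional variable cannot be freed. Instead I would invoke the cubic resonance identity: writing the three active input momenta as $k_1=p_1-q+q'$, $k_2=q$, $k_3=q'$ (so $p_1=k_1+k_2-k_3$), the constraint $\Phi=-\tau$ factors as $(k_1-p_1)\cdot(k_2-p_1)=M$ with $M$ an integer depending on $\tau$, $p_1$ and the spectator momenta. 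Freezing $q'$ forces $q$ onto a circle, $|2q-(p_1+q')|^{2}=|p_1-q'|^{2}-4M$, whose number of lattice points is $O_{\eps}\big((|p_1-q'|^{2}+|M|)^{\eps}\big)$ by the Gauss circle lemma; one then sums $\langle\cdot\rangle^{-2\alpha}$ over that circle and finally over the free variable $q'$. The geometric tail in $q'$ converges because $2\alpha>1$ and the divisor-type power loss $O_{\eps}(\cdot^{\eps})$ is absorbed by the slack in $\alpha>1/2$; the other two model sums are handled the same way. The delicate part is making all of this uniform in $p_1$, in the modulation $\tau$ and in the shift $M$ --- this is the exact two-dimensional analogue of Bourgain's periodic $L^{4}_{t,x}$ Strichartz estimate on the two-torus, and it is also precisely where the argument breaks for a three-dimensional box, where the corresponding lattice count loses too much regularity (in line with the conjecture of \cite{B1}), which is why Theorem~\ref{thm:unique-box} is stated only in two dimensions.

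Once this uniform estimate is in hand the remainder is routine and parallel to Subsection~\ref{sec:conti}: the Feynman-graph recombination of \cite{ESY2,KM} reduces the $n$-th Duhamel remainder to at most $C^{n}$ terms, each controlled by iterating \eqref{mainestimate} together with the hypothesis $\big\|S^{(k,\alpha)}B_{j,k+1}\gamma^{(k+1)}_{s}\big\|\le C^{k}$; the remainder then tends to $0$ as $n\to\infty$ for short times, and uniqueness on $[0,T]$ follows by iterating in time.
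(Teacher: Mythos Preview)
Your reduction is correct and follows the paper exactly: Plancherel, then a weighted Cauchy--Schwarz in the internal momenta, so that \eqref{mainestimate} collapses to a uniform bound on
\[
\sup_{\tau,p}\ \sum_{n,m\in\bZ^{2}}\frac{\langle p\rangle^{2\alpha}\,\delta\big(\tau+|p-n-m|^{2}+|n|^{2}-|m|^{2}\big)}{\langle p-n-m\rangle^{2\alpha}\,\langle n\rangle^{2\alpha}\,\langle m\rangle^{2\alpha}}\,,
\]
exactly the paper's Lemma~\ref{lm:delta2}. Your circle equation $|2q-(p_{1}+q')|^{2}=|p_{1}-q'|^{2}-4M$ is also right, and the Gauss lemma is indeed the number-theoretic input.

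The gap is in the last step. After freezing $q'$ and invoking the divisor bound on the circle, you write that ``the geometric tail in $q'$ converges because $2\alpha>1$''. But $q'$ ranges over a full copy of $\bZ^{2}$, and $\sum_{q'\in\bZ^{2}}\langle q'\rangle^{-2\alpha}$ diverges for every $\alpha\le 1$; the condition $2\alpha>1$ would only suffice for a one-dimensional sum. So as stated your argument does not close for any $\alpha$ in the range $(1/2,1]$ you need. Two further points your sketch misses are tied to this: first, the resonance constraint is \emph{linear} in one of the two variables (freezing $q$ the equation $|p-q+q'|^{2}-|q'|^{2}=\text{const}$ is linear in $q'$), so one never faces a circle if one chooses the right variable to freeze --- this is the paper's Case~I; second, in the genuinely quadratic case the Gauss bound $R^{\eps}$ is not uniform in $p$, and for $|p|$ large compared with the dyadic scale of the constrained variable one must instead use that the arc of the big circle inside the relevant annulus is short enough to carry at most two lattice points (the paper's Case~IIA, via \cite{DPST}).

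What the paper actually does in Lemma~\ref{lm:delta2} is avoid ever having a free two-dimensional sum: it never splits $\langle p\rangle^{2}$ symmetrically into three pieces, but instead runs a case analysis on the relative sizes of $|p|,|n|,|m|,|p-n-m|$, in each case cancelling $\langle p\rangle^{2}$ against the comparable denominator, then choosing which variable to freeze so that the constraint is linear whenever possible, dyadically localising the remaining variable, and using either the line count $\#S_{l}\lesssim 2^{i}$ or the circle count $\#S_{c}\lesssim 2^{3j\eps}$ (resp.\ $\le 2$ on a short arc). The extra dyadic factor coming from the case hypotheses (e.g.\ $i>j$ in Case~IA) is precisely what makes the outer two-dimensional sum converge. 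Your sketch has the right ingredients but is missing this structure; appealing to Bourgain's periodic $L^{4}$ Strichartz does not bypass it, since that estimate is itself proved by the same circle/short-arc counting and does not directly yield the trilinear bound with $\alpha$ derivatives that you need here.
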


\begin{proof}
Without loss of generality we set $j=1$ (which will be suppressed in the sequel), and $\Lambda = [-1,1]^{{\times}2}$. We also assume that $\alpha = 1$, but one can easily see that the argument also works as long as $\alpha>1/2$. We use the same initial approach as Klainerman and Machedon and observe that, by Plancherel's theorem, equation \eqref{mainestimate} is equivalent to:
\begin{equation}\label{Plancherel}
 \Vert I_k [S^{(k+1)}\gamma^{(k+1)}_t] \Vert_{L^2_\tau(\mathbb{R}) \ell^2(\bZ^{2k} \times \bZ^{2k})} \leq C \Vert \widehat{S^{(k+1)}\gamma^{(k+1)}_0} \Vert_{\ell^2(\bZ^{2(k+1)} \times \bZ^{2(k+1)})},
\end{equation}
where
\begin{equation}
 I_k [f](\tau, \bn_k, \bn'_k) = \sum_{n'_{k+1}} \sum_{n_{k+1}} \delta(\dots) \frac{ \langle n_1 \rangle \hat{f} (n_1 - n_{k+1} - n'_{k+1}, \dots, \bn'_{k+1}) }{ \langle n_1 - n_{k+1} - n'_{k+1} \rangle \langle n_{k+1} \rangle \langle n'_{k+1} \rangle },
\end{equation}
and
\begin{equation}
\delta( \dots ) := \delta ( \tau + \abs{ n_1 - n_{k+1} - n'_{k+1} }^2 + \abs{ \bn_{k+1} }^2 - \abs{ n_1 }^2 - \abs{ \bn'_{k+1} }^2),
\end{equation}
and $\langle n \rangle := (1 + \abs{n}^2)^{1/2}.$

Then by Cauchy-Schwarz,
\begin{equation}\label{secondterm}
\begin{split}
\abs{ I_k [f] }^2 \leq & \sum_{n'_{k+1}} \sum_{n_{k+1}} \delta(\dots) \abs{ \hat{f} (n_1 - n_{k+1} - n'_{k+1}, \dots, \bn'_{k+1}) }^2 \\
& \times \sum_{n'_{k+1}} \sum_{n_{k+1}} \delta(\dots) \frac{ \langle n_1 \rangle^2 }{ \langle n_1 - n_{k+1} - n'_{k+1} \rangle^2 \langle n_{k+1} \rangle^2 \langle n'_{k+1} \rangle^2 }
\end{split}
\end{equation}

If the second factor on the right-hand side of \eqref{secondterm} is bounded, then:
\begin{equation}
 \begin{split}
\Vert I_k [f] \Vert^2_{L^2_\tau(\mathbb{R}) \ell^2(\bZ^{2k} \times \bZ^{2k})} & \leq  C^2 \int \sum_{n'_{k+1}} \sum_{n_{k+1}} \delta(\dots) \abs{ \hat{f} (n_1 - n_{k+1} - n'_{k+1}, \dots, \bn'_{k+1}) }^2 d\tau \\ &
\leq C^2 \Vert \hat{f} \Vert^2_{\ell^2}.
 \end{split}
\end{equation}

Hence \eqref{Plancherel} and the proposition follow from Lemma \ref{lm:delta2}.

\end{proof}

\begin{lemma}\label{lm:delta2} For all $\tau \in \bR$ and $p \in \bZ^2$,
\begin{equation}\label{eq:delta2}
 \sum_{ n, m \in \bZ^2 } \frac{ \delta ( \tau + \abs{ p-n-m }^2 + \abs{ n }^2 - \abs{ m }^2 ) \langle p \rangle ^2 }{ \langle{ p-n-m }\rangle^2 \langle n \rangle ^2 \langle m \rangle ^2 } \leq C < \infty \, .
\end{equation}
\end{lemma}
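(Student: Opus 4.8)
The plan is to imitate the proof of Lemma~\ref{lm:delta} (the $\bR^2$ analogue proved just above), with lattice-point counts — above all Gauss's estimate for the number of lattice points on a circle — taking the place of the Jacobian computations used there.

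First, since $|p-n-m|^2+|n|^2-|m|^2\in\bZ$ for all $n,m,p\in\bZ^2$, the summand vanishes unless $-\tau\in\bZ$ lies in the range of this quadratic form; writing $\mu=-\tau$, it suffices to bound, uniformly in $p\in\bZ^2$ and $\mu\in\bZ$,
\[
\cS(p,\mu)\;:=\;\sum_{\substack{n,m\in\bZ^2\\ |p-n-m|^2+|n|^2-|m|^2=\mu}}\frac{\langle p\rangle^2}{\langle p-n-m\rangle^2\,\langle n\rangle^2\,\langle m\rangle^2}\,.
\]
Putting $a:=p-n-m$, so that $a+n+m=p$ and the constraint reads $Q(n,m):=|a|^2+|n|^2-|m|^2=\mu$, I would peel off the numerator using $\langle p\rangle^2\le C(\langle a\rangle^2+\langle n\rangle^2+\langle m\rangle^2)$, reducing $\cS(p,\mu)$ to the three sums $\sum\langle n\rangle^{-2}\langle m\rangle^{-2}$, $\sum\langle a\rangle^{-2}\langle m\rangle^{-2}$ and $\sum\langle a\rangle^{-2}\langle n\rangle^{-2}$ over $\{Q=\mu\}$; since $Q$ is symmetric under $a\leftrightarrow n$, it is enough to treat the first and the third.

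The key structural point, just as in Lemma~\ref{lm:delta}, is that the indefinite form is \emph{affine} in the right variable: for fixed $n$ the set $\{m:Q(n,m)=\mu\}$ is the line $2(p-n)\cdot m=|p-n|^2+|n|^2-\mu$ (the $|m|^2$ terms cancel), and likewise, freezing $a$ (resp.\ $n$), the set $\{n:Q=\mu\}$ (resp.\ $\{a:Q=\mu\}$) is a line. On such a line the lattice points form an arithmetic progression of spacing $\ge 1$ lying at distance $d(n)=\big|\,|p-n|^2+|n|^2-\mu\,\big|/(2|p-n|)$ from the origin, whence $\sum_{m\ \mathrm{on\ the\ line}}\langle m\rangle^{-2}\lesssim\langle d(n)\rangle^{-1}$, and similarly for the third sum. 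This leaves, in each case, a \emph{single} lattice sum, schematically $\sum_{n\in\bZ^2}\langle n\rangle^{-2}\langle d(n)\rangle^{-1}$.

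Bounding this single sum uniformly in $(p,\mu)$ is the heart of the matter: the crude estimate $\langle d(n)\rangle^{-1}\le 1$ is not enough, since $\sum_n\langle n\rangle^{-2}$ diverges logarithmically in two dimensions. I would decompose dyadically in $|n|$ and in $d(n)$ and exploit two gains. For $n$ bounded away from the resonant set $\{n:|p-n|^2+|n|^2=\mu\}$ (a circle), the factor $\langle d(n)\rangle^{-1}$ already supplies enough decay, by a direct estimate, to be summed against $\langle n\rangle^{-2}$ uniformly. Near that circle — where $d(n)$ is of no use — one instead counts the lattice points on (or within $O(1)$ of) it by the Gauss circle bound $r_2(k)=\#\{x\in\bZ^2:|x|^2=k\}\ll_\eps k^\eps$ (the divisor bound; \cite{BP}, \cite[\S\,23.1]{H}, \cite{IR}), splitting $k=g\ell$ with $\ell$ primitive to dispose of the $\gcd$'s that appear; the dyadic pieces then form geometric series summing to a constant independent of $p$ and $\mu$. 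Combined with the reduction above this yields $\cS(p,\mu)\le C$, which is the assertion. The main obstacle is precisely this near-resonant count, and it is the only place where the square torus enters: a sharp lattice-point count for circles rests on the arithmetic of sums of two squares, while on an irrational torus circles become ellipses with no comparable estimate, so the argument does not carry over (as already noted in the Introduction).
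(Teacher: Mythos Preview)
Your reduction has a genuine gap in the third sum $\sum_{Q=\mu}\langle a\rangle^{-2}\langle n\rangle^{-2}$. Freeze $a$ and sum over $n$ on the line $2(p-a)\cdot n=\mu-|a|^2+|p-a|^2$: the right-hand side simplifies to $\mu+|p|^2-2p\cdot a$, which is \emph{linear} in $a$, so the distance to the origin is
\[
d_3(a)=\frac{|\mu+|p|^2-2p\cdot a|}{2|p-a|}\;\lesssim\;\frac{|p|\,|a|}{|a|}=|p|\qquad(|a|\to\infty).
\]
Thus $\langle d_3(a)\rangle^{-1}$ does \emph{not} decay, and your outer sum $\sum_a\langle a\rangle^{-2}\langle d_3(a)\rangle^{-1}$ diverges logarithmically for every fixed $p$. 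The ``resonant set'' here is the line $\{2p\cdot a=\mu+|p|^2\}$, not a circle, so your description ``near that circle\dots'' and the Gauss count simply do not apply to this piece. Worse, the third sum itself is infinite: at $a=p$ (so $m=-n$) one has $Q=|p|^2$ identically, and when $\mu=|p|^2$ every $n$ contributes, giving $\sum_n\langle p\rangle^{-2}\langle n\rangle^{-2}=\infty$. So already the initial peeling $\langle p\rangle^2\le C(\langle a\rangle^2+\langle n\rangle^2+\langle m\rangle^2)$ destroys convergence in this branch. (For sums 1 and 2 your scheme is fine, since there $f(n)=|p-n|^2+|n|^2-\mu$ is genuinely quadratic and $d_1(n)\sim|n|\to\infty$; incidentally, an area count in the width-$O(1)$ annulus already gives $O(R)$ points contributing $O(R^{-1})$, so Gauss is not even needed there --- note that $r_2(k)\ll k^\eps$ counts points \emph{on} a circle, not within $O(1)$ of it.)

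The paper does \emph{not} peel off $\langle p\rangle^2$ at the outset. It keeps all three weights $\langle a\rangle^{-2}\langle n\rangle^{-2}\langle m\rangle^{-2}$ and runs a case analysis on the relative sizes of $|n|,|m|,|p|$, cancelling $\langle p\rangle^2$ against whichever denominator dominates. In Case~I ($|n|\ll|m|$) it freezes $n$ and uses your line observation; but in Case~II ($|n|\gtrsim|m|$) it freezes $m$, so that the constraint on $n$ becomes a \emph{circle} $|n-(p-m)/2|^2=\text{const}$, and this is where the Gauss bound $r_2(k)\ll_\eps k^\eps$ (together with a short-arc estimate \`a la Bombieri--Pila when the radius is large) is genuinely needed. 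Your attempt to route everything through line-sums is exactly what forces the loss in sum~3; fixing it seems to require reintroducing the circle count, at which point you are back to the paper's argument.
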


\begin{proof}
Since (\ref{eq:delta2}) is not symmetric with respect to $n$ and $m$, we consider the two cases, $ \abs{n} \ll \abs{m}$ and $\abs{n} \gtrsim \abs{m}$.

\subsubsection{Case I: $ \abs{n} \ll \abs{m} $}

We decompose the sum in (\ref{eq:delta2}) as follows:
\begin{equation}
 \sum_{\substack{n,m \in \bZ^2 \\ \abs{n} \ll \abs{m} }} \delta(***) = \sum_{i > j \geq 0}  \sum_{\abs{n} \sim 2^j} \sum_{\abs{m} \sim 2^i} \delta(***) = \sum_{i>j \geq 0} \sum_{\abs{n} \sim 2^j} \# S_l,
\end{equation}
where $S_l$ is defined to be, for fixed $n$:
\begin{equation}
 S_l = S_{n,i,\tau,p} := \{ \abs{m} \sim 2^i : \abs{p-n-m}^2 + \abs{n}^2 - \abs{m}^2 = -\tau \}.
\end{equation}

In order to compute $ \# S_l $, we fix $m_0 \in S_l$ and count the number of $\ell \in \bZ^2$ such that $m_0+\ell \in S_l$. By definition, two equations must be satisfied for $m_0$ and $m_0+\ell$ to be in $ S_l$:
\begin{equation}
 -\tau  = \abs{p-n-m_0}^2 + \abs{n}^2 - \abs{m_0}^2,
\end{equation}
and
\begin{equation} \begin{split}
 -\tau  &= \abs{p-n-(m_0 + \ell)}^2 + \abs{n}^2 - \abs{m_0 + \ell}^2 \\
& = \abs{p-n-m_0}^2 + \abs{\ell}^2 - 2\ell \cdot (p-n-m_0) + \abs{n}^2  \\
& \quad \quad \quad - \abs{m_0}^2 - \abs{\ell}^2 - 2\ell \cdot m_0 \\
& = \abs{p-n-m_0}^2 - 2\ell \cdot (p-n) + \abs{n}^2 - \abs{m_0}^2.
\end{split} \end{equation}
Subtracting the first equation from the second gives the following linear equation for $\ell$:
\begin{equation}\label{elllineareqn}
 \ell \cdot (p-n) = 0.
\end{equation}
The continuous counterpart of this equation is, for $\bx = (x,y)$:
\begin{equation}\label{contlineareqn}
 \bx \cdot (p-n) = 0.
\end{equation}
And $\abs{m_0}, \abs{m_0+\ell} \sim 2^i$ imply that $\abs{\ell} \lesssim 2^i$.
Thus $ \# S_l $ is the number of lattice points on $T$ (the line defined by \eqref{contlineareqn}) inside $D$ (the disc of radius $2^i$ centered at the origin). (See Figure \ref{CzeroTline}.)
\begin{equation}\label{latticepoints}
\# S_l = \# \{\ell\} + 1 \lesssim 2^i.
\end{equation}

\begin{figure}[htbp]
\centering
\epsfig{file=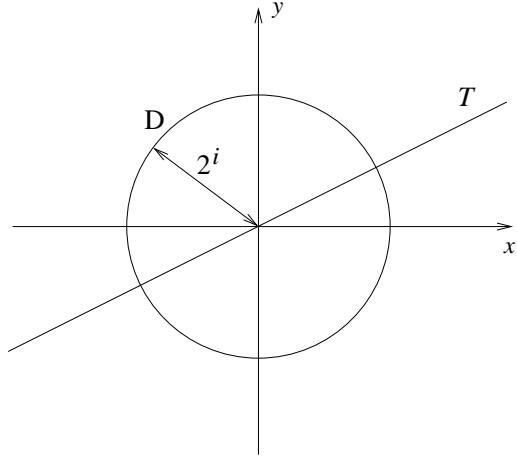, height=2.4in}
\caption{ The graph of $T$, a level set of the linear equation \eqref{contlineareqn}. }
\label{CzeroTline}
\end{figure}

\textbf{Case IA}: If $\abs{n} \ll \abs{m} \ll \abs{p}$, then $\abs{p-n-m} \sim \abs{p}$, and we can cancel $\langle p \rangle^2$ with $\langle p-n-m \rangle^2$ and control the left-hand side of (\ref{eq:delta2}) by a convergent series as follows:
\begin{equation} \begin{split}
  \sum_{\substack{ n, m \in \bZ^2 \\ \abs{n} \ll \abs{m} \ll \abs{p} } } \frac{ \delta ( *** ) \langle p \rangle ^2 }{ \langle{ p-n-m }\rangle^2 \langle n \rangle ^2 \langle m \rangle ^2 }
& \lesssim \sum_{i > j} \sum_{\abs{n} \sim 2^j} \sum_{\abs{m} \sim 2^i} \frac{ \delta(***) }{ \langle n \rangle^2 \langle m \rangle ^2 } \\
 & \lesssim \sum_{j=0}^{\infty} \sum_{\abs{n} \sim 2^j} \frac{1}{\langle n \rangle^2} \sum_{i=j+1}^{\infty} \frac{ \# S_l }{ 2^{2i} } \\
 & \lesssim \sum_{j=0}^{\infty} \sum_{\abs{n} \sim 2^j} \frac{1}{\langle n \rangle^2} \sum_{i=j+1}^{\infty} \frac{ 2^i }{ 2^{2i} } \\
 & \lesssim \sum_{j=0}^{\infty} \sum_{\abs{n} \sim 2^j} \frac{1}{\langle n \rangle^2} \frac{1}{2^{j}} \\
 & \lesssim \sum_{j=0}^{\infty} 2^{2j} \frac{1}{2^{2j}} \frac{1}{2^j} \\
 & = 2.
 \end{split} \end{equation}

\textbf{Case IB}: If $\abs{p} \lesssim \abs{n} \ll \abs{m}$, then we cancel $\langle p \rangle^2$ with $\langle n \rangle^2$ and estimate, for some small, positive $\eps$:
\begin{equation} \begin{split}
  \sum_{\substack{ \abs{p} \lesssim \abs{n} \ll \abs{m}  } } \frac{ \delta ( *** ) \langle p \rangle ^2 }{ \langle{ p-n-m }\rangle^2 \langle n \rangle ^2 \langle m \rangle ^2 } & \lesssim \sum_{n} \sum_m \frac{ \delta(***)  }{ \langle{ p-n-m }\rangle^2 \langle m \rangle ^2 } \\
 & \lesssim \sum_{n} \sum_m \frac{ \delta(***) }{ \langle m \rangle^2 \langle m \rangle^2} \\
 & \lesssim \sum_{n} \sum_m \frac{ \delta(***) }{ \langle m \rangle^{2+\eps} \langle m \rangle^{2-\eps}} \\
& \lesssim \sum_{n}  \frac{1 }{ \langle n \rangle^{2+\eps}} \sum_m \frac{ \delta(***) }{ \langle m \rangle^{2-\eps}} < \infty.
\end{split} \end{equation}
We borrowed an $\eps$ power of $m$ in order to make the first (two-dimensional) sum converge, since $2 + \eps > 2$. The second sum is really one-dimensional due to the restriction, hence converges because $2 - \eps > 1$.

\textbf{Subcase IC1}: If $ \abs{n} \ll \abs{p} \lesssim \abs{m}$ and $\abs{p-m} \gg \abs{n}$, then for some $\eps > 0$:
\begin{equation} \begin{split}
  \sum_{\substack{ \abs{n} \ll \abs{p} \lesssim \abs{m} \\ \abs{p-m} \gg \abs{n} } } \frac{ \delta ( *** ) \langle p \rangle ^2 }{ \langle{ p-n-m }\rangle^2 \langle n \rangle ^2 \langle m \rangle ^2 } & \lesssim \sum_n \sum_m \frac{ \delta(***) }{ \langle{ p-m }\rangle^2 \langle n \rangle ^2 } \\
 & \lesssim \sum_n \sum_m \frac{ \delta(***) }{ \langle{ p-m }\rangle^{2-\eps} \langle n \rangle ^{2+\eps} } \\
 & \lesssim \sum_{n} \frac{1}{\langle n \rangle^{2+\eps}} \sum_{m} \frac{\delta(***)}{\langle p-m \rangle^{2-\eps}} < \infty.
\end{split} \end{equation}
Again the first sum in the last step is two-dimensional; and the second is one-dimensional due to the restriction. (To see this, one just needs to follow the argument presented above, replacing $m$ by $p-m$.)

\textbf{Subcase IC2}: If $ \abs{n} \ll \abs{p} \lesssim \abs{m}$ and $\abs{p-m} \ll \abs{n}$, then we use the fact that $\langle n \rangle^{-2+\eps} \ll \langle p-m \rangle^{-2+\eps}$ and proceed like the previous case:
\begin{equation} \begin{split}
  \sum_{\substack{ \abs{n} \ll \abs{p} \lesssim \abs{m} \\ \abs{p-m} \ll \abs{n} } } \frac{ \delta ( *** ) \langle p \rangle ^2 }{ \langle{ p-n-m }\rangle^2 \langle n \rangle ^2 \langle m \rangle ^2 } & \lesssim \sum_n \sum_m \frac{ \delta(***) }{ \langle{ n }\rangle^2 \langle n \rangle ^2 } \\
 & \lesssim \sum_n \sum_m \frac{ \delta(***) }{ \langle{ n }\rangle^{2-\eps} \langle n \rangle ^{2+\eps} }  \\
 & \lesssim \sum_n \sum_m \frac{ \delta(***) }{ \langle{ p-m }\rangle^{2-\eps} \langle n \rangle ^{2+\eps} }  \\
 & \lesssim \sum_{n} \frac{1}{\langle n \rangle^{2+\eps}} \sum_{m} \frac{\delta(***)}{\langle p-m \rangle^{2-\eps}} < \infty.
\end{split} \end{equation}

\textbf{Subcase IC3}: If $\abs{n} \ll \abs{p} \lesssim \abs{m}$ and $\abs{p-m} \sim \abs{n}$, then we use the change of variables $z := m+n$ in the third step below:
\begin{equation} \begin{split}
  \sum_{\substack{ \abs{n} \ll \abs{p} \lesssim \abs{m} \\ \abs{p-m} \sim \abs{n} } } & \frac{ \delta ( *** ) \langle p \rangle ^2 }{ \langle{ p-n-m }\rangle^2 \langle n \rangle ^2 \langle m \rangle ^2 }  \lesssim \sum_n \sum_m  \frac{ \delta(***) }{ \langle{ p-m-n }\rangle^2 \langle n \rangle ^2 } \\
 & \lesssim \sum_m \sum_n \frac{ \delta(\abs{p-n-m}^2 + \abs{n}^2 - \abs{m}^2 + \tau) }{ \langle p-m-n \rangle^{2} \langle n \rangle^{2} }\\
 & \lesssim \sum_z \sum_{j=0}^{\infty} \sum_{\abs{n} \sim 2^j} \frac{ \delta( \abs{p-z}^2 + \abs{n}^2 - \abs{z-n}^2 + \tau) }{ \langle p-z  \rangle^{2} \langle n \rangle ^{2} } \\
 & \lesssim \sum_z \sum_{j=0}^{\infty} \sum_{\abs{n} \sim 2^j} \frac{ \delta( \abs{p-z}^2 + \abs{n}^2 - \abs{z-n}^2 + \tau) }{ \langle p-z \rangle^{2+\eps} \langle n \rangle ^{2-\eps} } \\
 & \lesssim \sum_z \frac{1}{\langle p-z \rangle^{2+\eps}} \sum_{j=0}^{\infty} \frac{2^j}{2^{(2-\eps)j}} \\
& \lesssim \sum_z \frac{1}{\langle p-z \rangle^{2+\eps}} < \infty.
\end{split} \end{equation}
For the fourth step, we used $\abs{p-z} = \abs{p-n-m} \lesssim \abs{p-m} + \abs{n} \lesssim 2 \abs{n}$ in order to take an $\langle n \rangle^\eps$ and combine it with $\langle p-z \rangle^2$. And for the penultimate step, we counted the number of $n$'s in the support of this delta function (similar to the result for $\# S_l$ in \eqref{latticepoints}): approximately $2^j$.

\subsubsection{Case II: $ \abs{n} \gtrsim \abs{m} $}
We decompose the sums from \ref{eq:delta2} in the following way:
\begin{equation*}
 \sum_{\substack{n,m \in \bZ^2 \\ \abs{n} \gtrsim \abs{m} }} \delta(***) = \sum_{j \geq i \geq 0} \sum_{\abs{m} \sim 2^i} \sum_{\abs{n} \sim 2^j} \delta(***) = \sum_{j \geq i \geq 0} \sum_{\abs{m} \sim 2^i} \# S_c,
\end{equation*}
where
\begin{equation}\label{Sdef}
 S_c = S_{m,j,\tau,p} := \{ \abs{n} \sim 2^j : \abs{p-n-m}^2 + \abs{n}^2 - \abs{m}^2 = -\tau \}.
\end{equation}

In order to compute $\# S_c$, we fix an $n_0 \in S_c$ and count the $\ell \in \bZ^2$ such that $n_0+\ell \in S_c$ also. Because $n_0+\ell \in S_c$ this equation must be satisfied:
\begin{equation} \begin{split}
 -\tau  &= \abs{p-(n_0 + \ell)-m}^2 + \abs{n_0 + \ell}^2 - \abs{m}^2 \\
& = \abs{p-n_0-m}^2 - 2\ell \cdot (p-2n_0-m) + 2 \abs{\ell}^2 + \abs{n_0}^2 - \abs{m}^2. \label{tosub}
\end{split} \end{equation}
Then we subtract from this the equation that $n_0$ must satisfy:
\begin{equation*}
 -\tau  = \abs{p-n_0-m}^2 + \abs{n_0}^2 - \abs{m}^2.
\end{equation*}
We arrive at the following equation for $\ell$:
\begin{equation}\label{elleqn}
 \abs{\ell}^2 + \ell \cdot (-p+2n_0+m) = 0.
\end{equation}
Considering $\bx  \in \bR^2$ instead of $\ell \in \bZ^2$, we examine $f$ defined by:
\begin{equation}\label{xeqn}
f(\bx) = \abs{\bx}^2 + \bx \cdot (-p+2n_0+m).
\end{equation}
For fixed $p,n_0,m$, the graph of $f$ is a paraboloid, whose level sets contain the solutions of \eqref{elleqn}, i.e., the lattice points. We are interested in the level set of $f$ at height 0, which is either empty or a circle $T$ satisfying the following equation (here we write $\bx = (x,y) \in \bR^2$ and $p-2n_0-m = (a,b)$ and complete the square):
\begin{equation}\label{circle}
 \biggl(x - \frac{a}{2}\biggr)^2 + \biggl(y - \frac{b}{2}\biggr)^2 = \frac{a^2 + b^2}{4}.
\end{equation}

An example of $T$ is depicted in Figure \ref{CzeroRlarge}, along with a disc $D := D(0, 2^j)$ centered at the origin representing the allowed range of $\abs{\ell}$, which comes from the facts that $\abs{n_0} \sim 2^j \sim \abs{n_0 + \ell}$.

The goal is to count the number of lattice points $\ell$ on $T$ and inside $D$. In examining different combinations of the parameters involved ($p,n_0,m$), there are two difficulties, which are handled as in \cite{DPST}. When $\abs{p}$ is small, $T$ is contained in $D$, and we must count all of the lattice points on $T$. This requires a number-theoretic estimate depending on the radius of $T$, from \cite{BP}. But this estimate would blow up for $\abs{p}$ large (e.g., case IIA), so there we must use the fact that the arc of $T$ contained in $D$ is relatively short and hence has only a few lattice points on it.

\begin{figure}[htbp]
\centering
\epsfig{file=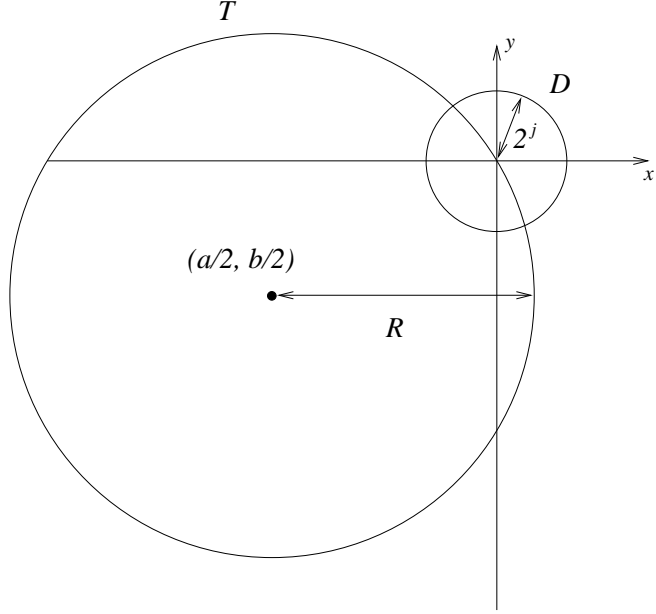, height=3.2in}
\caption{ For $R$ large enough, the arc $T \cap D$ contains only 2 lattice points (Case IIA). }
\label{CzeroRlarge}
\end{figure}

\textbf{Case IIA}: If $\abs{n} \gg \abs{m}$ and $ \abs{p} \gtrsim 2^{3j}$, then there are at most two lattice points on the arc $\gamma := T \cap D$. The proof follows from Lemma 4.4 in \cite{DPST} and requires $\gamma$ to satisfy:
\begin{equation}\label{gamma}
 \abs{\gamma} \lesssim R^{1/3}.
\end{equation}

To see that $\abs{p} \gtrsim 2^{3j}$ is sufficient to obtain this bound, we use a small-angle approximation for the left-hand side of \eqref{gamma} (the arclength is comparable to the diameter of $D$) to get:
\begin{equation} \begin{split}
 \abs{\gamma} \sim 2^j & \lesssim \abs{p}^{1/3} \\
& \lesssim \abs{p-m-2n_0}^{1/3} \\
& \sim R^{1/3}.
\end{split} \end{equation}

Then we can compute:
\begin{equation} \label{one}\begin{split}
  \sum_{\substack{ n, m \in \bZ^2 \\ \abs{n} \gg \abs{m} \\ \abs{p} \gtrsim 2^{3j} } } \frac{ \delta ( *** ) \langle p \rangle ^2 }{ \langle{ p-n-m }\rangle^2 \langle n \rangle ^2 \langle m \rangle ^2 } & \lesssim \sum_{i < j} \sum_{\abs{m} \sim 2^i} \frac{ \# S_c }{ \langle n \rangle^2 \langle m \rangle ^2 } \\
 & \lesssim \sum_{j = 0}^{\infty} \sum_{i = 0}^{j-1} \sum_{\abs{m} \sim 2^i} \frac{ 3 }{ 2^{2j} 2^{2i} } \\
 & \lesssim \sum_{j = 0}^{\infty} \frac{ j }{ 2^{2j} } < \infty.
 \end{split} \end{equation}

\begin{figure}[htbp]
\centering
\epsfig{file=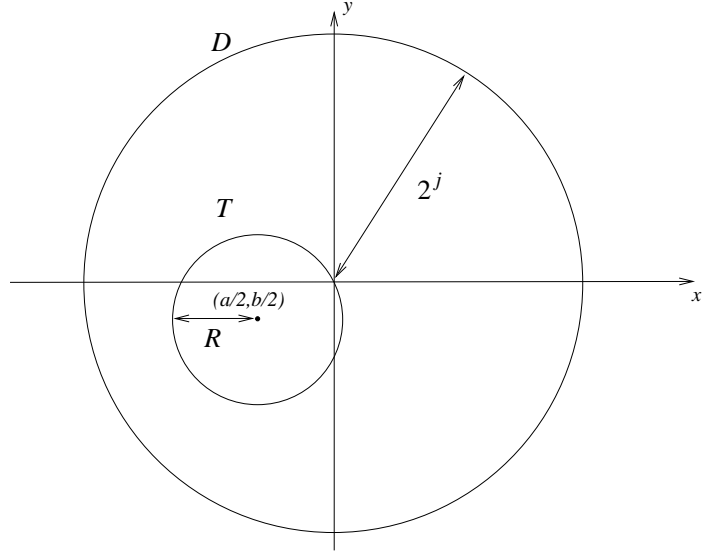, height=2.9in}
\caption{ For $R$ small compared to $2^j$, $T \cap D$ is the whole circle (Case IIB). }
\label{CposRsmall}
\end{figure}

\textbf{Case IIB}: If $\abs{n} \gg \abs{m}$ and $\abs{p} < 2^{3j}$, then because $p$ is small, the arc $\gamma$ is at least a large arc (perhaps the whole circle, as depicted in Figure \ref{CposRsmall}). So we bound the number of lattice points above by the number of lattice points on the whole circle $T$, using the Gauss lemma (see also \cite{BP}):
\begin{equation} \begin{split}
 \# \{ \ell \} \leq R^\eps & = \left( \abs{p-m-2n_0}^2 \right)^{\eps/2} \\
& \lesssim \left( (2^{3j})^2 \right)^{\eps/2} \\
& \sim 2^{3j \eps}. \label{wholebound}
\end{split} \end{equation}

\textbf{Subcase IIB1}: If $\abs{p} \gg \abs{n}$, then we use $\abs{p-n-m} \sim \abs{p}$ for the first step and  \eqref{wholebound} for the second:
\begin{equation} \begin{split}
  \sum_{\substack{ n, m \in \bZ^2 \\ 2^{3j} \gtrsim \abs{p} \gg \abs{n} \gg \abs{m} } } \frac{ \delta ( *** ) \langle p \rangle ^2 }{ \langle{ p-n-m }\rangle^2 \langle n \rangle ^2 \langle m \rangle ^2 } & \lesssim \sum_{i < j} \sum_{\abs{m} \sim 2^i} \frac{ \# S_c }{ \langle{ n }\rangle^2 \langle m \rangle ^2 } \\
 & \lesssim \sum_{i < j} \sum_{\abs{m} \sim 2^i} \frac{ 2^{ 3j\eps } }{ 2^{2j} 2^{2i} } \\
 & \lesssim \sum_{j = 0}^{\infty} \sum_{i = 0}^{j-1}  \frac{ 2^{ 3j\eps } }{ 2^{2j} } \\
 & \lesssim \sum_{j = 0}^{\infty} \frac{ j 2^{ 3j\eps } }{ 2^{2j} }.
\end{split} \end{equation}

\textbf{Subcase IIB2}: If $\abs{p} \lesssim \abs{n}$, and if $\abs{p-n-m} \gtrsim \abs{m}$, then we can do a change of variables, $z:=p- n-m$,
and we obtain the bound:
\begin{equation} \begin{split}
  \sum_{\substack{ \abs{n} \gg \abs{m} \\ \abs{p} \lesssim \abs{n} \\ \abs{p-n-m} \gtrsim \abs{m} } } & \frac{ \delta ( *** ) \langle p \rangle ^2 }{ \langle{ p-n-m }\rangle^2 \langle n \rangle ^2 \langle m \rangle ^2 }  \lesssim \sum_n \sum_m \frac{ \delta(***) }{ \langle{ p-m-n }\rangle^2 \langle m \rangle ^2 } \\
 & \lesssim \sum_z \sum_{|m|\lesssim |z|} \frac{ \delta(\abs{z}^2 + \abs{p-m-z}^2 - \abs{m}^2 + \tau) }{ \langle z \rangle^{2} \langle m \rangle^{2} } \lesssim
  \sum_{i < k} \sum_{\abs{m} \sim 2^i} \frac{ \# \tilde S_c }{ \langle{ z }\rangle^2 \langle m \rangle ^2 },
\end{split} \end{equation}
where
\begin{equation}\label{tildeSdef}
 \tilde S_c = \tilde S_{m,k,\tau,p} := \{ \abs{z} \sim 2^k : \abs{z}^2 + \abs{p-m-z}^2 - \abs{m}^2 = -\tau \}.
\end{equation}
By using the same arguments as above we deduce that if $|p|> 2^{3k}$ then we would obtain a bound as in \eqref{one}. On the other hand, in this case, $|p|\leq 2^{3k}$, so we obtain
$$\# \tilde S_c \lesssim 2^{\eps 3k}$$ which gives us an estimate similar to the one in \eqref{one}.


\textbf{Subcase IIB3}: If $\abs{p} \lesssim \abs{n}$, and if $\abs{p-n-m} \ll \abs{m}$, then we can do a change of variables $z:= n+m$, and proceed:
\begin{equation} \begin{split}
  \sum_{\substack{ \abs{n} \gg \abs{m} \\ \abs{p} \lesssim \abs{n} \\ \abs{p-n-m} \ll \abs{m} } } & \frac{ \delta ( *** ) \langle p \rangle ^2 }{ \langle{ p-n-m }\rangle^2 \langle n \rangle ^2 \langle m \rangle ^2 }  \lesssim \sum_n \sum_m \frac{ \delta(***) }{ \langle{ p-m-n }\rangle^2 \langle m \rangle ^2 } \\
 & \lesssim \sum_z \sum_m \frac{ \delta(\abs{p-z}^2 + \abs{z-m}^2 - \abs{m}^2 + \tau) }{ \langle p-z \rangle^{2} \langle m \rangle^{2} } \\
 & \lesssim \sum_z \sum_m \frac{ \delta(\abs{p-z}^2 + \abs{z-m}^2 - \abs{m}^2 + \tau) }{ \langle p-z \rangle^{2+\eps} \langle m \rangle^{2-\eps} } \\
 & \lesssim \sum_z \frac{1}{\langle p-z \rangle^{2+\eps}} \sum_m \frac{\delta(...)}{\langle m \rangle^{2-\eps}} < \infty. \\
\end{split} \end{equation}
In the last step, the sum in $m$ is one-dimensional due to the restraint, and $\langle p-z \rangle^{-\eps} > \langle m \rangle^{-\eps}$ allowed the borrowing to make the two-dimensional sum in $z$ converge.

\textbf{Case IIC}: $ \abs{n} \sim \abs{m} $

\textbf{Subcase IIC1}: If $\abs{n} \sim \abs{m} \ll 2^{3j} \lesssim \abs{p}$, then we use the fact that $\abs{p-n-m} \sim \abs{p}$ and cancel $\langle p \rangle^2$ with $\langle p-n-m \rangle^2$ to estimate like IIA:
\begin{equation} \begin{split}
  \sum_{\substack{ n, m \in \bZ^2 \\ \abs{n} \sim \abs{m} } } \frac{ \delta ( *** ) \langle p \rangle ^2 }{ \langle{ p-n-m }\rangle^2 \langle n \rangle ^2 \langle m \rangle ^2 } & \lesssim \sum_{i \sim j} \sum_{\abs{m} \sim 2^i} \frac{ \# S_c }{ \langle n \rangle^2 \langle m \rangle ^2 } \\
 & \lesssim \sum_{i \sim j} \sum_{\abs{m} \sim 2^i} \frac{ 3 }{ 2^{2i} 2^{2i} } \\
 & \lesssim \sum_{ i = 0 }^{\infty} 2^{-2i} < \infty.
\end{split} \end{equation}

\textbf{Subcase IIC2}: If $\abs{n} \sim \abs{m} \ll \abs{p} < 2^{3j}$, then we use $\abs{p-n-m} \sim \abs{p}$ for the initial cancellation and use the full-circle bound on the number of lattice points, \eqref{wholebound}, like IIB:
\begin{equation} \begin{split}
  \sum_{\substack{ n, m \in \bZ^2 \\ \abs{n} \sim \abs{m} } } \frac{ \delta ( *** ) \langle p \rangle ^2 }{ \langle{ p-n-m }\rangle^2 \langle n \rangle ^2 \langle m \rangle ^2 } & \lesssim \sum_{i \sim j} \sum_{\abs{m} \sim 2^i} \frac{ \# S_c }{ \langle{ n }\rangle^2 \langle m \rangle ^2 } \\
 & \lesssim \sum_{i \sim j} \sum_{\abs{m} \sim 2^i} \frac{ 2^{ 3j\eps } }{ 2^{2j} 2^{2i} } \\
 & \lesssim \sum_{j = 0 }^{\infty} \frac{ 2^{ 3j \eps } }{ 2^{2j} }.
 \end{split} \end{equation}

\textbf{Subcase IIC3}: If $\abs{p} \lesssim \abs{n} \sim \abs{m}$, then we cancel $\langle p \rangle^2$ with $\langle n \rangle^2$ and then use the change of variables $z:= n+m$ in order to bound it:
\begin{equation} \begin{split}
  \sum_{\substack{ n, m \in \bZ^2 \\ \abs{n} \sim \abs{m} } } & \frac{ \delta ( *** ) \langle p \rangle ^2 }{ \langle{ p-n-m }\rangle^2 \langle n \rangle ^2 \langle m \rangle ^2 } \\
& \lesssim \sum_{i \sim j} \sum_{\abs{m} \sim 2^i} \sum_{\abs{n} \sim 2^i} \frac{ \delta ( \tau + \abs{ p-n-m }^2 + \abs{ n }^2 - \abs{ m }^2 ) }{ \langle{ p-n-m }\rangle^2 \langle m \rangle ^2 } \\
& \lesssim \sum_{i \sim j} \sum_{\abs{m} \sim 2^i} \sum_{\abs{z} \lesssim 2^i} \frac{ \delta ( \tau +  \abs{ p-z }^2 +\abs{ z-m }^2 - \abs{ m }^2 ) }{ \langle{ p-z }\rangle^{2+\epsilon} \langle m \rangle ^{2-\eps} } \\
& \lesssim \sum_{i= 0 }^{\infty} \sum_{\abs{m} \sim 2^i}  \frac{ \# \tilde S_l }{  \langle m \rangle ^{2-\eps} }  \lesssim \sum_{i= 0 }^{\infty} \frac{1}{2^{i} }.
\end{split} \end{equation}
Here we used the definition, which, we already noticed, defines a line segment:
$$\tilde S_l=S_{z,i,\tau,p} := \{ \abs{m} \sim 2^i :   \abs{ p-z }^2 +\abs{ z-m }^2 - \abs{ m }^2 = -\tau  \},$$

This concludes the proof of (\ref{eq:delta2}), which is the case $k=1$; the proof for $k>1$ is similar.
\end{proof}

\appendix

\section{Sobolev and Poincar{\'e} type inequalities}
\setcounter{equation}{0}

The following lemma is a simple application of Sobolev inequalities.
\begin{lemma}\label{lm:sob}
For every $1< p \leq \infty$ there exists a constant $C_p$ such that
\[ \left| \langle \psi , V(x) \psi \rangle \right| = \left| \int_{\Lambda} \rd x \, V(x) |\psi (x)|^2 \right| \leq C_p \| V \|_p \langle \psi, (1-\Delta) \psi \rangle \]
for every $\psi \in L^2 (\Lambda)$. Moreover
\[ \left| \langle \psi, V(x_1 -x_2) \psi \rangle \right| \leq C \| V \|_1 \langle \psi, (1-\Delta_1)(1-\Delta_2) \psi \rangle \] for all $\psi \in L^2_s (\Lambda \times \Lambda, \rd x_1 \rd x_2)$.
\end{lemma}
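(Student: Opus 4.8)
The plan is to prove the two inequalities separately. The first is an immediate consequence of Hölder's inequality combined with the two-dimensional Sobolev embedding: bounding $|\langle\psi,V(x)\psi\rangle|$ by $\int_{\Lambda}|V(x)|\,|\psi(x)|^2\,\rd x$ and applying Hölder with exponents $p$ and $p'=p/(p-1)$ gives $\|V\|_p\,\||\psi|^2\|_{p'}=\|V\|_p\,\|\psi\|_{2p'}^2$. Since $1<p\le\infty$ we have $2\le 2p'<\infty$, and in dimension two $H^1(\Lambda)$ embeds continuously into $L^q(\Lambda)$ for every such $q$ (for $p=\infty$ the exponent is $2p'=2$ and the bound is trivial), both for $\Lambda=\bR^2$ and for $\Lambda=[-L,L]^{\times2}$; squaring the embedding inequality yields the claim with $C_p$ the square of the embedding constant and $\langle\psi,(1-\Delta)\psi\rangle=\|\psi\|_{H^1}^2$.

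For the second inequality I would first reduce to $V\ge0$ (replacing $V$ by $|V|$, using that $\langle\psi,V(x_1-x_2)\psi\rangle=\int V(x_1-x_2)|\psi|^2$ is real) and then change variables $u=x_1-x_2$, rewriting the form as $\int_{\Lambda}V(u)\,g(u)\,\rd u$ with $g(u):=\int_{\Lambda}|\psi(u+x_2,x_2)|^2\,\rd x_2$; this is bounded by $\|V\|_1\sup_u g(u)$, so it suffices to show $\sup_u g(u)\le C\,\langle\psi,(1-\Delta_1)(1-\Delta_2)\psi\rangle$. Working on $\bR^2$, the partial Fourier transform in $x_2$ of the function $x_2\mapsto\psi(u+x_2,x_2)$ is $h_u(q)=\int\widehat\psi(p,q-p)\,e^{ipu}\,\rd p$, so by Plancherel $g(u)=c\int|h_u(q)|^2\,\rd q$. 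A weighted Cauchy--Schwarz inequality then gives
\[ |h_u(q)|^2\le\left(\int\frac{\rd p}{\langle p\rangle^2\langle q-p\rangle^2}\right)\left(\int\langle p\rangle^2\langle q-p\rangle^2\,|\widehat\psi(p,q-p)|^2\,\rd p\right), \]
with $\langle p\rangle=(1+|p|^2)^{1/2}$, and the first factor is bounded uniformly in $q$ because it equals the convolution $(\langle\cdot\rangle^{-2}*\langle\cdot\rangle^{-2})(q)$ and $\langle\cdot\rangle^{-2}\in L^2(\bR^2)$ in two dimensions. Integrating in $q$, using this uniform bound, and substituting $p_2=q-p$ in the resulting double integral gives $\sup_u g(u)\le C\int\!\!\int\langle p_1\rangle^2\langle p_2\rangle^2|\widehat\psi(p_1,p_2)|^2\,\rd p_1\,\rd p_2=C\,\langle\psi,(1-\Delta_1)(1-\Delta_2)\psi\rangle$, which is the required bound.

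The periodic case $\Lambda=[-L,L]^{\times2}$ is identical after replacing the Fourier integrals by sums over the dual lattice: the only input is that $\langle\cdot\rangle^{-2}$ lies in $\ell^2$ of a two-dimensional lattice, so that $\sum_n\langle n\rangle^{-2}\langle q-n\rangle^{-2}$ is bounded uniformly in $q$. The one substantive point in the whole argument, and the place where the dimension enters, is exactly this uniform bound on the weight $\int\langle p\rangle^{-2}\langle q-p\rangle^{-2}\,\rd p$ (resp. its discrete analogue); everything else is Plancherel, Hölder and changes of variables, so I do not expect any serious obstacle.
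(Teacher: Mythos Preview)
Your proof is correct. The paper does not actually give a proof of this lemma; it only remarks that it is ``a simple application of Sobolev inequalities,'' and your argument supplies exactly those details: for the first estimate you use H\"older together with the two-dimensional embedding $H^1(\Lambda)\hookrightarrow L^q(\Lambda)$ for all $q<\infty$, which is the intended route, and for the second estimate your Fourier-side argument, reducing to the uniform bound $\sup_q\int\langle p\rangle^{-2}\langle q-p\rangle^{-2}\,\rd p<\infty$ via Young's inequality (since $\langle\cdot\rangle^{-2}\in L^2(\bR^2)$, resp.\ $\ell^2(\bZ^2)$), is a clean and standard way to obtain it.
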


To compare the potential $N^{2\beta} V (N^{\beta} x)$ with the limiting $\delta$-function, we use the following Poincar{\'e} type inequality.
\begin{lemma}\label{lm:poincare}
Suppose that $h \in L^1 (\Lambda)$ is a probability measure such that $\int_{\Lambda} \rd x \, (1+x^2)^{1/2} \, h(x) < \infty$; let $h_{\alpha} (x) = \alpha^{-2} h (x/\alpha)$. Then, for every $0\leq \kappa < 1$, there exists $C>0$ such that
\[ \Big| \tr \; J^{(k)} \left( h_{\alpha} (x_j - x_{k+1}) - \delta (x_j -x_{k+1}) \right) \gamma^{(k+1)} \Big| \leq C \alpha^{\kappa} \, \tri J^{(k)} \tri \, \tr \; \Big| S_j S_{k+1} \gamma^{(k+1)} S_{k+1} S_j \Big| \] for all non-negative $\gamma^{(k+1)} \in \cL^1_{k+1}$.
\end{lemma}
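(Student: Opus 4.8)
The plan is to peel off the observable $J^{(k)}$ and the density matrix $\gamma^{(k+1)}$, reducing the statement to an operator-norm bound involving only the difference $h_\alpha-\delta$, and then to prove that bound as a Poincar\'e-type estimate in the relative coordinate $x_j-x_{k+1}$. \textbf{Step 1 (reduction to an operator bound).} Write $g_\alpha:=h_\alpha-\delta$ and take $j=1$ without loss. Diagonalize the non-negative trace-class operator $\gamma^{(k+1)}=\sum_n\lambda_n|\psi_n\rangle\langle\psi_n|$ with $\lambda_n\ge0$ and $\{\psi_n\}$ orthonormal, so that
\[
\tr\,J^{(k)}\,g_\alpha(x_1-x_{k+1})\,\gamma^{(k+1)}=\sum_n\lambda_n\big\langle (J^{(k)})^*\psi_n,\ g_\alpha(x_1-x_{k+1})\,\psi_n\big\rangle .
\]
Insert $1=S_1^{-1}S_{k+1}^{-1}\cdot S_1S_{k+1}$ on either side of $g_\alpha(x_1-x_{k+1})$, where $S_i=(1-\Delta_i)^{1/2}$. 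Since $S_{k+1}$ commutes with $J^{(k)}$ and with $S_1$, each summand is at most
\[
\big\|S_1(J^{(k)})^* S_1^{-1}\big\|\ \big\|S_1^{-1}S_{k+1}^{-1}\,g_\alpha(x_1-x_{k+1})\,S_{k+1}^{-1}S_1^{-1}\big\|\ \|S_1S_{k+1}\psi_n\|^2 .
\]
A Schur test straight from the definition of $\tri\cdot\tri$ — which is symmetric in $\bp_k\leftrightarrow\bp_k'$ and contains the kernel of $(J^{(k)})^*$ as well — gives $\|S_1(J^{(k)})^* S_1^{-1}\|\le\tri J^{(k)}\tri$, while $\sum_n\lambda_n\|S_1S_{k+1}\psi_n\|^2=\tr\,S_1S_{k+1}\gamma^{(k+1)}S_{k+1}S_1=\tr\,|S_1S_{k+1}\gamma^{(k+1)}S_{k+1}S_1|$ by cyclicity and positivity. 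Thus the lemma reduces to
\[
\big\|S_1^{-1}S_{k+1}^{-1}\,(h_\alpha-\delta)(x_1-x_{k+1})\,S_{k+1}^{-1}S_1^{-1}\big\|_{\cB(L^2(\Lambda^{k+1}))}\le C\,\alpha^\kappa,\qquad 0\le\kappa<1 .
\]

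\textbf{Step 2 (a slice-density identity and the moment estimate).} The operator above is trivial in the $k-1$ spectator variables, so its norm equals the corresponding two-body norm; being bounded and self-adjoint, the estimate is equivalent to $|\langle g,(h_\alpha-\delta)(x_1-x_{k+1})g\rangle|\le C\alpha^\kappa\|S_1S_{k+1}g\|^2$ for all $g$ (writing the two relevant variables explicitly and the rest as parameters $\underline z$). Introduce the slice density $\rho_g(r):=\int|g(x,x-r,\underline z)|^2\,\rd x\,\rd\underline z$, so that $\langle g,h_\alpha(x_1-x_{k+1})g\rangle=\int h_\alpha(r)\rho_g(r)\,\rd r$ and $\langle g,\delta(x_1-x_{k+1})g\rangle=\rho_g(0)$, and hence, since $\int h_\alpha=1$,
\[
\langle g,(h_\alpha-\delta)(x_1-x_{k+1})g\rangle=\int h_\alpha(r)\big(\rho_g(r)-\rho_g(0)\big)\,\rd r .
\]
The moment hypothesis enters through $|\widehat h(\mu)-1|=\big|\int(e^{i\mu x}-1)h(x)\,\rd x\big|\le C|\mu|^\kappa\int(1+|x|)h(x)\,\rd x$ (using $|e^{i\theta}-1|\le\min(2,|\theta|)\le 2^{1-\kappa}|\theta|^\kappa$ and $|x|^\kappa\le1+|x|$ for $0\le\kappa\le1$), equivalently $\int h_\alpha(r)|r|^\kappa\,\rd r=\alpha^\kappa\int h(s)|s|^\kappa\,\rd s<\infty$.

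\textbf{Step 3 (H\"older regularity of $\rho_g$ and conclusion).} It remains to show $|\rho_g(r)-\rho_g(0)|\le C|r|^\kappa\|S_1S_{k+1}g\|^2$. Write $\rho_g(r)=\|G_r\|_{L^2}^2$ with $G_r:=g|_{\{x_{k+1}=x_1-r\}}$, so that $|\rho_g(r)-\rho_g(0)|\le\|G_r-G_0\|\,(\|G_r\|+\|G_0\|)$. The anisotropic restriction estimate $\|u|_{\{x_{k+1}=x_1\}}\|_{L^2(\Lambda)}\le C\|S_1S_{k+1}u\|$ — which holds in two dimensions because $\sup_\zeta\int\langle\xi\rangle^{-2}\langle\zeta-\xi\rangle^{-2}\,\rd\xi<\infty$ — gives $\|G_r\|,\|G_0\|\le C\|S_1S_{k+1}g\|$ (uniformly in $r$ by translation invariance), and its difference version, obtained by inserting $|e^{ir\eta}-1|^2\le C|r|^{2\kappa}|\eta|^{2\kappa}$ and using $\sup_\zeta\int\langle\xi\rangle^{-2}\langle\zeta-\xi\rangle^{-(2-2\kappa)}\,\rd\xi<\infty$ (which needs $\kappa<1$), gives $\|G_r-G_0\|\le C|r|^\kappa\|S_1S_{k+1}g\|$. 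Combining with Step 2,
\[
|\langle g,(h_\alpha-\delta)(x_1-x_{k+1})g\rangle|\le C\|S_1S_{k+1}g\|^2\int h_\alpha(r)|r|^\kappa\,\rd r=C\alpha^\kappa\|S_1S_{k+1}g\|^2 ,
\]
which finishes the reduction of Step 2 and hence the lemma. In the periodic case the momentum integrals become sums over the dual lattice, with identical estimates; no number-theoretic input is needed.

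\textbf{Expected main obstacle.} The delicate point is the two-dimensional restriction estimate of Step 3: restriction of $H^{1,1}(\Lambda^2)$ to the diagonal $\{x_{k+1}=x_1\}$ is only marginally true, yet one must still extract a genuine factor $|r|^\kappa$ from the difference $G_r-G_0$ with a constant uniform in $r$ and in the momentum $\zeta$. This is exactly where the hypothesis $\kappa<1$ is used (it makes the weighted $\xi$-integral converge both near $\xi=\zeta$ and at infinity), which is why the statement is phrased for $\kappa<1$. The other ingredients — the spectral reduction, the Schur bound tied to $\tri\cdot\tri$, and the elementary estimate on $\widehat h$ — are routine.
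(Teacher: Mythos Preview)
Your proof is correct and is organized somewhat differently from the paper's. The paper keeps $J^{(k)}$ and $\gamma^{(k+1)}$ coupled throughout: after the spectral decomposition $\gamma^{(2)}=\sum_j\lambda_j|\varphi_j\rangle\langle\varphi_j|$ it sets $\psi_j=(J^{(1)}\otimes 1)\varphi_j$, passes to Fourier space, extracts $\alpha^\kappa$ from $|e^{i\alpha x\cdot(p_1-q_1)}-1|\le(\alpha|x||p_1-q_1|)^\kappa$, and then applies a weighted Schwarz inequality with an auxiliary parameter $\delta$ which is optimized only at the very end to $\delta=\tri J^{(1)}\tri^{-1}$. You instead peel off the observable immediately via the Schur bound $\|S_1(J^{(k)})^*S_1^{-1}\|\le\tri J^{(k)}\tri$ and reduce to the stand-alone operator-norm estimate $\|S_1^{-1}S_{k+1}^{-1}(h_\alpha-\delta)(x_1-x_{k+1})S_{k+1}^{-1}S_1^{-1}\|\le C\alpha^\kappa$, which you prove through H\"older continuity of the slice density $\rho_g$. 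Both routes rest on the same two-dimensional convolution bound (the paper writes it as $\sup_q\int(1+(q-p)^2)^{-(1-\kappa)}(1+p^2)^{-1}\,\rd p<\infty$, you as $\sup_\zeta\int\langle\xi\rangle^{-2}\langle\zeta-\xi\rangle^{-(2-2\kappa)}\,\rd\xi<\infty$), so the analytic core is identical. What your organization buys is a reusable operator inequality and a clean explanation of why the quadratic-form bound suffices (self-adjointness); what the paper's version buys is that it never needs to treat the sandwiched $\delta$-operator as a bounded object on its own, since it works with the bilinear pairing $\langle\psi_j,\cdot\,\varphi_j\rangle$ from start to finish.
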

\begin{proof}
We prove the lemma in the case $k=1$. For $k > 1$ the proof is analogous. We decompose $\gamma^{(2)} = \sum_j \lambda_j |\ph_j \rangle \langle \ph_j|$ for $\ph_j \in L^2 (\Lambda^2)$, and for eigenvalues $\lambda_j \geq 0$. Then we have
\begin{equation}\label{eq:poin1}
\begin{split}
\tr \, J^{(1)} \left(h_{\alpha} (x_1 -x_2) - \delta (x_1 -x_2)\right) \gamma^{(2)}  =\; &\sum_{j} \lambda_j \langle \ph_j, J^{(1)} (h_{\alpha} (x_1-x_2) -\delta (x_1 -x_2)) \ph_j
\rangle  \\ = \; &\sum_{j} \lambda_j \langle \psi_j, (h_{\alpha} (x_1-x_2) -\delta (x_1 -x_2)) \ph_j \rangle
\end{split}
\end{equation}
where we defined $\psi_j = (J^{(1)} \otimes 1) \ph_j$. Next, switching to Fourier space, we observe that
\begin{equation}
\begin{split}
\langle \psi_j, &(h_{\alpha} (x_1-x_2) -\delta (x_1 -x_2)) \ph_j \rangle\\ & = \int \rd p_1 \rd p_2 \rd q_1 \rd q_2 \rd x \, \overline{\widehat{\psi}}_j (p_1, p_2) \widehat{\ph}_j (q_1,q_2) \, V(x) \left( e^{i \alpha x \cdot (p_1 -q_1)} - 1 \right) \, \delta (p_1 + p_2 - q_1 -q_2)
\end{split}\end{equation}
and thus, taking absolute value, we have, for arbitrary $0<\kappa<1$,
\begin{equation}\label{eq:poin2}
\begin{split}
\Big| \langle \psi_j, &(h_{\alpha} (x_1-x_2) -\delta (x_1 -x_2)) \ph_j \rangle\Big| \\ &\leq \alpha^{\kappa} \, \left( \int\rd x \, V(x) |x|^{\kappa} \right) \int \rd p_1 \rd p_2 \rd q_1 \rd q_2 \, |p_1 - q_1|^{\kappa} |\widehat{\psi}_j (p_1, p_2)| |\widehat{\ph}_j (q_1,q_2)| \delta (p_1 + p_2 - q_1 -q_2)
\end{split}
\end{equation}
Estimating $|p_1 -q_1|^{\kappa} \leq |p_1|^{\kappa} + |q_1|^{\kappa}$ we have to control two terms. We show how to control the term containing $|p_1|^{\kappa}$ (the term with $|q_1|^{\kappa}$ can be bounded similarly):
\begin{equation}
\begin{split}
\int \rd p_1 \rd p_2 \rd q_1 &\rd q_2 \, |p_1|^{\kappa} |\widehat{\psi}_j (p_1, p_2)| |\widehat{\ph}_j (q_1,q_2)| \delta (p_1 + p_2 - q_1 -q_2) \\ \leq \; &\int \rd p_1 \rd p_2 \rd q_1 \rd q_2 \, \delta (p_1 + p_2 - q_1 -q_2) \\ &\hspace{.5cm} \times \frac{(1+p^2_1)^{1/2} (1+p^2_2)^{1/2}}{(1+q_2^2)^{1/2} (1+q_1^2)^{1/2}} \, |\widehat{\psi}_j (p_1, p_2)|  \frac{(1+q_2^2)^{1/2} (1+q_1^2)^{1/2}}{(1+p^2_1)^{(1-\kappa)/2} (1+p^2_2)^{1/2}} |\widehat{\ph}_j (q_1,q_2)|  \\
\leq \; & \delta \int \rd p_1 \rd p_2 \rd q_1 \rd q_2 \, \frac{(1+p^2_1) (1+p^2_2)}{(1+q_2^2)(1+q_1^2)} \, |\widehat{\psi}_j (p_1, p_2)|^2  \delta (p_1 + p_2 - q_1 -q_2) \\
&+ \delta^{-1} \int \rd p_1 \rd p_2 \rd q_1 \rd q_2 \, \frac{(1+q_2^2) (1+q_1^2)}{(1+p^2_1)^{(1-\kappa)} (1+p^2_2)} |\widehat{\ph}_j (q_1,q_2)|^2 \delta (p_1 + p_2 - q_1 -q_2)\\
\leq \; & \delta \, \langle \psi_j, S_1^2 S_2^2 \psi_j \rangle  \sup_{p\in \Lambda} \int \rd q \, \frac{1}{(1+(p-q)^2) (1+q^2)}
\\&+ \delta^{-1} \, \langle \ph_j, S_1^2 S_2^2 \ph_j \rangle \sup_{q\in \Lambda} \int \rd p \, \frac{1}{(1+(q-p)^2)^{(1-\kappa)} (1+p^2)}
\end{split}
\end{equation}
for arbitrary $\delta >0$. Since \[ \sup_{q\in \Lambda} \int \rd p \, \frac{1}{(1+(q-p)^2)^{(1-\kappa)} (1+p^2)}  < \infty \] for all $0 \leq \kappa <1$, from (\ref{eq:poin1}) and (\ref{eq:poin2}) we obtain that
\begin{equation}
\begin{split}
\Big| \tr \, J^{(1)} \Big(h_{\alpha} (x_1 -x_2) - &\delta (x_1 -x_2)\Big) \gamma^{(2)} \Big| \\
\leq \; &C \alpha^{\kappa} \left( \delta \, \tr \, J^{(1)} S_1^2 S_2^2 J^{(1)} \gamma^{(2)} + \delta^{-1} \, \tr\, S_1^2 S_2^2 \gamma^{(2)} \right) \\
\leq \; &C \alpha^{\kappa} \left( \delta \, \tr\, S_1^{-1} J^{(1)} S_1^2 J^{(1)} S_1^{-1} S_1 S_2 \gamma^{(2)} S_2 S_1 + \delta^{-1} \, \tr \, S_1^2 S_2^2 \gamma^{(2)} \right)\\
\leq \; &C \alpha^{\kappa} \left( \delta \, \| S_1^{-1} J^{(1)} S_1 \| \, \| S_1 J^{(1)} S_1^{-1} \| + \delta^{-1} \right) \tr S_1^2 S_2^2 \gamma^{(2)} \\
\leq \; &C \alpha^{\kappa} \tri J^{(1)} \tri \, \tr S_1^2 S_2^2 \gamma^{(2)}
\end{split}
\end{equation}
where, in the last inequality we chose $\delta = \tri J^{(1)} \tri^{-1}$.
\end{proof}

\section*{Acknowledgements} K.~Kirkpatrick (kay@math.mit.edu) is supported by NSF postdoctoral research fellowship DMS-0703618. B.~Schlein (schlein@math.lmu.de) is on leave from Cambridge University; his research is supported by a Sofja Kovalevskaya Award of the Alexander von Humboldt Foundation. G.~Staffilani (gigliola@math.mit.edu) is partially supported by NSF grant DMS-0602678.

\thebibliography{hh}

\bibitem{ABGT} R. Adami, C. Bardos, F. Golse, and A. Teta,
Towards a rigorous derivation of the cubic nonlinear Schr\"odinger
equation in dimension one. \textit{Asymptot. Anal.} \textbf{40}
(2004), no. 2, 93--108.

\bibitem{AGT} R. Adami, F. Golse, and A. Teta,
Rigorous derivation of the cubic NLS in dimension one. {\it J. Stat. Phys.} {\bf 127} (2007),
 no. 6, 1193--1220.

\bibitem{BGM}
C. Bardos, F. Golse, and N. Mauser, Weak coupling limit of the
$N$-particle Schr\"odinger equation.
\textit{Methods Appl. Anal.} \textbf{7} (2000), 275--293.

\bibitem{BP}
E. Bombieri and J. Pila,
The number of integral points on arcs and ovals,
\textit{Duke Math. J.} \textbf{59} (1989), no. 2, 337--357.

\bibitem{B1}
J. Bourgain,
Fourier transform restriction phenomena for certain lattice
subsets and applications to nonlinear evolution equations I-II,
\textit{Geom. Funct. Anal.}, \textbf{3} (1993), 107--156, 209--262.

\bibitem{Birr}
J. Bourgain,
 On Strichartz inequalities and the nonlinear Schrodinger equation on irrational tori, preprint.

\bibitem{DPST}
D. De Silva, N. Pavlovi\'c, G. Staffilani, and N. Tzirakis,
Global well-posedness for a periodic nonlinear Schr\"odinger equation in 1D and 2D,
arXiv: math.AP/0602660v1 (2006).

\bibitem{EESY} A. Elgart, L. Erd{\H{o}}s, B. Schlein, and H.-T. Yau,
 Gross-Pitaevskii equation as the mean field limit of weakly
coupled bosons. \textit{Arch. Rat. Mech. Anal.} \textbf{179} (2006),
no. 2, 265--283.

\bibitem{ES} A. Elgart and B. Schlein, Mean Field Dynamics of Boson Stars.
\textit{Commun. Pure Appl. Math.} {\bf 60} (2007), no. 4, 500--545.

\bibitem{ESY2} L. Erd{\H{o}}s, B. Schlein, and H.-T. Yau,
Derivation of the cubic non-linear Schr\"odinger equation from
quantum dynamics of many-body systems. {\it Invent. Math.} {\bf 167} (2007), 515--614.

\bibitem{ESY} L. Erd{\H{o}}s, B. Schlein, and H.-T. Yau, Derivation of the
 Gross-Pitaevskii Equation for the Dynamics of Bose-Einstein Condensate.
Preprint arXiv:math-ph/0606017. To appear in {\it Ann. Math.}

\bibitem{ESY3} L. Erd{\H{o}}s, B. Schlein, and H.-T. Yau, Rigorous Derivation of the Gross-Pitaevskii Equation with a Large Interaction Potential. Preprint arXiv:0802.3877.

\bibitem{EY} L. Erd{\H{o}}s and H.-T. Yau, Derivation
of the nonlinear {S}chr\"odinger equation from a many body {C}oulomb
system. \textit{Adv. Theor. Math. Phys.} \textbf{5} (2001), no. 6,
1169--1205.

\bibitem{G} E. Gross, Structure of a quantized vortex in boson systems. {\it Nuovo Cimento}  {\bf 20} (1961), 454--466.

\bibitem{GV1}  J. Ginibre and G. Velo,
The classical field limit of scattering theory for nonrelativistic many-boson systems. I.-II.
{\it Comm. Math. Phys.}  {\bf 66}  (1979), no. 1, 37--76, and  {\bf 68}  (1979), no. 1, 45--68.

\bibitem{GV3}  J. Ginibre and G. Velo,
On a class of nonlinear Schr\"odinger equations with nonlocal interactions.
\textit{Math Z.} \textbf{170} (1980), 109-136.

\bibitem{GV4}  J. Ginibre and G. Velo,
 Scattering theory in the energy space for a class of nonlinear Schr\"odinger equations.
\textit{J. Math. Pures Appl.}
\textbf{64} (1985), 363-401.

\bibitem{Hepp} K. Hepp, The classical limit for quantum mechanical correlation functions.
{\it Comm. Math. Phys.} {\bf 35} (1974), 265--277.

\bibitem{H} M. N. Huxley,
Area, Lattice Points, and Exponential Sums.
\textit{London Mathematical Society Monographs}, \textbf{13} (1996).

\bibitem{IR} K. Ireland and M. Rosen,
{\sl A Classical Introduction to Modern Number Theory}, 2nd edition. Graduate Texts in Mathematics, 84.
Springer-Verlag, New York (1998).

\bibitem{KM} S. Klainerman and M. Machedon, On the uniqueness of solutions to the
Gross-Pitaevskii hierarchy.
 {\it Comm. Math. Phys.} \textbf{279} (2008), no. 1, 169--185.

\bibitem{LS} E. H. Lieb and R. Seiringer,
Proof of Bose-Einstein condensation for dilute trapped gases.
\textit{Phys. Rev. Lett.} \textbf{88} (2002), 170409-1-4.

\bibitem{LSY} E. H. Lieb, R. Seiringer, and J. Yngvason,  A rigorous derivation of the Gross-Pitaevskii energy functional for a two-dimensional Bose gas. \textit{Comm. Math. Phys.} {\bf 224} (2001), 17--31.

\bibitem{LSY2} E. H. Lieb, R. Seiringer, and J. Yngvason, Bosons in a trap: A rigorous derivation of the Gross-Pitaevskii energy functional.
\textit{Phys. Rev A} \textbf{61} (2000), 043602.

\bibitem{P} L. Pitaevskii, Vortex lines in an imperfect Bose gas. \textit{Sov. Phys. JETP} \textbf{13} (1961), 451--454.

\bibitem{RS3}
M. Reed and B. Simon, {\sl Methods of modern mathematical physics: Scattering Theory}.
Volume 3. Academic Press, 1979.

\bibitem{RS} I. Rodnianski and B. Schlein, Quantum fluctuations and rate of convergence towards
mean field dynamics. Preprint arXiv:math-ph/0711.3087.

\bibitem{Ru} W. Rudin, {\sl Functional analysis.}
McGraw-Hill Series in Higher Mathematics, McGraw-Hill Book~Co., New
York, 1973.

\bibitem{Sp} H. Spohn, Kinetic Equations from Hamiltonian Dynamics.
   \textit{Rev. Mod. Phys.} \textbf{52} (1980), no. 3, 569--615.

\end{document}